\theoremstyle{plain} 
\newtheorem{thm}[equation] {Theorem}
\newtheorem{cor}[equation]{Corollary}
\newtheorem{lem}[equation]{Lemma}
\newtheorem{prop}[equation]{Proposition}
\theoremstyle{definition}
\newtheorem{defn}[equation]{Definition}
\newtheorem{ex}[equation]{Example}
\theoremstyle{remark}
\newtheorem{rem}[equation]{Remark}
\numberwithin{equation}{section}
\numberwithin{figure}{section}
\numberwithin{table}{section}
\newcommand{\F}{\ensuremath{\mathbb{F}}}
\newcommand{\C}{\ensuremath{\mathbb{C}}}
\newcommand{\Q}{\ensuremath{\mathbb{Q}}}
\newcommand{\Z}{\ensuremath{\mathbb{Z}}}
\newcommand{\pair}[2]{\langle{#1} \, \vert \, {#2}\rangle}
\newcommand{\size}[1]{\lvert #1 \rvert}
\newcommand{\charac}[1]{\widehat{#1}}
\newcommand{\ra}{\rightarrow}
\newcommand{\ah}{\charac{A}}
\newcommand{\hr}{\charac{R}}
\DeclareMathOperator{\GL}{GL}
\newcommand{\al}{\alpha}
\newcommand{\be}{\beta}
\newcommand{\ga}{\gamma}
\DeclareMathOperator{\aut}{Aut}
\DeclareMathOperator{\Hom}{Hom}
\newcommand{\ft}[1]{\widehat{#1}}
\DeclareMathOperator{\ev}{eval}
\DeclareMathOperator{\id}{id}
\DeclareMathOperator{\ce}{ce}
\newcommand{\wh}{\textsc{h}}
\DeclareMathOperator{\hwe}{hwe}
\DeclareMathOperator{\stab}{Stab}
\DeclareMathOperator{\iso}{Isom}
\DeclareMathOperator{\im}{im}
\newcommand{\la}{\lambda}
\DeclareMathOperator{\FinAb}{\mathbf{FinAb}}
\title[Dualities]{Dualities for finite abelian groups \\ and applications to coding theory 
}
\author[J. A. Wood]{Jay A. Wood}
\address{Western Michigan University}
\email{jay.wood@wmich.edu}
\dedicatory{In memory of departed friends: \\
Russell R. Kieckhafer, \oldstylenums{1954}--\oldstylenums{2025} \\
Randy L. Koehler, \oldstylenums{1954}--\oldstylenums{2021} }
\subjclass[2010]{Primary: 20K01, 94B05}
\keywords{Duality, dual code, MacWilliams identities}
\begin{document}

\begin{abstract}
The choice of an isomorphism, a \emph{duality}, between a finite abelian group $A$ and its character group allows one to define dual codes of additive codes over $A$.  Properties of dualities and dual codes are studied, continuing work of Delsarte from 1973 and more recent work of Dougherty and his collaborators.
\end{abstract}

\maketitle
\section{Introduction}
There has been an increased interest in additive codes, and, with it, an increased interest in bringing to bear on additive codes some of the tools that are available for linear codes, such as dual codes and the MacWilliams identities.  This paper attempts to provide a unified account of how to do this, drawing on the work of many authors, especially Delsarte \cite{MR384310} and Dougherty and his collaborators, as well as some work of mine.  Along the way, corrections are provided for a few misconceptions that have appeared in the literature.  The paper has been written so as to be reasonably self-contained.

A \emph{duality} is an isomorphism $\phi: A \ra \ah$ between a finite abelian group $A$ and its character group $\ah$.  The choice of a duality is equivalent to the existence of a nondegenerate complex-valued inner product $\Phi: A \times A \ra \C^\times$.  Delsarte's paper \cite{MR384310} considers inner products that are symmetric, i.e., $\Phi(a,b) = \Phi(b,a)$ for $a,b \in A$, and, using an inner product, defines dual codes of additive codes over $A$, as well as establishing the size condition for dual codes, double duality, and the MacWilliams identities for the Hamming weight.

The present paper allows for nonsymmetric inner products.  Naturally associated to a duality $\phi: A \ra \ah$ is another duality $\phi^*: A \ra \ah$, a character-theoretic analogue of the transpose of a linear tranformation.  The associated inner products satisfy $\Phi^*(a,b) = \Phi(b,a)$ for $a, b \in A$, so that $\phi^* = \phi$ if and only if $\Phi$ is symmetric.
The inner product $\Phi$ provides two notions of orthogonality, which are the same when $\Phi$ is symmetric.  If $H$ is a subgroup of $A$, then there are left and right orthogonals defined by
\begin{align*}
\mathfrak{L}(H) &= \{ a \in A: \Phi(a,h)=1 , \text{ for all $h \in H$}\}, \\
\mathfrak{R}(H) &= \{ a \in A: \Phi(h,a)=1 , \text{ for all $h \in H$}\} .
\end{align*}
Again, the size condition holds, i.e., $\size{H} \cdot \size{\mathfrak{L}(H)} = \size{H} \cdot \size{\mathfrak{R}(H)} = \size{A}$, as does double duality: $\mathfrak{L}(\mathfrak{R}(H)) = H = \mathfrak{R}(\mathfrak{L}(H))$.  The MacWilliams identities hold for the complete and Hamming enumerators.

The idea of choosing different dualities as a way to define different dual codes of additive codes appears to have started with \cite{MR3789447}.  The present paper considers the set of all dualities of $A$, which is in one-to-one correspondence with the automorphism group of $A$.  The problem of how the dual codes of a subgroup depend on the choice of duality is intimately related to how the automorphism group $\aut(A)$ of $A$ acts on the subgroups of $A$.  For example, a subgroup has the same dual codes for every duality if and only if the subgroup is a characteristic subgroup.

There is a natural notion of congruence of dualities that generalizes congruence of symmetric blinear forms.  
Two dualities of $A$ are congruent when there exists an automorphism $\tau$ of $A$ so that the associated inner products satisfy $\Phi_2(a,b) = \Phi_1(a \tau, b \tau)$ for $a, b \in A$.  Roughly speaking, congruent dualities are the same up to a change of basis.

Here is a short guide to the paper.  Section~\ref{sect:CharGrps} presents features of character groups that are needed in subsequent sections, especially the fact that forming character groups is an exact contravariant functor whose square is the identity.  In Section~\ref{sec:Dualities}, dualities and their associated inner products are defined.  Additive codes and their dual codes are discussed in Section~\ref{sec:AddDualCodes}, including the size condition and double duality.  The dependence of dual codes on the choice of duality is explored in Section~\ref{sec:StructureQ}, congruence is discussed in Section~\ref{sec:congruence}, and the MacWilliams identities for the complete and Hamming enumerators appear in Section~\ref{sec:MWIds}.  The paper concludes with a comment about dualities for finite rings in Section~\ref{sec:FinRings}, together with some topics for future work.

\subsubsection*{Acknowledgments}
I thank Steven T. Dougherty for discussing various aspects of his and his collaborators work with me, for encouraging me to write this paper, and for providing comments on an earlier version of the paper.  

This paper is dedicated to the memory of Russ Kieckhafer and Randy Koehler, my friends for more than 50 years.

\section{Character groups}  \label{sect:CharGrps}

Let $A$ be a finite abelian group.  
A \emph{character} $\pi$ of $A$ is a group homomorphism $\pi: A \ra \C^\times$, where $\C^\times$ is the multiplicative group of all nonzero complex numbers.  Writing the group operation of $A$ as addition, a character $\pi$ satisfies $\pi(a_1 + a_2) = \pi(a_1) \pi(a_2)$, for all $a_1, a_2 \in A$.

Write $\ah$ for the set of all characters of $A$, so that $\ah = \Hom_{\Z}(A, \C^\times)$; $\ah$ is itself a multiplicative  abelian group via $(\pi_1  \pi_2)(a) = \pi_1(a)  \pi_2(a)$ for $\pi_1, \pi_2 \in \ah$ and $a \in A$.  The identity element of $\ah$ is the trivial character, all of whose values equal $1$.  We call $\ah$ the \emph{character group} of $A$.  
We adopt the convention of writing the evaluation of a character $\pi \in \ah$ at an element $a \in A$ as $\pair{\pi}{a} = \pi(a) \in \C^\times$.  We will write $\pair{\pi}{a}_A$ if the group needs to be made clear.  Thus, the homomorphism property of a character and the definition of the group operation in $\ah$ have the form:
\begin{align} \label{eqn:CharHom}
\pair{\pi}{a_1+a_2} &= \pair{\pi}{a_1} \, \pair{\pi}{a_2}, \quad \pi \in \ah, \quad a_1, a_2 \in A ; \\
\pair{\pi_1 \pi_2}{a} &= \pair{\pi_1}{a} \, \pair{\pi_2}{a}, \quad \pi_1, \pi_2 \in \ah, \quad a \in A .  \label{eqn:CharProduct}
\end{align}

\begin{rem}
It is also possible to define characters as group homomorphisms $\varpi: A \ra \Q/\Z$, so that the character group is $\Hom_\Z(A, \Q/\Z)$, cf., \cite[\S 2.2]{wood:turkey}.  Because $A$ is finite, $\Hom_\Z(A, \Q/\Z)$ is isomorphic to $\Hom_\Z(A, \C^\times)$ via $\varpi \mapsto \pi$, with
\[  \pi(a) = \exp(2 \bm{\pi} i \varpi(a)), \quad a \in A , \]
where $\exp$ is the complex exponential function and $\bm{\pi}$ is the well-known constant.  One warning: in formulas such as \eqref{eqn:vanishingSums} below, it is vital that $\Hom_\Z(A, \C^\times)$ be used.
\end{rem}

The next several results summarize some of the fundamental properties of character groups, organized to get quickly to the heart of the matter.
The results are drawn from sources such as  \cite{serre:linear-rep, terras:fourier-analysis, wood:duality, wood:turkey}.

\begin{lem}  \label{lem:cyclicGroup}
If $A$ is a finite cyclic group, then $\ah \cong A$.  If $a \neq 0$, then there exists a character $\pi \in \ah$ with $\pi(a) \neq 1$.
\end{lem}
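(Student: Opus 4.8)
The plan is to reduce both assertions to an explicit computation with roots of unity after fixing a generator. Let $g$ be a generator of $A$ and let $n = \size{A}$ be its order. I would first observe that a character $\pi \in \ah$ is completely determined by the single value $\pair{\pi}{g}$, since $\pair{\pi}{kg} = \pair{\pi}{g}^k$ for all $k \in \Z$ by \eqref{eqn:CharHom}; moreover $\pair{\pi}{g}^n = \pair{\pi}{ng} = \pair{\pi}{0} = 1$, so $\pair{\pi}{g}$ lies in the group $\mu_n \subset \C^\times$ of $n$-th roots of unity. Conversely, for each $\zeta \in \mu_n$ the rule $kg \mapsto \zeta^k$ is well defined (because $\zeta^n = 1$) and is a homomorphism $A \ra \C^\times$. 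Hence evaluation at $g$ gives a bijection $\ah \ra \mu_n$, $\pi \mapsto \pair{\pi}{g}$, and it is a group homomorphism by \eqref{eqn:CharProduct}, so $\ah \cong \mu_n$.

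Next I would identify $\mu_n$ with $A$. Since $\C$ contains the primitive $n$-th root of unity $\zeta_n = \exp(2 \bm{\pi} i / n)$, the group $\mu_n$ is cyclic of order exactly $n$, generated by $\zeta_n$; thus $\mu_n \cong \Z/n\Z \cong A$. Combined with the previous step this yields $\ah \cong A$, which is the first assertion.

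For the second assertion, let $a \in A$ with $a \neq 0$ and write $a = kg$, where $k \not\equiv 0 \pmod{n}$. Let $\pi \in \ah$ be the character determined by $\pair{\pi}{g} = \zeta_n$, which exists by the first step. Then $\pair{\pi}{a} = \zeta_n^k \neq 1$, since $\zeta_n$ has order exactly $n$ and $n \nmid k$. This is the required character.

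The whole argument is elementary; the one point that warrants care is the claim that $\mu_n$ has order exactly $n$ (equivalently, that a primitive $n$-th root of unity exists in $\C^\times$). This is what upgrades the first statement from a mere injection or surjection to an \emph{iso}morphism, and it is also exactly what forces the nonvanishing $\zeta_n^k \neq 1$ in the second statement, so I would state it carefully rather than treat it as obvious.
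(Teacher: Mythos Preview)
Your proof is correct and follows essentially the same approach as the paper: fix a generator, observe that characters are determined by their value on it (an $n$th root of unity), and use a primitive $n$th root of unity to build the isomorphism; for the second assertion, use the character sending the generator to a primitive root. The only cosmetic differences are that the paper writes the isomorphism directly as $\Z/m\Z \to \ah$, $j \mapsto f_j$ with $f_j(\gamma) = \zeta_m^j$, rather than passing through $\mu_n$, and for the second part the paper notes more generally that $f_k$ is injective whenever $\gcd(k,m)=1$, whereas you use the single case $k=1$.
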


\begin{proof}
Let $m = \size{A}$, and let $\ga$ be a generator for $A$.  Fix a primitive $m$th root $\zeta_m$ of $1$ in $\C^\times$.  Any character $\pi \in \ah$ is completely determined by the value of $\pi(\ga)$, which is an $m$th root of $1$. Define a function $f: \Z/m\Z \ra \ah$, $j \mapsto f_j$, where $f_j(\ga) = \zeta_m^j$.  One verifies that $f$ is an isomorphism of groups.  As $A \cong \Z/m\Z$, we have $A \cong \ah$.

For any $k$ that is relatively prime to $m$, $\zeta_m^k$ is a primitive $m$th root of $1$.  Thus, the character $f_k: A \ra \C^\times$ is injective, so that $f_k(a) \neq 1$ for any $a \neq 0$.
\end{proof}

\begin{rem}  \label{rem:NonUniqueIsom}
The isomorphism of Lemma~\ref{lem:cyclicGroup} is not unique in general: it depends on the choices of a generator $\ga$ of $A$ and a primitive $m$th root $\zeta_m$.  This foreshadows Proposition~\ref{prop:DualitiesAut}.
\end{rem}

\begin{lem}  \label{lem:Products}
Let $A_1, A_2$ be finite abelian groups.  Then
\[  \charac{A_1 \times A_2} \cong \ah_1 \times \ah_2 . \]
\end{lem}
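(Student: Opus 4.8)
The plan is to write down an explicit isomorphism $\Theta \colon \ah_1 \times \ah_2 \ra \charac{A_1 \times A_2}$ and verify directly that it is a bijective homomorphism. Note that a counting argument is not yet available: the equality $\size{\ah} = \size{A}$ has only been established for cyclic $A$ in Lemma~\ref{lem:cyclicGroup}, so one cannot yet reduce to comparing orders. First I would define $\Theta(\pi_1, \pi_2)$, for $\pi_1 \in \ah_1$ and $\pi_2 \in \ah_2$, to be the function $A_1 \times A_2 \ra \C^\times$ whose value at $(a_1, a_2)$ is $\pair{\pi_1}{a_1}_{A_1} \, \pair{\pi_2}{a_2}_{A_2}$. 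A one-line computation using the homomorphism property \eqref{eqn:CharHom} in each coordinate shows this function is a homomorphism, hence lies in $\charac{A_1 \times A_2}$. That $\Theta$ is itself a group homomorphism $\ah_1 \times \ah_2 \ra \charac{A_1 \times A_2}$ follows from \eqref{eqn:CharProduct}: products of characters are computed coordinatewise on values, and the values of $\Theta$ already factor as a product over the two arguments.

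Next I would check bijectivity. For injectivity: if $\Theta(\pi_1,\pi_2)$ is the trivial character of $A_1 \times A_2$, then evaluating at $(a_1, 0)$ and using $\pair{\pi_2}{0} = 1$ gives $\pair{\pi_1}{a_1}_{A_1} = 1$ for all $a_1$, so $\pi_1$ is trivial; symmetrically $\pi_2$ is trivial, so $\ker \Theta$ is trivial. For surjectivity: given $\pi \in \charac{A_1 \times A_2}$, let $\iota_1 \colon A_1 \ra A_1 \times A_2$, $a_1 \mapsto (a_1, 0)$, and $\iota_2 \colon A_2 \ra A_1 \times A_2$, $a_2 \mapsto (0, a_2)$, be the canonical inclusions, and put $\pi_j = \pi \circ \iota_j \in \ah_j$. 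Since $(a_1, a_2) = (a_1, 0) + (0, a_2)$, the homomorphism property \eqref{eqn:CharHom} of $\pi$ yields $\pair{\pi}{(a_1, a_2)} = \pair{\pi}{(a_1, 0)}\,\pair{\pi}{(0, a_2)} = \pair{\pi_1}{a_1}\,\pair{\pi_2}{a_2}$, so $\pi = \Theta(\pi_1, \pi_2)$.

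There is no real obstacle here; the only point needing care is surjectivity, namely that an arbitrary character of a product splits as a product of characters of the factors, which is exactly the decomposition $(a_1, a_2) = (a_1, 0) + (0, a_2)$. An alternative, essentially equivalent, route would invoke the universal property of the direct product together with the additivity of $\Hom_\Z(-, \C^\times)$, but since those functorial facts are developed later in the section the hands-on argument is cleaner at this stage. Finally, one may iterate the two-factor statement to obtain $\charac{A_1 \times \cdots \times A_n} \cong \ah_1 \times \cdots \times \ah_n$, which together with Lemma~\ref{lem:cyclicGroup} and the structure theorem for finite abelian groups will give $\ah \cong A$ in general.
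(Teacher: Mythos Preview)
Your proof is correct and follows essentially the same approach as the paper: the paper also defines the two maps $\pi \mapsto (\pi(\cdot,0),\pi(0,\cdot))$ and $(\pi_1,\pi_2) \mapsto \pi_1(\cdot)\pi_2(\cdot)$ and then simply asserts that one verifies they are mutually inverse homomorphisms, whereas you spell out the verifications in more detail. Your remark that a size-counting shortcut is not yet available at this point in the paper is well observed.
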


\begin{proof}
Given a character $\pi \in \charac{A_1 \times A_2}$, define $\pi_1 \in \ah_1$ and $\pi_2 \in \ah_2$ by 
\[  \pi_1(a_1) = \pi(a_1,0), \quad \pi_2(a_2) = \pi(0,a_2), \quad a_1 \in A_1, a_2 \in A_2 . \]
Conversely, given $\pi_1 \in \ah_1$ and $\pi_2 \in \ah_2$, define $\pi \in \charac{A_1 \times A_2}$ by
\[  \pi(a_1,a_2) = \pi_1(a_1) \pi_2(a_2), \quad (a_1,a_2) \in A_1 \times A_2 . \]
One verifies that these definitions yield homomorphisms that are inverses of each other.
\end{proof}

\begin{lem}  \label{lem:ExtendingCharacters}
Let $H \subseteq A$ be a subgroup of a finite abelian group $A$.  If $\theta \in \charac{H}$, then there exists a character $\pi \in \ah$ that extends $\theta$, i.e., $\pi(h) = \theta(h)$ for all $h \in H$.
\end{lem}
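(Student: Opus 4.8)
The plan is to prove the statement by induction on the index $[A:H]$. The base case $H=A$ is trivial. For the inductive step it suffices to show that, whenever $H$ is a proper subgroup of $A$, the character $\theta$ extends to a character of some strictly larger subgroup of $A$; applying the inductive hypothesis to that larger subgroup (whose index in $A$ is strictly smaller) then produces a character on all of $A$. The one feature of the target group $\C^\times$ that makes the argument go through is that it is \emph{divisible}: every element of $\C^\times$ has an $n$th root for every $n\geq 1$.

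For the one-step extension, choose $a\in A\setminus H$ and set $H'=H+\langle a\rangle$. Let $n$ be the order of the coset $a+H$ in $A/H$, i.e.\ the least positive integer with $na\in H$. Then every element of $H'$ can be written as $h+ka$ with $h\in H$ and $0\leq k<n$, and $h+ka=h'+k'a$ in $H'$ forces $k\equiv k'\pmod n$ and $h-h'\in H$. Since $\C^\times$ is divisible, pick $\xi\in\C^\times$ with $\xi^{n}=\theta(na)$, and define $\pi'\colon H'\ra\C^\times$ by $\pi'(h+ka)=\theta(h)\,\xi^{k}$. One checks that $\pi'$ is well defined --- replacing $k$ by $k+tn$ and $h$ by $h-tna$ multiplies $\theta(h)\xi^{k}$ by $\theta(na)^{-t}(\xi^{n})^{t}=1$ --- and that it is a homomorphism restricting to $\theta$ on $H$. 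This completes the induction.

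The only genuine subtlety is the well-definedness of $\pi'$, which is precisely where divisibility of $\C^\times$ is used: without a root $\xi$ of $\theta(na)$ there is no way to reconcile the values assigned to the two representations of $na$. I note in passing the slicker but less elementary route: $\C^\times$ is a divisible abelian group, hence an injective $\Z$-module, so $\Hom_\Z(-,\C^\times)$ is exact and in particular carries the inclusion $H\hookrightarrow A$ to a surjection $\ah\twoheadrightarrow\charac{H}$, which is exactly the assertion. I would nonetheless keep the inductive proof in the text, both to remain self-contained and to avoid presupposing the exactness that Section~\ref{sect:CharGrps} is still building toward.
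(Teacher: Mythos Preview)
Your proof is correct and follows essentially the same approach as the paper: extend $\theta$ one cyclic step at a time, using that $\C^\times$ contains all needed roots, and iterate until reaching $A$. Your use of the order $n$ of $a+H$ in $A/H$ is a mild streamlining of the paper's version, which works instead with the order of $g$ in $A$ and therefore splits into two cases according to whether $H\cap\langle g\rangle$ is trivial; these reduce to your single case since the paper's $k$ (in its second case) and $m$ (in its first) are both equal to your $n$.
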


\begin{proof}
If $H = A$, there is nothing to prove.  If $H \neq A$, take any $g \in A$ with $g \not\in H$, and let $P$ be the subgroup of $A$ generated by $H$ and $g$; $\size{P} > \size{H}$ because $g \not\in H$.  We will extend $\theta$ to a character $\pi$ of $P$.

Let $m$ be the order of $g$, and denote by $\langle g \rangle$ the cyclic subgroup generated by $g$.  If $H \cap \langle g \rangle = \{0\}$, pick any $m$th root $\zeta$ of $1$ in $\C^\times$.  Defining
\begin{equation}  \label{eqn:extendingCharacters}
\pi(a) = \begin{cases}
\theta(a), & a \in H, \\
\zeta, & a=g ,
\end{cases}
\end{equation}
and extending as a homomorphism, we get a character $\pi$ of $P$ that extends $\theta$.

If $H \cap \langle g \rangle \neq \{0\}$, let $k$ be the smallest positive integer so that $H \cap \langle g \rangle = \langle kg \rangle$.  Pick any $k$th root $\zeta$ of $\theta(kg) \in \C^\times$.  Again use \eqref{eqn:extendingCharacters} and extend as a homomorphism to yield a well-defined character $\pi$ of $P$ that extends $\theta$.

If $P=A$, we are done.  Otherwise, repeat the process on $\pi \in \charac{P}$.  Because the subgroups increase strictly in size, only a finite number of repetitions are needed.
\end{proof}

\begin{prop}  \label{prop:SizeCharGroup}
Let $A$ be a finite abelian group.  Then $\ah \cong A$.  In particular, $\size{\ah} = \size{A}$.  If $a \neq 0$, then there exists a character $\pi \in \ah$ with $\pi(a) \neq 1$.
\end{prop}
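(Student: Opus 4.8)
The plan is to reduce to the two special cases already in hand --- cyclic groups (Lemma~\ref{lem:cyclicGroup}) and binary products (Lemma~\ref{lem:Products}) --- by way of the structure theorem for finite abelian groups.

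First I would invoke the structure theorem to write $A \cong \Z/m_1\Z \times \cdots \times \Z/m_k\Z$ for suitable positive integers $m_1, \dots, m_k$. A straightforward induction on $k$, applying Lemma~\ref{lem:Products} at each step (grouping the factors as $\Z/m_1\Z \times (\Z/m_2\Z \times \cdots \times \Z/m_k\Z)$), gives $\charac{A} \cong \charac{\Z/m_1\Z} \times \cdots \times \charac{\Z/m_k\Z}$; here one uses only that an isomorphism $B \cong C$ induces an isomorphism $\charac{B} \cong \charac{C}$ by precomposition, which is immediate from the definition of a character. Applying Lemma~\ref{lem:cyclicGroup} to each cyclic factor yields $\charac{\Z/m_i\Z} \cong \Z/m_i\Z$, and reassembling gives $\charac{A} \cong A$; in particular $\size{\charac{A}} = \size{A}$.

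For the separation statement, given $a \neq 0$ I would set $H = \langle a \rangle$, the (nontrivial, cyclic) subgroup generated by $a$. By the second assertion of Lemma~\ref{lem:cyclicGroup} there is a character $\theta \in \charac{H}$ with $\theta(a) \neq 1$, and by Lemma~\ref{lem:ExtendingCharacters} this $\theta$ extends to a character $\pi \in \charac{A}$; then $\pi(a) = \theta(a) \neq 1$, as required. Routing through Lemma~\ref{lem:ExtendingCharacters} in this way avoids having to track the explicit isomorphism $\charac{A} \cong A$ through the product decomposition.

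The only input beyond the earlier lemmas is the structure theorem, and once that is available the argument is pure bookkeeping, so I do not anticipate a genuine obstacle; the one point deserving a word of care is simply that Lemma~\ref{lem:Products} is stated for two factors, so the passage to $k$ factors needs the short induction indicated above.
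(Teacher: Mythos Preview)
Your proposal is correct and matches the paper's proof essentially line for line: the paper too invokes the fundamental theorem to express $A$ as a product of cyclic groups, applies Lemmas~\ref{lem:cyclicGroup} and~\ref{lem:Products} to obtain $\ah\cong A$, and for the separation statement takes $H=\langle a\rangle$, produces $\theta\in\charac{H}$ with $\theta(a)\neq 1$ via Lemma~\ref{lem:cyclicGroup}, and extends it by Lemma~\ref{lem:ExtendingCharacters}. Your only addition is making the induction on the number of factors explicit, which the paper leaves implicit.
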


\begin{proof}
The group $A$ is a product of cyclic groups of prime power order by the fundamental theorem of finite abelian groups  \cite[Chapter I, \S 10]{LANG}.  Then apply Lemmas~\ref{lem:cyclicGroup} and~\ref{lem:Products}.

If $a \neq 0$, let $H$ be the subgroup generated by $a$, and, by Lemma~\ref{lem:cyclicGroup}, let $\theta \in \charac{H}$ be a character such that $\theta(a) \neq 1$.  Then extend $\theta$ to $\pi \in \ah$, by Lemma~\ref{lem:ExtendingCharacters}.
\end{proof}

As in Remark~\ref{rem:NonUniqueIsom}, the isomorphism $A \cong \ah$ is generally not unique.

Given two finite abelian groups $A_1, A_2$ and a homomorphism $\al: A_1 \ra A_2$, there is an induced homomorphism $\al^*: \ah_2 \ra \ah_1$ defined by
\begin{equation}  \label{eqn:DualHom}
\pair{\al^*(\pi_2)}{a_1}_{A_1} = \pair{\pi_2}{\al(a_1)}_{A_2}, \quad a_1 \in A_1, \quad \pi_2 \in \ah_2 .
\end{equation}
If $\al$ is invertible, then one verifies that $(\al^*)^{-1} = (\al^{-1})^*$.

For any finite abelian group $A$, define a homomorphism $\ev$ from $A$ to its double character group $\Hom_{\Z}(\ah, \C^\times)$ by
\begin{equation}  \label{eqn:DefnEval}
\pair{\ev(a)}{\pi}_{\ah} = \pair{\pi}{a}_A , \quad a \in A, \quad \pi \in \ah .
\end{equation}
That is, $\ev(a)$ is the `evaluate at $a$' character of $\ah$.

\begin{prop}  \label{prop:IdentifyADoubleDual}
For any finite abelian group $A$, the homomorphism $\ev: A \ra \Hom_{\Z}(\ah, \C^\times)$ is an isomorphism.
\end{prop}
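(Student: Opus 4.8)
The plan is to use the fact, already recorded in Proposition~\ref{prop:IdentifyADoubleDual}'s preamble, that $\ev$ is a group homomorphism, and then to establish that it is bijective by combining an injectivity argument with a cardinality count. Since all groups in sight are finite, it suffices to show $\ev$ is injective and that $\size{A} = \size{\Hom_{\Z}(\ah, \C^\times)}$; an injective homomorphism between finite groups of equal order is automatically an isomorphism.

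For injectivity I would compute the kernel of $\ev$. Suppose $a \in A$ lies in $\ker(\ev)$, so that $\ev(a)$ is the trivial character of $\ah$. By the defining relation \eqref{eqn:DefnEval}, this means $\pair{\pi}{a}_A = \pair{\ev(a)}{\pi}_{\ah} = 1$ for every $\pi \in \ah$. But the separating property established in Proposition~\ref{prop:SizeCharGroup} says that if $a \neq 0$ there exists some $\pi \in \ah$ with $\pi(a) \neq 1$. Hence $a = 0$, so $\ker(\ev) = \{0\}$ and $\ev$ is injective.

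For the cardinality count I would apply Proposition~\ref{prop:SizeCharGroup} twice: first to $A$, giving $\size{\ah} = \size{A}$, and then to the finite abelian group $\ah$ in place of $A$, giving $\size{\Hom_{\Z}(\ah, \C^\times)} = \size{\charac{\ah}} = \size{\ah} = \size{A}$. Combining this with injectivity of $\ev$ from the previous step shows $\ev$ is a bijective homomorphism, hence an isomorphism.

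The only real point requiring care — the ``main obstacle,'' such as it is — is the injectivity step, and that has already been isolated: it rests entirely on the existence of enough characters to separate points of $A$, which is exactly the last assertion of Proposition~\ref{prop:SizeCharGroup}. Everything else is bookkeeping with the definitions \eqref{eqn:DefnEval} and the finiteness of the groups involved, so no further structural input is needed.
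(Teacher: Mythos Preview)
Your argument is correct and is essentially identical to the paper's own proof: injectivity of $\ev$ via the separating property in Proposition~\ref{prop:SizeCharGroup}, followed by two applications of the size equality $\size{\ah}=\size{A}$ to conclude surjectivity.
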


\begin{proof}
Consider the kernel of $\ev$.  If $a \in \ker\ev$, then $\pair{\pi}{a}=1$ for all $\pi \in \ah$.  By Proposition~\ref{prop:SizeCharGroup}, if $a \neq 0$, then there is a character $\pi$ with $\pi(a) \neq 1$.  Thus $\ker\ev = 0$, and $\ev$ is injective.  Proposition~\ref{prop:SizeCharGroup}, applied twice, implies $\size{A} = \size{\Hom_{\Z}(\ah, \C^\times)}$, so that $\ev$ is also surjective.
\end{proof}

Let $\FinAb$ be the category whose objects are all finite abelian groups and whose morphisms are group homomorphisms.  Define $\mathcal{F}: \FinAb \ra \FinAb$ by $\mathcal{F}(A) = \ah$ and $\mathcal{F}(\al) = \al^*$, where $\al: A_1 \ra A_2$ is a morphism.  The next result shows that $\mathcal{F}$ is a Morita duality functor; cf., \cite[Theorem~3.2]{wood:duality}.

\begin{prop}
As defined above, $\mathcal{F}: \FinAb \ra \FinAb$ is an exact contravariant functor such that $\mathcal{F}^2$ is naturally equivalent to the identity functor.
\end{prop}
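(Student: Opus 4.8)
The plan is to verify each clause by unwinding the defining relation \eqref{eqn:DualHom} and repeatedly invoking the separation of points furnished by Proposition~\ref{prop:SizeCharGroup}. First I would establish that $\mathcal{F}$ is a contravariant functor: from \eqref{eqn:DualHom} one reads off $\pair{\id_A^*(\pi)}{a} = \pair{\pi}{\id_A(a)} = \pair{\pi}{a}$, so $\id_A^* = \id_{\ah}$; and for $\al\colon A_1 \ra A_2$ and $\be\colon A_2 \ra A_3$, two applications of \eqref{eqn:DualHom} give $\pair{(\be\al)^*(\pi_3)}{a_1} = \pair{\pi_3}{\be(\al(a_1))} = \pair{\al^*\be^*(\pi_3)}{a_1}$ for every $a_1 \in A_1$, hence $(\be\al)^* = \al^*\be^*$ since characters of $A_1$ separate its points. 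A parallel pointwise computation using \eqref{eqn:CharHom} shows that $\al \mapsto \al^*$ is additive, so $\mathcal{F}$ is an additive contravariant functor.

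Next, exactness. Since $\mathcal{F}$ is additive, it suffices to show that it sends a short exact sequence $0 \ra A_1 \xrightarrow{\al} A_2 \xrightarrow{\be} A_3 \ra 0$ to a short exact sequence $0 \ra \ah_3 \xrightarrow{\be^*} \ah_2 \xrightarrow{\al^*} \ah_1 \ra 0$. I would argue the three required facts separately: $\be^*$ is injective because $\be^*(\pi_3)$ trivial forces $\pair{\pi_3}{\be(a_2)} = 1$ for all $a_2$ and $\be$ is onto; $\al^*$ is surjective because any $\pi_1 \in \ah_1$, regarded through the injection $\al$ as a character of the subgroup $\al(A_1) \subseteq A_2$, extends to some $\pi_2 \in \ah_2$ by Lemma~\ref{lem:ExtendingCharacters}, and then $\al^*(\pi_2) = \pi_1$; and $\im\be^* = \ker\al^*$, where $\subseteq$ holds since $\al^*\be^* = (\be\al)^*$ is trivial (because $\be\al = 0$), while for $\supseteq$ a character $\pi_2$ with $\al^*(\pi_2)$ trivial vanishes on $\al(A_1) = \ker\be$, hence descends along the isomorphism $A_2/\ker\be \xrightarrow{\sim} A_3$ induced by $\be$ to a character $\pi_3 \in \ah_3$ with $\be^*(\pi_3) = \pi_2$.

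Finally, the natural equivalence $\mathcal{F}^2 \simeq \id$. Proposition~\ref{prop:IdentifyADoubleDual} already gives, for each $A$, an isomorphism $\ev_A\colon A \ra \mathcal{F}^2(A) = \Hom_\Z(\ah, \C^\times)$, so only naturality remains: I must check that $(\al^*)^* \circ \ev_{A_1} = \ev_{A_2} \circ \al$ for every $\al\colon A_1 \ra A_2$. Both sides lie in $\Hom_\Z(\ah_2, \C^\times)$; evaluating at $\pi_2 \in \ah_2$ and using \eqref{eqn:DualHom} and \eqref{eqn:DefnEval} repeatedly gives
\[
\pair{(\al^*)^*(\ev_{A_1}(a_1))}{\pi_2}_{\ah_2} = \pair{\ev_{A_1}(a_1)}{\al^*(\pi_2)}_{\ah_1} = \pair{\al^*(\pi_2)}{a_1}_{A_1} = \pair{\pi_2}{\al(a_1)}_{A_2} = \pair{\ev_{A_2}(\al(a_1))}{\pi_2}_{\ah_2},
\]
so the two composites agree and $\ev$ is the asserted natural equivalence.

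I expect the one substantive point to be the exactness, and within it the middle-exactness inclusion $\ker\al^* \subseteq \im\be^*$: producing the descended character $\pi_3$ rests on the first isomorphism theorem applied to $\be$, and the surjectivity of $\al^*$ is exactly where Lemma~\ref{lem:ExtendingCharacters} is indispensable. Everything else is a formal consequence of \eqref{eqn:DualHom}, \eqref{eqn:DefnEval}, and separation of points.
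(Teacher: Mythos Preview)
Your proof is correct and follows essentially the same route as the paper: the naturality square for $\ev$ is verified by the identical chain of equalities using \eqref{eqn:DualHom} and \eqref{eqn:DefnEval}, and exactness is checked in the same three pieces (injectivity of $\be^*$ from surjectivity of $\be$, middle exactness via descent along $\be$, and surjectivity of $\al^*$ from Lemma~\ref{lem:ExtendingCharacters}). You are simply more explicit than the paper about the functor axioms $\id^* = \id$ and $(\be\al)^* = \al^*\be^*$; one cosmetic remark is that the justification ``characters separate points'' is not what is being used there---you are only using that two characters agreeing at every element of $A_1$ are equal, which is immediate.
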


\begin{proof}
By the definition of $\al^*$, the functor $\mathcal{F}$ is contravariant.  Of course, $\mathcal{F}^2(A) = \Hom_{\Z}(\ah, \C^\times)$.  One verifies, for finite abelian groups $A_1$, $A_2$ and morphism $\al: A_1 \ra A_2$, that the following diagram commutes:
\[  \xymatrix{A_1 \ar[r]^-{\ev} \ar[d]^-{\al} &  \Hom_{\Z}(\ah_1, \C^\times)  \ar[d]^-{\al^{**}}  \\
A_2 \ar[r]^-{\ev} & \Hom_{\Z}(\ah_2, \C^\times).  } \]
Indeed, for $a_1 \in A_1$, $\pi_2 \in \ah_2$, and using \eqref{eqn:DualHom} and \eqref{eqn:DefnEval}, we have
\begin{align*}
\pair{\ev(\al(a_1))}{\pi_2}_{\ah_2} &= \pair{\pi_2}{\al(a_1)}_{A_2} = \pair{\al^*(\pi_2)}{a_1}_{A_1} , \\
\pair{\al^{**}(\ev(a_1))}{\pi_2}_{\ah_2} &= \pair{\ev(a_1)}{\al^*(\pi_2)}_{\ah_1} 
= \pair{\al^*(\pi_2)}{a_1}_{A_1} .
\end{align*}

For exactness, take any short exact sequence of finite abelian groups
\[ \xymatrix{0 \ar[r] &H \ar[r]^{\al} &A \ar[r]^{\be} &Q \ar[r] &0} . \]
We need to show that the associated sequence
\begin{equation}  \label{eqn:DualExactSeq}
\xymatrix{1 &\charac{H} \ar[l] &\ah \ar[l]_{\al^*} &\charac{Q} \ar[l]_{\be^*}  &1 \ar[l]} 
\end{equation}
is also a short exact sequence.

Suppose $\pi \in \ker \be^*$.  This means $\pair{\be^*(\pi)}{a}_A = 1$ for all $a \in A$.  Then, $1 = \pair{\pi}{\be(a)}_Q$ for all $a \in A$.  Thus $\pi \in \charac{Q}$ is trivial, as $\be$ is surjective.

Because $\im \al \subseteq \ker \be$, we have $\im \be^* \subseteq \ker \al^*$.  Conversely, suppose $\pi \in \ker \al^*$.  This means, for any $h \in H$, $1 = \pair{\al^*(\pi)}{h}_H = \pair{\pi}{\al(h)}_A$.  Thus $\pi$ vanishes on $\im \al = \ker \be$.  This implies that $\pi$ descends to a well-defined character $\tilde{\pi}$ on $Q$: $\pair{\pi}{a}_A = \pair{\tilde{\pi}}{\be(a)}_Q = \pair{\be^*(\tilde{\pi})}{a}_A$.  Thus, $\pi = \be^*(\tilde\pi) \in \im \be^*$, and $\ker \al^* = \im \be^*$.

Finally, $\al^*$ is surjective by Lemma~\ref{lem:ExtendingCharacters}.
\end{proof}

From here on, we will identify $A$ and $\Hom_\Z(\ah, \C^\times)$ via $\ev$.
Using this identification, we have that
\begin{equation}  \label{eqn:DoubleDuals}
\al^{**} = \al ,
\end{equation}
for any homomorphism $\al: A_1 \ra A_2$.

If $H \subseteq A$ is a subgroup of a finite abelian group $A$, its \emph{annihilator} is the subgroup of $\ah$ defined by 
\begin{equation}  \label{eqn:DefnAnnihilator}
(\ah: H) = \{ \pi \in \ah: \pair{\pi}{h} = 1 \text{ for all $h \in H$} \} . 
\end{equation}

\begin{cor}  \label{cor:DoubleAnnihilator}
For a finite abelian group $A$ and a subgroup $H \subseteq A$, $\charac{A/H} \cong (\ah:H)$.  In particular, $\size{(\ah : H)} = \size{A}/\size{H}$.  Identifying $A$ and $\Hom_{\Z}(\ah, \C^\times)$ via $\ev$, we have 
\[  (A:(\ah:H)) = H . \]
\end{cor}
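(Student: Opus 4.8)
The plan is to derive both assertions from the exactness of $\mathcal{F}$ together with the double-duality identification via $\ev$. For the isomorphism $\charac{A/H} \cong (\ah:H)$, apply $\mathcal{F}$ to the short exact sequence $0 \to H \xrightarrow{\iota} A \xrightarrow{q} A/H \to 0$, where $\iota$ is the inclusion and $q$ the quotient map. The associated dual sequence, of the form \eqref{eqn:DualExactSeq}, is $1 \leftarrow \charac{H} \xleftarrow{\iota^*} \ah \xleftarrow{q^*} \charac{A/H} \leftarrow 1$, so $q^*$ is injective with image $\ker \iota^*$. By \eqref{eqn:DualHom}, a character $\pi \in \ah$ lies in $\ker \iota^*$ precisely when $\pair{\pi}{h}_A = \pair{\iota^*(\pi)}{h}_H = 1$ for every $h \in H$, i.e., precisely when $\pi \in (\ah:H)$ as defined in \eqref{eqn:DefnAnnihilator}. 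Hence $q^*$ restricts to an isomorphism $\charac{A/H} \xrightarrow{\sim} (\ah:H)$. Taking orders and using Proposition~\ref{prop:SizeCharGroup} gives $\size{(\ah:H)} = \size{\charac{A/H}} = \size{A/H} = \size{A}/\size{H}$.

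For the identity $(A:(\ah:H)) = H$, I would first check the inclusion $H \subseteq (A:(\ah:H))$. Identifying $A$ with $\Hom_\Z(\ah,\C^\times)$ via $\ev$, the left-hand side is $\{ a \in A : \pair{\ev(a)}{\pi}_{\ah} = 1 \text{ for all } \pi \in (\ah:H) \}$. For $h \in H$ and $\pi \in (\ah:H)$, equation~\eqref{eqn:DefnEval} gives $\pair{\ev(h)}{\pi}_{\ah} = \pair{\pi}{h}_A = 1$, the last equality because $\pi$ annihilates $H$; thus $\ev(h)$ annihilates $(\ah:H)$, and so $H \subseteq (A:(\ah:H))$.

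To finish, I would compare orders. Applying the order statement just proved, but now to the group $\ah$ in place of $A$ and to its subgroup $(\ah:H) \subseteq \ah$, yields $\size{(\charac{\ah}:(\ah:H))} = \size{\ah}/\size{(\ah:H)}$. Under the identification $A = \charac{\ah}$, and using $\size{\ah} = \size{A}$ together with the first part, this becomes $\size{(A:(\ah:H))} = \size{A}\big/\bigl(\size{A}/\size{H}\bigr) = \size{H}$. Since $H$ is contained in $(A:(\ah:H))$ and the two subgroups have the same order, they are equal. This argument is essentially routine given the functorial machinery; the only place that demands attention is keeping the identification $\ev$ straight, namely recognizing that $(A:(\ah:H))$ is literally the annihilator computed in $\charac{\ah}$ and that the computation in the previous paragraph is exactly what shows $\ev$ carries $H$ into it.
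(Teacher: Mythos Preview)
Your proof is correct and follows essentially the same approach as the paper: identify $\charac{A/H}$ with $\ker\iota^* = (\ah:H)$ via the dual of the short exact sequence, deduce the size statement from Proposition~\ref{prop:SizeCharGroup}, then obtain the double-annihilator equality by checking the easy inclusion and applying the size statement twice. The only difference is that you spell out the details (particularly the role of $\ev$) more explicitly than the paper's terse version.
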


\begin{proof}
In the notation of \eqref{eqn:DualExactSeq}, $\charac{A/H} = \charac{Q} \cong \im \be^* = \ker \al^*$, but $\ker \al^* = (\ah:H)$.  The size statement now follows from Proposition~\ref{prop:SizeCharGroup}.

As for the double annihilator, $H \subseteq (A:(\ah:H))$ follows directly from the definition of $(\ah:H)$.  Equality then follows from the size statement, applied twice.
\end{proof}

\begin{prop}  \label{prop:VanishingSums}
For a finite abelian group $A$, subgroup $H \subseteq A$, and $\pi \in \ah$,
\begin{equation}  \label{eqn:vanishingSums}
\sum_{h \in H} \pair{\pi}{h} = \begin{cases}
\size{H}, & \pi \in (\ah : H) , \\
0, & \pi \not\in (\ah : H) .
\end{cases}
\end{equation}
Dually, for a subgroup $E \subseteq \ah$ and $a \in A$, 
\[  \sum_{\pi \in E} \pair{\pi}{a} = \begin{cases}
\size{E}, & a \in (A:E), \\
0, & a \not\in (A:E).
\end{cases}  \]
\end{prop}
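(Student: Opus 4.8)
The plan is to prove the first identity by the classical averaging trick, and then to deduce the dual statement by applying the first identity with the group $\ah$ in the role of $A$.

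If $\pi \in (\ah:H)$, then by \eqref{eqn:DefnAnnihilator} we have $\pair{\pi}{h}=1$ for every $h \in H$, so $\sum_{h \in H}\pair{\pi}{h} = \size{H}$ with no further work. Now suppose $\pi \notin (\ah:H)$. Then, directly from \eqref{eqn:DefnAnnihilator}, there is some $h_0 \in H$ with $\pair{\pi}{h_0} \neq 1$. Set $S = \sum_{h \in H}\pair{\pi}{h}$. Because translation by $h_0$ permutes the subgroup $H$, and using the homomorphism property \eqref{eqn:CharHom}, one computes $\pair{\pi}{h_0}\, S = \sum_{h \in H}\pair{\pi}{h_0}\pair{\pi}{h} = \sum_{h \in H}\pair{\pi}{h_0+h} = \sum_{h' \in H}\pair{\pi}{h'} = S$. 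Hence $\bigl(\pair{\pi}{h_0}-1\bigr) S = 0$ in $\C$, and since $\pair{\pi}{h_0}\neq 1$ we conclude $S=0$. This is precisely the point at which it matters that characters are valued in $\C^\times$ rather than in $\Q/\Z$: the sum is a sum of complex numbers, $\C$ is a field, and so the cancellation is legitimate.

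For the dual statement, I would invoke the identification of $A$ with $\Hom_\Z(\ah,\C^\times) = \charac{\ah}$ via $\ev$ from Proposition~\ref{prop:IdentifyADoubleDual}, under which $\pair{a}{\pi}_{\ah} = \pair{\pi}{a}_A$ by \eqref{eqn:DefnEval}. A subgroup $E \subseteq \ah$ then has annihilator $(\charac{\ah}:E) = \{a \in A : \pair{\pi}{a}=1 \text{ for all } \pi \in E\}$, which is exactly the set written $(A:E)$ in the statement. Applying the already-proved first identity with $(\ah, E, a)$ playing the roles of $(A, H, \pi)$ yields $\sum_{\pi \in E}\pair{\pi}{a} = \size{E}$ when $a \in (A:E)$ and $0$ otherwise, which is the claim. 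I do not anticipate any serious obstacle here; the only points needing care are that the existence of the witness $h_0$ is literally the negation of $\pi \in (\ah:H)$ (so neither Proposition~\ref{prop:SizeCharGroup} nor Lemma~\ref{lem:ExtendingCharacters} is needed for it), and that the definitions must be transported cleanly through the $\ev$ identification when setting up the dual half.
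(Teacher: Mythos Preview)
Your proof is correct and follows essentially the same approach as the paper: both handle the first identity by the standard reindexing/averaging trick, obtaining $\pair{\pi}{h_0}\,S = S$ and concluding $S=0$ when $\pair{\pi}{h_0}\neq 1$. The paper's proof actually leaves the dual statement implicit, so your explicit reduction via the $\ev$ identification of Proposition~\ref{prop:IdentifyADoubleDual} is, if anything, more complete than the original.
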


\begin{proof}
If $\pi \in (\ah : H)$, then $\pair{\pi}{h} = 1$ for all $h \in H$; the sum equals $\size{H}$.  If $\pi \not\in (\ah : H)$, then there exists $h_0 \in H$ such that $\pair{\pi}{h_0} \neq 1$.  By reindexing the sum via $h = h_0+h'$, we see that
\begin{align*}
\sum_{h \in H} \pair{\pi}{h} &= \sum_{h' \in H} \pair{\pi}{h_0+h'} = \sum_{h' \in H} \pair{\pi}{h_0} \,  \pair{\pi}{h'} = \pi(h_0) \sum_{h' \in H} \pair{\pi}{h'} .
\end{align*}
As $\pair{\pi}{h_0} \neq 1$, the sum must vanish.
\end{proof}

By choosing $H=A$ and $E = \ah$ in Proposition~\ref{prop:VanishingSums}, and using Proposition~\ref{prop:SizeCharGroup}, we have the following corollary.

\begin{cor}  \label{cor:VanishSum}
Let $A$ be a finite abelian group.  For $\pi \in \ah$ and $a \in A$,
\[  \sum_{a \in A} \pair{\pi}{a} = \begin{cases}
\size{A}, & \pi=1, \\
0, & \pi \neq 1;
\end{cases}
\quad
\sum_{\pi \in \ah} \pair{\pi}{a} = \begin{cases}
\size{A}, & a=0, \\
0, & a \neq 0 .
\end{cases}  \]
\end{cor}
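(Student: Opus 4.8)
The plan is to deduce both identities directly from Proposition~\ref{prop:VanishingSums}, the only work being to identify the two annihilators that arise when the subgroup in that proposition is taken to be the whole ambient group.

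For the first identity, I would apply the first part of Proposition~\ref{prop:VanishingSums} with $H = A$. Then $\size{H} = \size{A}$, and the case split in \eqref{eqn:vanishingSums} is governed by membership in $(\ah : A)$. But $(\ah : A) = \{\pi \in \ah : \pair{\pi}{a} = 1 \text{ for all } a \in A\}$ is, by definition, the set of characters that are identically $1$, i.e., $(\ah : A) = \{1\}$. Substituting this into \eqref{eqn:vanishingSums} yields exactly $\sum_{a \in A}\pair{\pi}{a} = \size{A}$ when $\pi = 1$ and $0$ when $\pi \neq 1$.

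For the second identity, I would apply the dual part of Proposition~\ref{prop:VanishingSums} with $E = \ah$. Here $\size{E} = \size{\ah} = \size{A}$ by Proposition~\ref{prop:SizeCharGroup}, and the case split is governed by membership in $(A : \ah) = \{a \in A : \pair{\pi}{a} = 1 \text{ for all } \pi \in \ah\}$. The nondegeneracy statement in Proposition~\ref{prop:SizeCharGroup} --- that for every $a \neq 0$ there is a character $\pi$ with $\pi(a) \neq 1$ --- says precisely that $(A : \ah) = \{0\}$. Substituting gives $\sum_{\pi \in \ah}\pair{\pi}{a} = \size{A}$ when $a = 0$ and $0$ otherwise.

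There is no real obstacle: the corollary is a straight specialization of Proposition~\ref{prop:VanishingSums}. The one point that genuinely uses earlier machinery rather than mere unwinding of definitions is the identification $(A : \ah) = \{0\}$, which is the nondegeneracy of the evaluation pairing $A \times \ah \ra \C^\times$ and rests on Proposition~\ref{prop:SizeCharGroup} (equivalently on Proposition~\ref{prop:IdentifyADoubleDual}); the companion identification $(\ah : A) = \{1\}$ is immediate and requires nothing beyond the definition of the trivial character.
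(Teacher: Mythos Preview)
Your proof is correct and matches the paper's approach exactly: the paper also specializes Proposition~\ref{prop:VanishingSums} to $H=A$ and $E=\ah$ and invokes Proposition~\ref{prop:SizeCharGroup}. You have simply spelled out the annihilator identifications that the paper leaves implicit.
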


The fundamental theorem of finite abelian groups says that any finite abelian group can be written as a product of cyclic subgroups of prime power order.  The numbers and orders of the cyclic subgroups are uniquely determined, but the subgroups themselves are usually not.  For example, there are many choices of bases for a finite-dimensional vector space over a finite field $\F_p$ of dimension at least $2$.

There is a coarser decomposition of a finite abelian group, working prime by prime, that has the advantage of the component subgroups being unique.  

Let $A$ be a finite abelian group.  We know that the order $o(a)$ of any element $a \in A$ must divide $\size{A}$.  For every prime $p$ that divides $\size{A}$, define $A_p = \{ a \in A \colon o(a) = p^k \text{ for some integer $k$}\}$.  One shows that $A_p$ is a subgroup of $A$.    Let $\aut(A)$ be the group of automorphisms of $A$.

\begin{prop}  \label{prop:PrimeByPrime}
Let $A$ be a finite abelian group, and let $s(A)$ be the set of primes that divide $\size{A}$.  Then,
\begin{itemize}
\item  for $p \in s(A)$, $A_p$ is a $p$-group;
\item $A = \oplus_{p \in s(A)} A_p$;
\item  $\aut(A) = \oplus_{p \in s(A)} \aut(A_p)$.
\end{itemize}
\end{prop}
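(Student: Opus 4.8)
The plan is to prove the three bullet points in sequence, as each builds on the previous. For the first bullet, I would observe that every element of $A_p$ has order a power of $p$ by definition, so $A_p$ is a $p$-group once we know it is a subgroup (which the text notes is routine: closure follows because if $o(a) = p^j$ and $o(b) = p^k$ then $p^{\max(j,k)}(a+b) = 0$, so $o(a+b)$ divides $p^{\max(j,k)}$, hence is a power of $p$).

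For the second bullet, I would invoke the fundamental theorem of finite abelian groups, already cited in the excerpt, to write $A$ as a direct sum of cyclic groups of prime power order, then group the cyclic summands by their underlying prime. The subgroup generated by all cyclic summands whose order is a power of $p$ is visibly contained in $A_p$; conversely, writing a general element in coordinates relative to the cyclic decomposition, its order is the lcm of the orders of its coordinates, so if that order is a power of $p$ the element can have nonzero coordinates only in the $p$-power summands. This gives $A = \oplus_{p \in s(A)} A_p$ as an internal direct sum (the sum is direct because the $A_p$ have pairwise coprime exponents, so they intersect trivially).

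For the third bullet — which I expect to be the main obstacle, or at least the one requiring the most care — the key point is that any automorphism $\sigma \in \aut(A)$ must preserve each $A_p$: if $o(a) = p^k$ then $o(\sigma(a)) = o(a) = p^k$ since automorphisms preserve order, so $\sigma(A_p) \subseteq A_p$, and applying the same to $\sigma^{-1}$ gives $\sigma(A_p) = A_p$. Thus $\sigma$ restricts to an automorphism $\sigma_p$ of each $A_p$, giving a homomorphism $\aut(A) \to \prod_p \aut(A_p)$. Conversely, given automorphisms $\sigma_p$ of each $A_p$, the direct sum decomposition from the second bullet lets us assemble them into a single automorphism of $A$. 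These two constructions are mutually inverse, and both respect composition, so the homomorphism is an isomorphism; moreover since each $\sigma \in \aut(A)$ decomposes uniquely as the ``sum'' of its restrictions $\sigma_p$ acting independently on the summands $A_p$, the notation $\aut(A) = \oplus_{p \in s(A)} \aut(A_p)$ (internal direct product of the subgroups fixing all but one $A_p$) is justified.

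The only subtlety worth flagging is bookkeeping: one should be careful that ``$\oplus$'' in the third bullet means the internal direct product inside $\aut(A)$ of the mutually commuting subgroups $\{\sigma : \sigma|_{A_q} = \id \text{ for } q \neq p\}$, and that these commute precisely because they act on complementary summands. Everything else is a direct consequence of the fundamental theorem plus the order-preservation property of automorphisms, so no deep input beyond what is already in the excerpt is needed.
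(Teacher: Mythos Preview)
Your proof is correct. For the first two bullets you spell out what the paper simply cites to Lang, using the same underlying idea (the fundamental theorem and grouping cyclic summands by prime). For the third bullet your argument differs slightly from the paper's: you use that automorphisms preserve element orders, so $\sigma(A_p) \subseteq A_p$ directly; the paper instead observes that for distinct primes $p \neq \ell$ any homomorphism $A_p \to A_\ell$ is zero because $\lvert \operatorname{im}\alpha \rvert$ divides both $\lvert A_p\rvert$ and $\lvert A_\ell\rvert$, which are coprime. The paper's formulation is marginally stronger (it shows the full endomorphism ring, not just $\aut$, is block-diagonal), but for the stated claim your order-preservation argument is equally valid and arguably more direct.
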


\begin{proof}
The first two items are Theorem~5 of \cite[Chapter I, \S 10]{LANG}.  
The decomposition of $\aut(A)$ follows from the observation that, for distinct primes $p \neq \ell$, any homomorphism $\al: A_p \ra A_\ell$ must be the zero-homomorphism, as $\size{\im \al}$ must divide both $\size{A_p}$ and $\size{A_\ell}$.
\end{proof}

Proposition~\ref{prop:PrimeByPrime} allows us to study a finite abelian group one prime at a time.

\section{Dualities and inner products}  \label{sec:Dualities}

In preparation for defining additive codes over a finite abelian group $A$ and their dual codes, we follow \cite{MR384310, MR3789447} and define dualities of $A$ and their associated inner products.

Let $A$ be a finite abelian group.  A \emph{duality} of $A$ is a group isomorphism $\phi: A \ra \ah$.  Let $\iso(A, \ah)$ be the set of all dualities of $A$.
Dualities exist by Proposition~\ref{prop:SizeCharGroup}, so $\iso(A, \ah)$ is nonempty.  As mentioned in Remark~\ref{rem:NonUniqueIsom}, there is generally more than one duality of $A$.  Proposition~\ref{prop:DualitiesAut} below makes this precise.

Suppose $\phi_0: A \ra \ah$ is a duality.  Define a map $f: \aut(A) \ra \iso(A, \ah)$, sending $\tau \in \aut(A)$ to the composition $\xymatrix{A \ar[r]^{\tau} & A \ar[r]^{\phi_0} & \ah}$.

\begin{prop}  \label{prop:DualitiesAut}
The map $f: \aut(A) \ra \iso(A, \ah)$ is a bijection. In particular, $\size{\iso(A,\ah)} = \size{\aut(A)}$.
\end{prop}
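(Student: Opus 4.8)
The plan is to show that $f$ is a bijection by exhibiting an explicit two-sided inverse. Since $\phi_0 \colon A \ra \ah$ is a fixed isomorphism, it has an inverse $\phi_0^{-1} \colon \ah \ra A$, and I would define $g \colon \iso(A, \ah) \ra \aut(A)$ by $g(\psi) = \phi_0^{-1} \circ \psi$. The first thing to check is that $g$ is well-defined: if $\psi \colon A \ra \ah$ is an isomorphism, then $\phi_0^{-1} \circ \psi$ is a composition of isomorphisms $A \ra \ah \ra A$, hence an automorphism of $A$. (This uses only that $\phi_0$ is an isomorphism, which holds because dualities exist by Proposition~\ref{prop:SizeCharGroup}.)

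Next I would verify that $g$ and $f$ are mutually inverse. For $\tau \in \aut(A)$, we have $g(f(\tau)) = \phi_0^{-1} \circ (\phi_0 \circ \tau) = (\phi_0^{-1} \circ \phi_0) \circ \tau = \id_A \circ \tau = \tau$, using associativity of composition. For $\psi \in \iso(A, \ah)$, we have $f(g(\psi)) = \phi_0 \circ (\phi_0^{-1} \circ \psi) = (\phi_0 \circ \phi_0^{-1}) \circ \psi = \id_{\ah} \circ \psi = \psi$. Hence $f$ is a bijection. The count $\size{\iso(A, \ah)} = \size{\aut(A)}$ is then immediate, and I would note in passing that $\aut(A)$ is finite because $A$ is.

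This proof is almost entirely formal: the only content is that $\phi_0$ is invertible, which has already been established. So I do not anticipate a genuine obstacle. The one point worth a sentence of care is that $f$ is defined in the excerpt as sending $\tau$ to the composition $\phi_0 \circ \tau$ (reading the diagram left to right), so one should make sure the inverse is composed on the correct side; writing out $g(f(\tau))$ and $f(g(\psi))$ explicitly as above makes this transparent. If desired, one could instead phrase the argument more conceptually by observing that $\iso(A, \ah)$ is a torsor under the left action of $\aut(\ah)$ and the right action of $\aut(A)$, and that fixing the basepoint $\phi_0$ trivializes the right torsor — but the explicit-inverse argument is shorter and entirely self-contained.
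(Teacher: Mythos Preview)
Your proof is correct and is essentially identical to the paper's: the paper also defines the inverse map $g(\phi) = \phi_0^{-1} \circ \phi$ and notes that $f$ and $g$ are mutually inverse. You simply spell out the verifications $g \circ f = \id$ and $f \circ g = \id$ that the paper leaves to the reader.
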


\begin{proof}
Define a map $g: \iso(A,\ah) \ra \aut(A)$ sending $\phi \in \iso(A, \ah)$ to the composition
$ \xymatrix{A \ar[r]^{\phi} & \ah \ar[r]^{\phi_0^{-1}} & A,} $
which is an automorphism $\tau \in \aut(A)$, with $\phi_0 \circ \tau = \phi$.  One verifies that $f$ and $g$ are inverses, hence bijections.
\end{proof}

Suppose $\phi: A \ra \ah$ is a duality of $A$.  Because $A$ and $\ah$ are both finite abelian groups and $\phi$ is a homomorphism between them, the induced homomorphism of \eqref{eqn:DualHom}, i.e.,
\[  \phi^*: \Hom_\Z(\ah, \C^\times)=A \ra \ah \]
is also a duality.   We say that a duality $\phi: A \ra \ah$ is \emph{symmetric} if $\phi^* = \phi$.  By \eqref{eqn:DoubleDuals}, we always have $\phi^{**} = \phi$ for any duality $\phi$.

\begin{lem}  \label{lem:SymDual}
Let $A$ be a finite abelian group, and let $\phi:A \ra \ah$ be a duality of $A$.  Then,
\[  \pair{\phi^*(a)}{b} = \pair{\phi(b)}{a} , \quad a, b \in A . \]
Thus, $\phi$ is symmetric if and only if $\pair{\phi(b)}{a} = \pair{\phi(a)}{b}$ for all $a, b \in A$.
\end{lem}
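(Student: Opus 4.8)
The plan is to unwind the definition of the dual homomorphism $\phi^*$ from \eqref{eqn:DualHom} and then invoke the identification of $A$ with $\Hom_\Z(\ah,\C^\times)$ via $\ev$. Applying \eqref{eqn:DualHom} to $\al = \phi\colon A \ra \ah$ (so that $A_1 = A$, $A_2 = \ah$, and $\phi^*\colon \charac{\ah} \ra \ah$) gives, for any $\Pi \in \charac{\ah}$ and $a \in A$,
\[  \pair{\phi^*(\Pi)}{a}_A = \pair{\Pi}{\phi(a)}_{\ah} . \]
I would then specialize $\Pi$ to $\ev(b)$ for $b \in A$, which is precisely the element of $\charac{\ah}$ identified with $b$. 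By the defining property \eqref{eqn:DefnEval} of $\ev$, namely $\pair{\ev(b)}{\pi}_{\ah} = \pair{\pi}{b}_A$, the right-hand side becomes $\pair{\phi(a)}{b}_A$. Hence $\pair{\phi^*(b)}{a} = \pair{\phi(a)}{b}$ for all $a, b \in A$; interchanging the roles of $a$ and $b$ yields the claimed identity $\pair{\phi^*(a)}{b} = \pair{\phi(b)}{a}$.

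For the final assertion, if $\phi$ is symmetric, i.e.\ $\phi^* = \phi$, then substituting into the identity just proved gives $\pair{\phi(b)}{a} = \pair{\phi(a)}{b}$ for all $a, b$. Conversely, if $\pair{\phi(b)}{a} = \pair{\phi(a)}{b}$ for all $a, b \in A$, then for each fixed $a$ one has $\pair{\phi^*(a)}{b} = \pair{\phi(b)}{a} = \pair{\phi(a)}{b}$ for every $b \in A$, so the characters $\phi^*(a)$ and $\phi(a)$ agree on all of $A$; since a character of $A$ is determined by its values on $A$ (the nondegeneracy in Proposition~\ref{prop:SizeCharGroup}), $\phi^*(a) = \phi(a)$, and as $a$ was arbitrary, $\phi^* = \phi$.

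I do not anticipate a genuine obstacle here: the argument is purely formal, using only \eqref{eqn:DualHom}, \eqref{eqn:DefnEval}, and nondegeneracy of character evaluation. The only point requiring care is the bookkeeping of the three pairings $\pair{\cdot}{\cdot}_A$ and $\pair{\cdot}{\cdot}_{\ah}$ together with the implicit identification $A = \charac{\ah}$, so that \eqref{eqn:DefnEval} is applied in the correct direction.
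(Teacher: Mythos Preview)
Your argument is correct and is essentially the paper's own proof: both unravel the definition of $\phi^*$ via \eqref{eqn:DualHom} and the identification $A \cong \charac{\ah}$ through $\ev$ in \eqref{eqn:DefnEval}, with only a cosmetic difference in the order of the variables $a,b$. Your explicit treatment of the ``if and only if'' clause is more detailed than the paper's (which leaves it implicit), but the reasoning is sound; note that the equality $\phi^*(a)=\phi(a)$ follows simply because two characters agreeing on every element of $A$ are equal as functions, so the appeal to Proposition~\ref{prop:SizeCharGroup} is not strictly needed there.
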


\begin{proof}
The key is to unravel the identification of $A$ and $\Hom_\Z(\ah, \C^\times)$ via $\ev$.  Using \eqref{eqn:DualHom} and \eqref{eqn:DefnEval}, we have, for all $a, b \in A$,
\begin{align*}
\pair{\phi^*(a)}{b}_A &= \pair{\phi^*(\ev(a))}{b}_A \\
&= \pair{\ev(a)}{\phi(b)}_{\ah} 
= \pair{\phi(b)}{a}_A  . \qedhere
\end{align*}
\end{proof}

\begin{lem}[{\cite[Corollary~4.2]{STD-DualitiesFinAbGp}}]  \label{lem:DualCyclic}
Let $A$ be a finite cyclic group.  Then every duality of $A$ is symmetric.
\end{lem}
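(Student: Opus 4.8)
The plan is to reduce everything to the explicit cyclic picture and then exploit the biadditivity of the pairing $(a,b)\mapsto\pair{\phi(a)}{b}$. By Lemma~\ref{lem:SymDual}, it suffices to prove that $\pair{\phi(a)}{b}=\pair{\phi(b)}{a}$ for all $a,b\in A$. So let $m=\size{A}$, fix a generator $\gamma$ of $A$, and write an arbitrary element of $A$ as $i\gamma$ for an integer $i$.

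The key observation I would make first is that $\Phi(a,b):=\pair{\phi(a)}{b}$ is biadditive: additivity in the first slot follows from $\phi$ being a homomorphism together with \eqref{eqn:CharProduct}, and additivity in the second slot is exactly \eqref{eqn:CharHom}. Consequently $\Phi$ is determined by the single scalar $\omega:=\Phi(\gamma,\gamma)=\pair{\phi(\gamma)}{\gamma}$, with $\Phi(i\gamma,j\gamma)=\omega^{ij}$. Since $\phi(\gamma)$ is a character and $\gamma$ has order $m$, $\omega$ is an $m$-th root of $1$, so $\omega^{ij}$ depends only on $ij\bmod m$, which is consistent with $i\gamma$ and $j\gamma$ depending only on $i,j\bmod m$; thus the formula $\Phi(i\gamma,j\gamma)=\omega^{ij}$ is well defined.

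With this in hand the conclusion is immediate: for $a=i\gamma$ and $b=j\gamma$ we get $\pair{\phi(a)}{b}=\omega^{ij}=\omega^{ji}=\pair{\phi(b)}{a}$ because $ij=ji$ in $\Z$. By Lemma~\ref{lem:SymDual}, $\phi$ is symmetric.

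Honestly, there is no serious obstacle here: once the biadditivity of the pairing is noted and the problem is reduced to the value on the generator, the symmetry is just the commutativity of integer multiplication. The only point requiring a little care is the bookkeeping with integer representatives of elements of $A\cong\Z/m\Z$, and that is harmless precisely because $\omega$ is an $m$-th root of unity. (The same argument in fact shows that \emph{every} homomorphism $A\ra\ah$ out of a finite cyclic group $A$ satisfies $\pair{\phi(a)}{b}=\pair{\phi(b)}{a}$, not merely the isomorphisms.)
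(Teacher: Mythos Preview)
Your proof is correct and slightly cleaner than the paper's. The paper first fixes an explicit symmetric duality $\phi_0$ (via $\phi_0(\gamma^i)=\pi_i$ with $\pair{\pi_i}{\gamma^j}=\zeta_m^{ij}$), then invokes Proposition~\ref{prop:DualitiesAut} to write an arbitrary duality as $\phi=\phi_0\circ\tau$ for an automorphism $\tau\colon\gamma\mapsto\gamma^k$, and finally computes $\pair{\phi(\gamma^i)}{\gamma^j}=\zeta_m^{kij}$, which is symmetric in $i,j$. You bypass both the reference duality and Proposition~\ref{prop:DualitiesAut} by observing directly that biadditivity forces $\Phi(i\gamma,j\gamma)=\omega^{ij}$ for $\omega=\Phi(\gamma,\gamma)$, whence symmetry is just $ij=ji$. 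What your route buys is economy and a slightly stronger conclusion (as you note, injectivity of $\phi$ is never used); what the paper's route buys is an explicit description of every duality on a cyclic group in terms of the base $\phi_0$ and an automorphism, which fits the narrative leading into Example~\ref{ex:elem-ab-p-groups}. At bottom both arguments reduce to the same identity $\omega^{ij}=\omega^{ji}$.
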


\begin{proof}
Set $m = \size{A}$.  Let $\ga$ be a generator of $A$, and let $\zeta_m$ be a primitive $m$th root of $1$ in $\C$.  For every $j \in \Z/m\Z$, define a character $\pi_j \in \ah$ by $\pi_j(\ga^i) = \zeta_m^{ij}$, for $i \in \Z/m\Z$.  We saw in the proof of Lemma~\ref{lem:cyclicGroup} that every character of $A$ has this form.

Define $\phi_0: A \ra \ah$ by $\phi_0(\ga^i) = \pi_i$.  One verifies that $\phi_0$ is a duality.  Because $\pair{\pi_i}{\ga^j} = \zeta_m^{ij} = \pair{\pi_j}{\ga^i}$ for all $i,j \in \Z/m\Z$, Lemma~\ref{lem:SymDual} implies that $\phi_0$ is symmetric.  It is well-known that automorphisms of $A$ are induced by sending $\ga$ to $\ga^k$, where $k$ is relatively prime to $m$.  Thus $\phi(\ga^i) = \phi_0(\ga^{ki})$, $i \in \Z/m\Z$, defines another duality of $A$, and every duality of $A$ has this form, by Proposition~\ref{prop:DualitiesAut}.  Then $\pair{\phi(\ga^i)}{\ga^j} = \pair{\phi_0(\ga^{ki})}{\ga^j} = \zeta_m^{kij} = \pair{\phi_0(\ga^{kj})}{\ga^i} = \pair{\phi(\ga^j)}{\ga^i}$, and $\phi$ is symmetric by Lemma~\ref{lem:SymDual}.
\end{proof}

\begin{lem}  \label{lem:DualProduct}
Let $A_1, A_2$ be finite abelian groups.  If $\phi_1, \phi_2$ are symmetric dualities of $A_1, A_2$, respectively, then $\phi_1 \times \phi_2$ is a symmetric duality of $A_1 \times A_2$.
\end{lem}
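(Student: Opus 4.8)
The plan is to use Lemma~\ref{lem:Products}, which identifies $\charac{A_1 \times A_2}$ with $\ah_1 \times \ah_2$, and then verify the symmetry condition from Lemma~\ref{lem:SymDual} componentwise. First I would make the identification $\charac{A_1 \times A_2} \cong \ah_1 \times \ah_2$ from Lemma~\ref{lem:Products} explicit: a pair $(\pi_1, \pi_2) \in \ah_1 \times \ah_2$ corresponds to the character of $A_1 \times A_2$ sending $(a_1, a_2)$ to $\pair{\pi_1}{a_1} \pair{\pi_2}{a_2}$. Under this identification, $\phi_1 \times \phi_2$ is the map $A_1 \times A_2 \ra \charac{A_1 \times A_2}$ sending $(a_1, a_2)$ to $(\phi_1(a_1), \phi_2(a_2))$, i.e., to the character $(b_1, b_2) \mapsto \pair{\phi_1(a_1)}{b_1} \pair{\phi_2(a_2)}{b_2}$. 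Since $\phi_1$ and $\phi_2$ are isomorphisms and Lemma~\ref{lem:Products} gives an isomorphism, $\phi_1 \times \phi_2$ is an isomorphism, hence a duality of $A_1 \times A_2$.

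Next I would check symmetry using the criterion in Lemma~\ref{lem:SymDual}: $\phi_1 \times \phi_2$ is symmetric if and only if $\pair{(\phi_1 \times \phi_2)(a_1,a_2)}{(b_1,b_2)} = \pair{(\phi_1 \times \phi_2)(b_1,b_2)}{(a_1,a_2)}$ for all $(a_1,a_2), (b_1,b_2) \in A_1 \times A_2$. Expanding the left side via the identification above gives $\pair{\phi_1(a_1)}{b_1}_{A_1} \, \pair{\phi_2(a_2)}{b_2}_{A_2}$, and the right side gives $\pair{\phi_1(b_1)}{a_1}_{A_1} \, \pair{\phi_2(b_2)}{a_2}_{A_2}$. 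By the symmetry of $\phi_1$ and $\phi_2$ (again via Lemma~\ref{lem:SymDual}), we have $\pair{\phi_1(a_1)}{b_1}_{A_1} = \pair{\phi_1(b_1)}{a_1}_{A_1}$ and $\pair{\phi_2(a_2)}{b_2}_{A_2} = \pair{\phi_2(b_2)}{a_2}_{A_2}$, so the two sides agree.

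The only genuine subtlety — and it is minor — is bookkeeping: making sure the identification of Lemma~\ref{lem:Products} is applied consistently on both the "bra" and "ket" sides of the pairing, so that the evaluation of the product character really does factor as the product of the evaluations. Once that is pinned down, the argument is a one-line computation in $\C^\times$. I expect no real obstacle; the result is essentially a formal consequence of the fact that forming character groups commutes with finite products (Lemma~\ref{lem:Products}) together with the symmetry criterion of Lemma~\ref{lem:SymDual}. An alternative, slightly slicker route would be to observe that $(\phi_1 \times \phi_2)^* = \phi_1^* \times \phi_2^*$ under the identification — which follows from the functoriality of $\alpha \mapsto \alpha^*$ and its compatibility with products — and then conclude $\phi_1^* \times \phi_2^* = \phi_1 \times \phi_2$ directly from $\phi_i^* = \phi_i$; I would mention this as the conceptual reason even if I carry out the explicit pairing computation for concreteness.
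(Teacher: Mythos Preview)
Your proposal is correct and follows exactly the approach the paper takes: the paper's proof simply says to use Lemma~\ref{lem:Products} and then verify the condition in Lemma~\ref{lem:SymDual}, which is precisely the componentwise computation you wrote out. Your alternative remark that $(\phi_1 \times \phi_2)^* = \phi_1^* \times \phi_2^*$ is a nice conceptual gloss, but the paper does not make it explicit.
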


\begin{proof}
Using Lemma~\ref{lem:Products}, one verifies the condition in Lemma~\ref{lem:SymDual}.
\end{proof}

\begin{prop}  \label{prop:SymDualsExist}
Let $A$ be a finite abelian group.  Then, there exists at least one symmetric duality of $A$.
\end{prop}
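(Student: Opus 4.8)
The plan is to reduce to the case of a finite abelian $p$-group via Proposition~\ref{prop:PrimeByPrime}, then reduce further to the case of a cyclic $p$-group using Lemma~\ref{lem:DualProduct} together with the structure theorem, where the result is already known by Lemma~\ref{lem:DualCyclic}. Concretely, the fundamental theorem of finite abelian groups writes $A \cong A_1 \times A_2 \times \cdots \times A_r$ as a product of cyclic groups of prime power order. By Lemma~\ref{lem:DualCyclic}, each $A_i$, being cyclic, admits a symmetric duality $\phi_i \colon A_i \ra \charac{A_i}$. By repeated application of Lemma~\ref{lem:DualProduct} (an easy induction on $r$, or one can just invoke it for an $r$-fold product after checking the obvious generalization), the map $\phi_1 \times \cdots \times \phi_r$ is a symmetric duality of $A_1 \times \cdots \times A_r$. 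Transporting this duality across the isomorphism $A \cong A_1 \times \cdots \times A_r$ yields a symmetric duality of $A$ itself.

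First I would record the easy observation that a symmetric duality is preserved under isomorphism: if $\psi \colon A \ra B$ is an isomorphism of finite abelian groups and $\phi \colon B \ra \charac{B}$ is a symmetric duality, then $\psi^* \circ \phi \circ \psi \colon A \ra \charac{A}$ is a symmetric duality of $A$; this is a direct verification using the criterion $\pair{\phi(a)}{b} = \pair{\phi(b)}{a}$ of Lemma~\ref{lem:SymDual} and the definition \eqref{eqn:DualHom} of the induced map on character groups. Second I would state the $r$-fold version of Lemma~\ref{lem:DualProduct}: if $\phi_i$ is a symmetric duality of $A_i$ for $i = 1, \ldots, r$, then $\phi_1 \times \cdots \times \phi_r$ is a symmetric duality of $A_1 \times \cdots \times A_r$, where one uses the identification $\charac{A_1 \times \cdots \times A_r} \cong \charac{A_1} \times \cdots \times \charac{A_r}$ extending Lemma~\ref{lem:Products}; this again reduces to the symmetry criterion, which holds coordinatewise. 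Third I would invoke the structure theorem to write $A$ as such a product of cyclic prime-power groups, apply Lemma~\ref{lem:DualCyclic} to get symmetric dualities on the factors, combine them, and transport back to $A$.

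There is really no serious obstacle here: the result is a formal consequence of the cyclic case plus closure under products and isomorphisms, all of which are already in hand. The only point requiring a modicum of care is the bookkeeping in the $r$-fold product — making sure the identification of $\charac{A_1 \times \cdots \times A_r}$ with $\charac{A_1} \times \cdots \times \charac{A_r}$ is set up compatibly so that the symmetry condition genuinely decouples across coordinates — but this is routine and can be handled either by induction on $r$ using Lemma~\ref{lem:DualProduct} verbatim or by a one-line direct check. Note also that one could bypass the full structure theorem and instead first reduce to $p$-groups via Proposition~\ref{prop:PrimeByPrime} (which also tells us $\aut(A)$ decomposes prime by prime, though that is not needed here), but since the structure theorem is already being used throughout the paper, the direct route above is cleanest.
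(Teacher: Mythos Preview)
Your proposal is correct and follows essentially the same route as the paper: decompose $A$ into a product of finite cyclic groups via the fundamental theorem, apply Lemma~\ref{lem:DualCyclic} to each factor, and combine using Lemma~\ref{lem:DualProduct}. The paper's proof is more terse, but your added remarks about the $r$-fold product and transport across isomorphism are exactly the routine details one would fill in.
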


\begin{proof}
Write $A$ as a product of finite cyclic groups, by the fundamental theorem of finite abelian groups.  Then use Lemmas~\ref{lem:DualCyclic} and~\ref{lem:DualProduct}.
\end{proof}

\begin{rem}
Caveat!  The proof of Proposition~\ref{prop:SymDualsExist} does not imply that every duality of a finite abelian group is symmetric.  The reason is that Lemma~\ref{lem:DualProduct} applies only to dualities of a product that are in the `diagonal' form of $\phi_1 \times \phi_2$.  Especially important is the case where $A_1=A_2$, where there are more automorphisms of $A_1 \times A_2$ than just the diagonal ones.  This is discussed further in Example~\ref{ex:elem-ab-p-groups}.
\end{rem}

\begin{ex}  \label{ex:elem-ab-p-groups}
Let $p$ be a prime, and suppose $A$ is an elementary abelian $p$-group of order $p^n$.  Then $A$ is isomorphic to the underlying abelian group of a vector space of dimension $n$ over the finite field $\F_p$.   Elements of $A$ will be viewed as row vectors $a =[ a_1, a_2, \ldots, a_n ]$, with each $a_i \in \F_p \cong \Z/p\Z$.   Automorphisms of $A$ are given by invertible $n \times n$ matrices over $\F_p$ acting on $A$ on the right by matrix multiplication; i.e., $\aut(A) = \GL(n,\F_p)$.

Pick a primitive $p$th root $\zeta_p$ of $1$ in $\C$.  For $a \in A$, define $\pi_a \in \ah$ by
\[  \pair{\pi_a}{b} = \zeta_p^{a b^\top} \in \C^\times, \quad  b \in A . \]
Then $\phi_0: A \ra \ah$, $\phi_0(a) = \pi_a$, is a symmetric duality.  By Proposition~\ref{prop:DualitiesAut}, every other duality $\phi: A \ra \ah$ has the form $\phi(a) = \phi_0(aP)$, where $P \in \GL(n,\F_p)$.  Thus $\pair{\phi(a)}{b} = \pair{\pi_{aP}}{b} = \zeta_p^{a P b^\top}$.  Then $\phi^*: A \ra \ah$ is given by
\[  \pair{\phi^*(a)}{b} = \pair{\phi(b)}{a} = \zeta_p^{b P a^\top} = \zeta_p^{a P^\top b^\top} , \]
where have used the fact that $b P a^\top$ is a $1 \times 1$ matrix, so it equals its own transpose.  Thus $\phi^*(a) = \pi_{aP^\top}$, and $\phi$ is symmetric if and only if $P$ is symmetric.
This characterization of symmetric dualities is contrary to that in \cite[Theorem~2.5]{STD-DualitiesFinAbGp} and \cite[Lemma~4]{MR4770737}; corrections to the latter appear in  \cite{STD:SymDualities}.
\end{ex}

Is being symmetric common or rare?  The next result says, at least over vector spaces, that symmetric dualities are asymptotically rare.  This result also appears, independently, in \cite{STD:SymDualities}.
\begin{prop}
Let $A = \F_p^n$.  For a fixed prime $p$, the probability that a duality of $A$ is symmetric goes to $0$ as $n \ra \infty$.  Similarly, for a fixed $n$, the probability that a duality of $A$ is symmetric goes to $0$ as primes $p \ra \infty$.
\end{prop}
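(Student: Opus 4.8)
The plan is to translate the statement into a counting problem about matrices, using Example~\ref{ex:elem-ab-p-groups}. By Proposition~\ref{prop:DualitiesAut} together with that example, the dualities of $A = \F_p^n$ are in bijection with $\GL(n,\F_p)$ via $P \mapsto \bigl(a \mapsto \phi_0(aP)\bigr)$, and such a duality is symmetric exactly when $P = P^\top$. Hence, under the uniform distribution on dualities (equivalently, on $\GL(n,\F_p)$), the probability in question is
\[
\rho(n,p) = \frac{\size{\{P \in \GL(n,\F_p) : P = P^\top\}}}{\size{\GL(n,\F_p)}},
\]
and everything reduces to estimating this ratio from above.

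For the denominator I would use the standard count $\size{\GL(n,\F_p)} = \prod_{i=0}^{n-1}(p^n - p^i) = p^{n^2}\prod_{i=1}^{n}(1 - p^{-i})$, and bound it below, uniformly in $p$, by $C_0\, p^{n^2}$ where $C_0 = \prod_{i=1}^{\infty}(1 - 2^{-i}) > 0$ is an absolute constant (each factor $1 - p^{-i} \ge 1 - 2^{-i}$ since $p \ge 2$). For the numerator, the one observation really needed is that it is enough to count \emph{all} symmetric matrices, invertible or not: a symmetric $n \times n$ matrix is determined by its entries on and above the diagonal, so there are exactly $p^{n(n+1)/2}$ of them, whence the numerator is at most $p^{n(n+1)/2}$. (Invoking the exact formula for the number of \emph{nonsingular} symmetric matrices would work too, but is unnecessarily delicate and forces a case split between $p$ odd and $p = 2$.)

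Combining the two estimates gives $\rho(n,p) \le C_0^{-1}\, p^{n(n+1)/2 - n^2} = C_0^{-1}\, p^{-n(n-1)/2}$. For fixed $p$, the exponent $n(n-1)/2 \to \infty$ as $n \to \infty$, so $\rho(n,p) \to 0$; for fixed $n \ge 2$ (for $n \le 1$ every duality is symmetric by Lemma~\ref{lem:DualCyclic}, so the second claim is understood with $n \ge 2$) we have $n(n-1)/2 \ge 1$, so $\rho(n,p) \le C_0^{-1} p^{-1} \to 0$ as $p \to \infty$. There is no serious obstacle; the only point requiring a little care is to make the lower bound on $\size{\GL(n,\F_p)}$ uniform in $p$, which is why I would extract the absolute constant $C_0$ rather than keep $\prod_{i=1}^n(1 - p^{-i})$. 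A sharper asymptotic $\rho(n,p) \asymp p^{-n(n-1)/2}$ is available from the precise count of nonsingular symmetric matrices, but since the crude bound already yields both limits I would not pursue it.
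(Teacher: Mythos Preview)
Your proof is correct, and it takes a genuinely more elementary route than the paper. Both arguments begin by identifying dualities with $\GL(n,\F_p)$ and symmetric dualities with symmetric invertible matrices, so the probability is the ratio $\rho(n,p)$ you write. From there the approaches diverge. The paper invokes MacWilliams's exact formula for the number $N(n)$ of \emph{invertible} symmetric matrices, splits into the cases $n=2t$ and $n=2t+1$, and simplifies the resulting ratio to an explicit expression that visibly tends to~$0$. You instead bound the numerator above by the count of \emph{all} symmetric matrices, $p^{n(n+1)/2}$, and bound the denominator below by $C_0\,p^{n^2}$ using the infinite product $C_0=\prod_{i\ge 1}(1-2^{-i})$; this immediately gives $\rho(n,p)\le C_0^{-1}p^{-n(n-1)/2}$ without any case analysis or appeal to MacWilliams's formula.

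What each buys: your argument is shorter, avoids a nontrivial external count, and makes the uniformity in $p$ transparent via the absolute constant $C_0$. The paper's argument yields the exact value of $\rho(n,p)$ rather than just an upper bound, which would matter if one wanted precise asymptotics; for the proposition as stated, your crude bound already delivers the same order $p^{-n(n-1)/2}$ and both limits. You also make explicit the caveat $n\ge 2$ for the $p\to\infty$ limit (since for $n=1$ every duality is symmetric), which the paper handles only implicitly through the restriction ``$t\ge 1$'' in its final sentence.
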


\begin{proof}
The probability that a duality $\phi: A \ra \ah$ is symmetric is
\[  \frac{\size{\{ P \in \GL(n,\F_p): P = P^\top \}}}{\size{\GL(n,\F_p)}} . \]
MacWilliams \cite[p.\ 156]{MR238870} gives the number $N(n)$ of symmetric, invertible $n \times n$ matrices over any finite field $\F_q$:
\begin{equation}  \label{eqn:SymInvertCount}
N(2t) = \prod_{i=1}^t (q^{2t+1} - q^{2i}), \quad
N(2t + 1) = \prod_{i=0}^t (q^{2t+1} - q^{2i}).
\end{equation}
The number of invertible matrices is
\[  \size{\GL(n,\F_q)} = \prod_{i=0}^{n-1} (q^n - q^i) . \]
For $n$ even or odd, i.e., for $n=2t$ or $n=2t+1$, respectively, we have
\begin{align*}
\frac{N(2t)}{\size{\GL(2t, \F_q)}} &= \frac{q^t}{\prod_{j=0}^{t-1} (q^{2t} - q^{2j})} < \frac{q^t}{q^{2t}-1} , \\
\frac{N(2t+1)}{ \size{\GL(2t+1,\F_q)}} &= \frac{1}{\prod_{j=1}^t (q^{2t+1} - q^{2j-1})} < \frac{1}{q^{2t+1}-1} .
\end{align*}
In both cases the ratio goes to $0$ for fixed $q \geq 2$ as $t \ra \infty$ (or for fixed $t \geq 1$ as $q \ra \infty$).  Of course, the same is true when we restrict $q$ to be a prime $p$.
\end{proof}

We conclude this section by describing inner products on a finite abelian group $A$, and we show that dualities on $A$ are equivalent to inner products on $A$.  Almost all of this material can be found in
Delsarte \cite[\S 6.1]{MR384310}.  

Let $A$ be a finite abelian group.  A function $\Psi: A \times A \ra \C^\times$ is an \emph{inner product} on $A$ if it satisfies the following properties:
\begin{itemize}
\item $\Psi(a_1+a_2,b) = \Psi(a_1,b) \Psi(a_2,b)$, for all $a_1, a_2, b \in A$;
\item $\Psi(a,b_1+b_2) = \Psi(a,b_1) \Psi(a,b_2)$, for all $a, b_1, b_2 \in A$;
\item  if $\Psi(a,b)=1$ for all $b \in A$, then $a=0$;
\item  if $\Psi(a,b)=1$ for all $a \in A$, then $b=0$.
\end{itemize}
If, in addition, $\Psi(a,b) = \Psi(b,a)$ for all $a,b \in A$, then $\Psi$ is called \emph{symmetric}.  Note that Delsarte includes symmetry as part of the definition of an inner product; we do not.  Inner products, but with values in $\Q/\Z$ instead of $\C^\times$, also figure prominently in \cite{nebe-rains-sloane, wood:turkey}.

\begin{rem}
When $n \in \Z$, note that $\Psi(na,b) = (\Psi(a,b))^n$.
\end{rem}

Given a duality $\phi$ of a finite abelian group $A$, define $\Phi: A \times A \ra \C^\times$: 
\begin{equation}  \label{eqn:DualityToInnerProduct}
\Phi(a,b) = \pair{\phi(a)}{b}, \quad a,b \in A .
\end{equation}
Conversely, given an inner product $\Psi: A \times A \ra \C^\times$, define $\psi: A \ra \ah$:
\begin{equation}  \label{eqn:InnerProductToDuality}
\pair{\psi(a)}{b} = \Psi(a,b), \quad a, b \in A .
\end{equation}

\begin{prop}
If $\phi: A \ra \ah$ is a duality of $A$, then $\Phi$ of \eqref{eqn:DualityToInnerProduct} is an inner product on $A$.  Conversely, if $\Psi: A \times A \ra \C^\times$ is an inner product on $A$, then $\psi$ of \eqref{eqn:InnerProductToDuality} is a duality of $A$.  Moreover, for any duality $\phi: A \ra \ah$ of $A$, the inner product $\Phi^*$ associated to the duality $\phi^*: A \ra \ah$ satisfies $\Phi^*(a,b) = \Phi(b,a)$ for all $a,b \in A$.  In particular, a duality $\phi$ is symmetric if and only if its associated inner product $\Phi$ is symmetric.
\end{prop}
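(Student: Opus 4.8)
The plan is to verify the four defining conditions for $\Phi$ directly, run the same computations backwards for $\psi$, and then read off the last two assertions from Lemma~\ref{lem:SymDual}. None of this is deep; the work is purely a matter of matching each axiom to the appropriate half of the correspondence between dualities and inner products.

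For the forward direction, write $\Phi(a,b) = \pair{\phi(a)}{b}$. Biadditivity in the first slot, $\Phi(a_1+a_2,b) = \Phi(a_1,b)\Phi(a_2,b)$, follows because $\phi$ is a group homomorphism $A \ra \ah$ together with the product rule \eqref{eqn:CharProduct}; biadditivity in the second slot is exactly the homomorphism property \eqref{eqn:CharHom} of the single character $\phi(a)$. If $\Phi(a,b)=1$ for all $b \in A$, then $\phi(a)$ is the trivial character, so $a=0$ by injectivity of $\phi$; if $\Phi(a,b)=1$ for all $a \in A$, then surjectivity of $\phi$ gives $\pair{\pi}{b}=1$ for every $\pi \in \ah$, whence $b=0$ by Proposition~\ref{prop:SizeCharGroup} (equivalently, Corollary~\ref{cor:VanishSum}). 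So $\Phi$ is an inner product.

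Conversely, given an inner product $\Psi$ and $\psi$ defined by $\pair{\psi(a)}{b} = \Psi(a,b)$, the second additivity axiom says $b \mapsto \Psi(a,b)$ is a homomorphism $A \ra \C^\times$, i.e., $\psi(a)$ really is an element of $\ah$, so $\psi: A \ra \ah$ is well-defined. The first additivity axiom gives $\pair{\psi(a_1+a_2)}{b} = \pair{\psi(a_1)\psi(a_2)}{b}$ for all $b$, so $\psi$ is a homomorphism. The third axiom (left nondegeneracy) forces $\ker\psi = 0$, and since $\size{\ah} = \size{A}$ by Proposition~\ref{prop:SizeCharGroup}, $\psi$ is automatically surjective, hence a duality. (The fourth axiom is not used here; it in fact follows from the rest, but keeping it in the definition makes the statement symmetric.)

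For the remaining claims, $\Phi^*(a,b) = \pair{\phi^*(a)}{b}$, and Lemma~\ref{lem:SymDual} gives $\pair{\phi^*(a)}{b} = \pair{\phi(b)}{a} = \Phi(b,a)$, the asserted identity. Then $\phi$ symmetric, i.e., $\phi^* = \phi$, is equivalent to $\Phi^* = \Phi$; since $\Phi^*(a,b) = \Phi(b,a)$, this says precisely that $\Phi(a,b) = \Phi(b,a)$ for all $a,b$. In the other direction, if $\Phi(a,b) = \Phi(b,a)$ for all $a,b$, then $\pair{\phi^*(a)}{b} = \pair{\phi(a)}{b}$ for every $b$, so $\phi^*(a) = \phi(a)$ for each $a$. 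The only steps needing any care are the use of the cardinality equality $\size{\ah}=\size{A}$ to promote injectivity of $\psi$ to surjectivity, and the appeal to Lemma~\ref{lem:SymDual}, which is where the $\ev$-identification of $A$ with its double character group is silently in play; there is no genuine obstacle.
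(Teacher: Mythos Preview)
Your argument is correct and follows the same route as the paper: biadditivity from \eqref{eqn:CharHom}, \eqref{eqn:CharProduct}, and the homomorphism property of $\phi$; nondegeneracy from injectivity/surjectivity of $\phi$ together with Proposition~\ref{prop:SizeCharGroup}; and the final two claims from Lemma~\ref{lem:SymDual}. Your added observation that the fourth inner-product axiom is redundant (once $\psi$ is seen to be bijective) is a nice extra detail not made explicit in the paper.
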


\begin{proof}
The first two properties for $\Phi$ to be an inner product follow from \eqref{eqn:CharHom}, \eqref{eqn:CharProduct}, and $\phi$ being a homomorphism.  The third property holds because $\phi$ is injective, and the fourth property holds, via Proposition~\ref{prop:SizeCharGroup}, because $\phi$ is surjective.  Essentially the same arguments yield $\psi$ being a duality.
The relationship between $\Phi^*$ and $\Phi$, as well as the statement about symmetry, follow from Lemma~\ref{lem:SymDual}.  
\end{proof}

\begin{ex}  \label{ex:Klein-4-group}
Let $A = \F_2^2$, the Klein $4$-group.  Write elements of $A$ as pairs $ab$, with $a, b \in \F_2$.  Then $\aut(A) = \GL(2, \F_2)$, which is isomorphic to the dihedral group $D_3$ of order $6$ (also isomorphic to the symmetric group of degree $3$).  The automorphisms permute the three nonzero vectors in $\F_2^2$.  

Define a symmetric duality $\phi_0$ of $A$ by
\[  \Phi_0(ab,cd) = \pair{\phi_0(ab)}{cd} = (-1)^{ac+bd} = (-1)^{ab [cd]^\top}  . \]
The characters $\pi_i=\phi_0(ab)$ have the following values:
\[  \begin{array}{cccccc}
\pi & ab & \pair{\pi}{00} & \pair{\pi}{01} & \pair{\pi}{10} & \pair{\pi}{11} \\ \hline
\pi_0 & 00 & 1 & \hphantom{-}1 & \hphantom{-}1 & \hphantom{-}1  \\
\pi_1 & 01 & 1 & -1 & \hphantom{-}1 & -1  \\
\pi_2 & 10 & 1 & \hphantom{-}1 & -1 & -1  \\
\pi_3 & 11 & 1 & -1 & -1 & \hphantom{-}1 
\end{array} \]

For the six elements $P \in \aut(A)$, here are the associated dualities $\phi_P(ab) = \phi_0(ab P)$, together with $\phi^*_P$ and the group order $o(P)$ of $P$.
\[  \begin{array}{cccccccc}
\phi_i & P & \phi_P(00) & \phi_P(01) & \phi_P(10) & \phi_P(11) & \phi^*_P & o(P) \\ \hline
\phi_0 & \left[ \begin{smallmatrix}
1 & 0 \\ 0 & 1
\end{smallmatrix} \right] & \pi_0 & \pi_1 & \pi_2 & \pi_3 & \phi_0 & 1 \\
\phi_1 & \left[ \begin{smallmatrix}
1 & 1 \\ 1 & 0 
\end{smallmatrix} \right] & \pi_0 & \pi_2 & \pi_3 & \pi_1 & \phi_1 & 3 \\
\phi_2 & \left[ \begin{smallmatrix}
0 & 1 \\ 1 & 1
\end{smallmatrix} \right] & \pi_0 & \pi_3 & \pi_1 & \pi_2 & \phi_2 & 3 \\
\phi_3 & \left[ \begin{smallmatrix}
0 & 1 \\ 1 & 0
\end{smallmatrix} \right] & \pi_0 & \pi_2 & \pi_1 & \pi_3 & \phi_3 & 2 \\ \hline
\phi_4 & \left[ \begin{smallmatrix}
1 & 1 \\ 0 & 1
\end{smallmatrix} \right] & \pi_0 & \pi_1 & \pi_3 & \pi_2 & \phi_5 & 2 \\
\phi_5 & \left[ \begin{smallmatrix}
1 & 0 \\ 1 & 1
\end{smallmatrix} \right] & \pi_0 & \pi_3 & \pi_2 & \pi_1 & \phi_4 & 2  
\end{array}  \]

Four of the six dualities are symmetric: $\phi_i$, $i=0,1,2,3$.  The remaining two dualities, $\phi_4, \phi_5$, form a nonsymmetric pair: $\phi_4^* = \phi_5$ and $\phi^*_5 = \phi_4$.
The same dualities are listed in \cite[Example~2]{MR4770737}, but symmetry there (also in \cite[Theorem~2.5]{STD-DualitiesFinAbGp}) is tied to group order, which is contrary to the table above.
The table shows that the order of an automorphism does not determine whether its corresponding duality is symmetric.
\end{ex}

\begin{ex}  \label{ex:2by4}
Let $A = \Z/2\Z \times \Z/4\Z$.  Write elements of $A$ as a pair $ab$ with $a \in \Z/2\Z$ and $b \in \Z/4\Z$.  The elements $01, 03, 11, 13$ of $A$ have order $4$, while the elements $02, 10, 12$ have order $2$.  One set of generators of the group $A$ is $\{10 , 01\}$.  Any character of $A$ is determined by its values on the generators.

Define a symmetric duality $\phi_0$ of $A$ by 
\[  \Phi_0(ab,cd) = \pair{\phi_0(ab)}{cd} = (-1)^{ac} i^{bd} .  \]
The characters $\pi_i = \phi_0(ab)$ of $A$ are listed next, $ab$ vertically, $cd$ horizontally, with entries equal to $\Phi_0(ab,cd) = \pair{\phi_0(ab)}{cd}$.
\[  \begin{array}{cr|rrrrrrrr}
&  & 00 & 01 & 02 & 03 & 10 & 11 & 12 & 13 \\ \hline
\pi_0 & 00 &   1 & 1 & 1  & 1 & 1 & 1 & 1 & 1 \\
\pi_1 & 01 & 1 & i & -1 & -i & 1 & i & -1 & -i  \\
\pi_2 & 02 & 1 & -1 & 1 & -1 & 1 & -1 & 1 & -1 \\
\pi_3 & 03 & 1 & -i & -1 & i & 1 & -i & -1 & i \\
\pi_4 & 10 & 1 & 1 & 1 & 1 & -1 & -1 & -1 & -1 \\
\pi_5 & 11 & 1 & i & -1 & -i & -1 & -i & 1 & i  \\
\pi_6 & 12 & 1 & -1 & 1 & -1 & -1 & 1 & -1 & 1 \\
\pi_7 & 13 & 1 & -i & -1 & i & -1 & i & 1 & -i  \\
\end{array}
\]

As with characters, an automorphism of $A$ is completely determined by its values on the generators.  An automorphism must send $01$ to one of the elements of order $4$ and $10$ to either $10$ or $12$ (not to $02$, which is twice each of the elements of order $4$).  Write each automorphism as a $2 \times 2$ matrix, with first row equal to the image of $10$ and second row equal to the image of $01$.  Setting
\[  \sigma = \begin{bmatrix}
1 & 0 \\ 1 & 1
\end{bmatrix} , \quad 
\tau = \begin{bmatrix}
1 & 2 \\ 1 & 1
\end{bmatrix} , \]
one recognizes $\aut(A)$ to be the dihedral group $D_4$ of order $8$, with $\sigma^2=I$, $\tau^4=I$, and $\tau \sigma = \sigma \tau^3$.

For the eight elements $P \in \aut(A)$, here are the associated dualities $\phi_P(ab) = \phi_0(abP)$.
\[  \begin{array}{cccccc}
& \sigma^\epsilon \tau^j & P & \phi_P(01) & \phi_P(10) & \phi_P^* \\ \hline
\phi_0 & I & \left[ \begin{smallmatrix}
1 & 0 \\ 0 & 1
\end{smallmatrix} \right] & \pi_1 & \pi_4 & \phi_0 \\
\phi_1 & \tau & \left[ \begin{smallmatrix}
1 & 2 \\ 1 & 1
\end{smallmatrix} \right] & \pi_5 & \pi_6 & \phi_1 \\
\phi_2 & \tau^2 & \left[ \begin{smallmatrix}
1 & 0 \\ 0 & 3
\end{smallmatrix} \right] & \pi_3 & \pi_4 & \phi_2 \\
\phi_3 & \tau^3 & \left[ \begin{smallmatrix}
1 & 2 \\ 1 & 3
\end{smallmatrix} \right] & \pi_7 & \pi_6 & \phi_3 \\ \hline
\phi_4 & \sigma & \left[ \begin{smallmatrix}
1 & 0 \\ 1 & 1
\end{smallmatrix} \right] & \pi_5 & \pi_4 & \phi_7 \\
\phi_5 & \sigma \tau & \left[ \begin{smallmatrix}
1 & 2 \\ 0 & 3
\end{smallmatrix} \right] & \pi_3 & \pi_6 & \phi_6 \\
\phi_6 & \sigma \tau^2 & \left[ \begin{smallmatrix}
1 & 0 \\ 1 & 3
\end{smallmatrix} \right] & \pi_7 & \pi_4 & \phi_5 \\
\phi_7 & \sigma \tau^3 & \left[ \begin{smallmatrix}
1 & 2 \\ 0 & 1
\end{smallmatrix} \right] & \pi_1 & \pi_6 & \phi_4 
\end{array}  \]
Four of the dualities are symmetric: $\phi_i$, $i=0,1,2,3$.  The other dualities form two pairs: $\phi^*_4 = \phi_7$ and $\phi_5^* = \phi_6$, contrary to \cite[Example~4]{STD-DualitiesFinAbGp}.
\end{ex}

\section{Additive codes and dual codes}  \label{sec:AddDualCodes}

In this section, we define additive codes and use a choice of duality to define dual codes.

Let $A$ be a finite abelian group, and choose a duality $\phi: A \ra \ah$.  Using Lemma~\ref{lem:DualProduct}, $\phi$ induces a duality $A^n \ra \ah^n$ by setting
\begin{equation}  \label{eqn:DiagExtension}
\pair{\phi(a_1, a_2, \ldots, a_n)}{(b_1, b_2, \ldots, b_n)}_{A^n} = \prod_{i=1}^n \pair{\phi(a_i)}{b_i}_A ,
\end{equation}
 for $(a_1, a_2, \ldots, a_n), (b_1, b_2, \ldots, b_n) \in A^n$.  If $\phi: A \ra \ah$ is symmetric, so is its extension $\phi: A^n \ra \ah^n$.
Extend the inner product $\Phi$ on $A$ to an inner product on $A^n$ (still called $\Phi$, abusing notation) by
\[  \Phi((a_1, a_2, \ldots, a_n),(b_1, b_2, \ldots, b_n))= \prod_{i=1}^n \Phi(a_i,b_i) , \]
for $(a_1, a_2, \ldots, a_n),(b_1, b_2, \ldots, b_n) \in A^n$.

\begin{rem}
Not every duality of $A^n$ has the form of \eqref{eqn:DiagExtension}.  In Example~\ref{ex:Klein-4-group}, the duality $\phi_3$ of $\F_2^2$ is not equal to $\phi_1$.  This is even true up to the appropriate notion of equivalence, as will be addressed in Section~\ref{sec:congruence}.
\end{rem}

An \emph{additive code} of length $n$ over $A$ is a subgroup $C \subseteq A^n$.  An additive code has an annihilator $(\ah^n: C)$, as in \eqref{eqn:DefnAnnihilator}.
The annihilator $(\ah^n: C)$ has most of the properties one would want in a dual code, including the size condition $\size{C} \cdot \size{(\ah^n: C)} = \size{A^n}$ and the double annihilator peoperty $(A^n : (\ah^n:C)) = C$, Corollary~\ref{cor:DoubleAnnihilator}; cf., \cite[\S 11.2]{wood:turkey}.  The only drawback is that the annihilator $(\ah^n: C)$ is contained in $\ah^n$, not in $A^n$.  The entire reason for discussing dualities is to be able to pull back the annihilator $(\ah^n: C)$ to live in $A^n$.

For an additive code $C \subseteq A^n$ and a choice of duality $\phi: A \ra \ah$, define left and right \emph{dual codes} by 
\begin{align*}
\mathfrak{L}_{\phi}(C) = \{ x \in A^n: \Phi(x,c)=1 \text{ for all $c \in C$} \}, \\
\mathfrak{R}_{\phi}(C) = \{ x \in A^n: \Phi(c,x)=1 \text{ for all $c \in C$} \} .
\end{align*}
We may write $\mathfrak{L}(C)$ or $\mathfrak{R}(C)$ when $\phi$ is unambiguous.

\begin{lem}  \label{lem:LeftRightDuals}
Given a finite abelian group $A$ and a duality $\phi: A \ra \ah$, the following hold for all additive codes $C, C_1, C_2 \subseteq A^n$:
\begin{itemize}
\item $\phi(\mathfrak{L}_{\phi}(C)) = (\ah^n:C)$ and $\phi^*(\mathfrak{R}_{\phi}(C)) = (\ah^n:C)$.
\item  If $C_1 \subseteq C_2 \subseteq A^n$, then $\mathfrak{L}_{\phi}(C_2) \subseteq \mathfrak{L}_{\phi}(C_1)$ and $\mathfrak{R}_{\phi}(C_2) \subseteq \mathfrak{R}_{\phi}(C_1)$.
\item  $\mathfrak{L}_{\phi^*}(C)=\mathfrak{R}_{\phi}(C)$ and $\mathfrak{R}_{\phi^*}(C)= \mathfrak{L}_{\phi}(C)$.
\item  If the duality $\phi$ is symmetric, then $\mathfrak{L}_{\phi}(C)=\mathfrak{R}_{\phi}(C)$.
\end{itemize}
\end{lem}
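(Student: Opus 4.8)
The plan is to dispatch the four bullets in turn; each reduces directly to the definitions together with Lemma~\ref{lem:SymDual}. First, though, I would record one compatibility fact. The diagonal extension of \eqref{eqn:DiagExtension} behaves well under passing to the dual duality: writing $\phi: A^n \ra \ah^n$ for the extension of $\phi: A \ra \ah$, the duality $\phi^*$ of $A^n$ is again the diagonal extension of the duality $\phi^*$ of $A$. Indeed, applying Lemma~\ref{lem:SymDual} on the finite abelian group $A^n$ and then \eqref{eqn:DiagExtension}, one gets $\pair{\phi^*(a)}{b}_{A^n} = \pair{\phi(b)}{a}_{A^n} = \prod_i \pair{\phi(b_i)}{a_i}_A = \prod_i \pair{\phi^*(a_i)}{b_i}_A$ for $a = (a_i)$, $b = (b_i) \in A^n$, which is exactly the diagonal extension of $\phi^*$. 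In particular the inner product $\Phi^*$ attached to $\phi^*$ on $A^n$ satisfies $\Phi^*(a,b) = \Phi(b,a)$ for all $a,b \in A^n$.

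For the first bullet, unwind the definitions: $x \in \mathfrak{L}_\phi(C)$ if and only if $\Phi(x,c) = \pair{\phi(x)}{c} = 1$ for every $c \in C$, i.e., if and only if $\phi(x) \in (\ah^n:C)$; since $\phi$ is a bijection, $\phi(\mathfrak{L}_\phi(C)) = (\ah^n:C)$. Similarly, $x \in \mathfrak{R}_\phi(C)$ if and only if $\Phi(c,x) = 1$ for every $c \in C$; by Lemma~\ref{lem:SymDual} applied on $A^n$, $\Phi(c,x) = \pair{\phi(c)}{x} = \pair{\phi^*(x)}{c}$, so this condition says $\phi^*(x) \in (\ah^n:C)$, and applying the bijection $\phi^*$ gives $\phi^*(\mathfrak{R}_\phi(C)) = (\ah^n:C)$. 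The second bullet is immediate from the definitions, since a condition imposed for all $c \in C_2$ is in particular imposed for all $c \in C_1$ when $C_1 \subseteq C_2$ (and likewise for $\mathfrak{R}$).

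The third and fourth bullets follow from the compatibility fact. Because $\Phi^*(a,b) = \Phi(b,a)$ on $A^n$, the defining condition $\Phi^*(x,c) = 1$ for membership in $\mathfrak{L}_{\phi^*}(C)$ reads $\Phi(c,x) = 1$, which is the defining condition for $\mathfrak{R}_\phi(C)$; hence $\mathfrak{L}_{\phi^*}(C) = \mathfrak{R}_\phi(C)$, and symmetrically $\mathfrak{R}_{\phi^*}(C) = \mathfrak{L}_\phi(C)$. Finally, if $\phi$ is symmetric then $\phi^* = \phi$, so the third bullet gives $\mathfrak{L}_\phi(C) = \mathfrak{L}_{\phi^*}(C) = \mathfrak{R}_\phi(C)$.

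I do not expect any genuine obstacle here: the statement is essentially a repackaging of Lemma~\ref{lem:SymDual} and the annihilator formalism. The only point demanding a moment of care is the preliminary compatibility of the diagonal extension with forming the dual duality — equivalently, the persistence of $\Phi^*(a,b) = \Phi(b,a)$ on $A^n$ rather than merely on $A$ — which is why I would isolate it at the outset; everything after that is bookkeeping.
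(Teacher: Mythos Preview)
Your proof is correct and follows exactly the approach the paper intends: the paper's own proof is the single line ``These are exercises using Lemma~\ref{lem:SymDual},'' and you have simply carried out those exercises in full, including the sensible preliminary check that the diagonal extension commutes with forming $\phi^*$. Nothing more is needed.
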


\begin{proof}
These are exercises using Lemma~\ref{lem:SymDual}.
\end{proof}

The left dual code $\mathfrak{L}_{\phi}(C)$ corresponds to the orthogonal $C^M$ of \cite[Definition~2.2]{STD-DualitiesFinAbGp}, and $\mathfrak{R}_{\phi}(C)$ corresponds to $C^{M^\top}$.  

\begin{prop}  \label{prop:DualProperties}
Given a finite abelian group $A$ and a duality $\phi: A \ra \ah$, the dual codes of any additive code $C \subseteq A^n$ have the following properties:
\begin{itemize}
\item  $\mathfrak{L}_{\phi}(C)$ and $\mathfrak{R}_{\phi}(C)$ are additive codes in $A^n$.
\item  $\size{\mathfrak{L}_{\phi}(C)} \cdot \size{C} = \size{A}^n$ and $\size{\mathfrak{R}_{\phi}(C)} \cdot \size{C} = \size{A}^n$.
\item  $\mathfrak{L}_{\phi}(\mathfrak{R}_{\phi}(C)) = C$ and $\mathfrak{R}_{\phi}(\mathfrak{L}_{\phi}(C))=C$.
\end{itemize}
\end{prop}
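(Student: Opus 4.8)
The plan is to deduce all three properties of $\mathfrak{L}_{\phi}(C)$ and $\mathfrak{R}_{\phi}(C)$ from the corresponding facts about the annihilator $(\ah^n:C)$ established in Section~\ref{sect:CharGrps}, transported through the duality $\phi$ (and $\phi^*$) using the first bullet of Lemma~\ref{lem:LeftRightDuals}. Since $\phi:A^n \ra \ah^n$ is a group isomorphism with $\phi(\mathfrak{L}_{\phi}(C)) = (\ah^n:C)$, and $(\ah^n:C)$ is a subgroup of $\ah^n$ by definition (it is an annihilator, hence a subgroup), $\mathfrak{L}_{\phi}(C) = \phi^{-1}((\ah^n:C))$ is a subgroup of $A^n$, i.e., an additive code. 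The same argument with $\phi^*$ in place of $\phi$ handles $\mathfrak{R}_{\phi}(C)$.

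For the size condition, apply Corollary~\ref{cor:DoubleAnnihilator} to the group $A^n$ and subgroup $C$: $\size{(\ah^n:C)} = \size{A^n}/\size{C} = \size{A}^n/\size{C}$. Because $\phi$ and $\phi^*$ are bijections, $\size{\mathfrak{L}_{\phi}(C)} = \size{(\ah^n:C)}$ and $\size{\mathfrak{R}_{\phi}(C)} = \size{(\ah^n:C)}$, which gives both product formulas immediately.

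For double duality, I would first use the third bullet of Lemma~\ref{lem:LeftRightDuals}, $\mathfrak{L}_{\phi^*}(C) = \mathfrak{R}_{\phi}(C)$ and $\mathfrak{R}_{\phi^*}(C) = \mathfrak{L}_{\phi}(C)$, so that it suffices to prove one of the two identities, say $\mathfrak{R}_{\phi}(\mathfrak{L}_{\phi}(C)) = C$; the other follows by replacing $\phi$ with $\phi^*$. To prove $\mathfrak{R}_{\phi}(\mathfrak{L}_{\phi}(C)) = C$, note that by the first bullet of Lemma~\ref{lem:LeftRightDuals}, $\phi^*(\mathfrak{R}_{\phi}(D)) = (\ah^n:D)$ for any additive code $D$; taking $D = \mathfrak{L}_{\phi}(C)$ gives $\phi^*(\mathfrak{R}_{\phi}(\mathfrak{L}_{\phi}(C))) = (\ah^n:\mathfrak{L}_{\phi}(C))$. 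Now I want to identify $(\ah^n:\mathfrak{L}_{\phi}(C))$. Since $\mathfrak{L}_{\phi}(C) = \phi^{-1}((\ah^n:C))$ and, under the identification $\ev$, the annihilator of $\phi^{-1}(E)$ in $\ah^n$ relates to the annihilator of $E$ in $A^n$ via $\phi^*$ — concretely, $\pi \in (\ah^n:\mathfrak{L}_{\phi}(C))$ means $\pair{\pi}{\phi^{-1}(\mu)} = 1$ for all $\mu \in (\ah^n:C)$, which by Lemma~\ref{lem:SymDual} (applied to $\phi^{-1}$, or directly unraveling $\ev$) says $(\phi^*)^{-1}(\pi)$ annihilates $(\ah^n:C)$ inside $A^n$ — we get $(\phi^*)^{-1}(\pi) \in (A^n:(\ah^n:C)) = C$ by Corollary~\ref{cor:DoubleAnnihilator}. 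Hence $(\ah^n:\mathfrak{L}_{\phi}(C)) = \phi^*(C)$, and therefore $\phi^*(\mathfrak{R}_{\phi}(\mathfrak{L}_{\phi}(C))) = \phi^*(C)$; since $\phi^*$ is injective, $\mathfrak{R}_{\phi}(\mathfrak{L}_{\phi}(C)) = C$.

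The main obstacle is the bookkeeping in the last paragraph: correctly tracking how annihilators behave under the non-self-adjoint map $\phi$, keeping straight which of $\phi$, $\phi^{-1}$, $\phi^*$, $(\phi^*)^{-1}$ appears where, and being careful with the $\ev$ identification of $A^n$ with its double character group. An alternative, perhaps cleaner, route that avoids some of this is purely a counting argument: one checks directly from the definitions that $C \subseteq \mathfrak{R}_{\phi}(\mathfrak{L}_{\phi}(C))$ (if $c \in C$ and $x \in \mathfrak{L}_{\phi}(C)$ then $\Phi(x,c) = 1$, which is exactly the condition for $c \in \mathfrak{R}_{\phi}(\mathfrak{L}_{\phi}(C))$), and then applies the size condition twice — once to $C$ and once to $\mathfrak{L}_{\phi}(C)$ — to get $\size{\mathfrak{R}_{\phi}(\mathfrak{L}_{\phi}(C))} = \size{A}^n/\size{\mathfrak{L}_{\phi}(C)} = \size{C}$, forcing equality. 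I would likely present this counting argument, since it mirrors the proof of the double annihilator statement in Corollary~\ref{cor:DoubleAnnihilator} and sidesteps the adjoint bookkeeping entirely; the containment $C \subseteq \mathfrak{L}_{\phi}(\mathfrak{R}_{\phi}(C))$ is symmetric, so the same two lines finish the second identity.
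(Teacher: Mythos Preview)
Your proposal is correct, and the approach you ultimately settle on---the counting argument showing $C \subseteq \mathfrak{R}_{\phi}(\mathfrak{L}_{\phi}(C))$ directly from the definitions and then invoking the size condition twice---is exactly the paper's proof. Your first route via annihilator bookkeeping also works (your identity $\pair{\pi}{\phi^{-1}(\mu)} = \pair{\mu}{(\phi^*)^{-1}(\pi)}$ is a clean consequence of Lemma~\ref{lem:SymDual}), but as you anticipated, the paper bypasses it in favor of the two-line counting argument.
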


\begin{proof}
One verifies the first two items using Lemma~\ref{lem:LeftRightDuals} and Corollary~\ref{cor:DoubleAnnihilator}.  For the last item, first show that $C$ is contained in the double dual, and then use the size condition to prove equality.
\end{proof}

The next proposition is a version of Proposition~\ref{prop:VanishingSums}.
\begin{prop}  \label{prop:VanishingSumsForPhi}
Let $\phi: A \ra \ah$ be a duality of $A$, extended to $A^n$, with associated inner product $\Phi$.  For any additive code $C \subseteq A^n$,
\[  \sum_{y \in C} \Phi(x,y) = \begin{cases}
\size{C}, & x \in \mathfrak{L}_\phi(C) , \\
0, & x \not\in \mathfrak{L}_\phi(C) ;
\end{cases} 
\quad
\sum_{x \in C} \Phi(x,y) = \begin{cases}
\size{C}, & y \in \mathfrak{R}_\phi(C) , \\
0, & y \not\in \mathfrak{R}_\phi(C) .
\end{cases} \]
\end{prop}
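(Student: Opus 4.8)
The plan is to deduce Proposition~\ref{prop:VanishingSumsForPhi} directly from Proposition~\ref{prop:VanishingSums} by transporting everything through the duality $\phi$. Recall that $\phi: A^n \ra \ah^n$ is an isomorphism, and by definition of the extended inner product together with \eqref{eqn:DiagExtension} we have $\Phi(x,y) = \pair{\phi(x)}{y}_{A^n}$ for all $x, y \in A^n$. So for fixed $x \in A^n$, the character $\phi(x) \in \ah^n$ satisfies
\[
\sum_{y \in C} \Phi(x,y) = \sum_{y \in C} \pair{\phi(x)}{y}_{A^n} .
\]
This is exactly the sum appearing in the first case of Proposition~\ref{prop:VanishingSums}, applied to the finite abelian group $A^n$, the subgroup $C \subseteq A^n$, and the character $\phi(x) \in \ah^n$.

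First I would invoke Proposition~\ref{prop:VanishingSums}: the sum equals $\size{C}$ if $\phi(x) \in (\ah^n : C)$ and $0$ otherwise. Then the only remaining task is to recognize the condition $\phi(x) \in (\ah^n : C)$ as $x \in \mathfrak{L}_\phi(C)$. But this is immediate from the first bullet of Lemma~\ref{lem:LeftRightDuals}, which states $\phi(\mathfrak{L}_\phi(C)) = (\ah^n : C)$; since $\phi$ is a bijection, $\phi(x) \in (\ah^n:C)$ if and only if $x \in \mathfrak{L}_\phi(C)$. This establishes the first displayed formula.

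For the second formula, I would argue symmetrically using $\phi^*$. By Lemma~\ref{lem:SymDual} (in its extended form on $A^n$), $\Phi(x,y) = \pair{\phi(x)}{y} = \pair{\phi^*(y)}{x}$, so for fixed $y$,
\[
\sum_{x \in C} \Phi(x,y) = \sum_{x \in C} \pair{\phi^*(y)}{x}_{A^n} ,
\]
which again is the sum from Proposition~\ref{prop:VanishingSums}, now with the character $\phi^*(y) \in \ah^n$. It equals $\size{C}$ precisely when $\phi^*(y) \in (\ah^n : C)$, and by the first bullet of Lemma~\ref{lem:LeftRightDuals} we have $\phi^*(\mathfrak{R}_\phi(C)) = (\ah^n : C)$, so this condition is equivalent to $y \in \mathfrak{R}_\phi(C)$. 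Alternatively, one could simply apply the already-proved first formula to the duality $\phi^*$ and use $\mathfrak{L}_{\phi^*}(C) = \mathfrak{R}_\phi(C)$ from the third bullet of Lemma~\ref{lem:LeftRightDuals}.

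There is no real obstacle here; the proposition is essentially a restatement of Proposition~\ref{prop:VanishingSums} under the identification furnished by the duality. The only point requiring a modicum of care is making sure the extension of $\phi$ (and of $\Phi$) to $A^n$ is used consistently and that the annihilator $(\ah^n : C)$ is formed inside $\ah^n$, not $\ah$ — but all of this machinery is already in place from Section~\ref{sec:AddDualCodes}. I would keep the written proof to one or two sentences, citing Proposition~\ref{prop:VanishingSums} and Lemma~\ref{lem:LeftRightDuals}.
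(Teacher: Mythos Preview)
Your proposal is correct and matches the paper's own proof essentially line for line: rewrite $\Phi(x,y)=\pair{\phi(x)}{y}$, invoke Proposition~\ref{prop:VanishingSums}, identify $\phi(x)\in(\ah^n:C)$ with $x\in\mathfrak{L}_\phi(C)$ via Lemma~\ref{lem:LeftRightDuals}, and then handle the second formula by applying the first to $\phi^*$. The paper's version is just the one-or-two-sentence compression you yourself suggest at the end.
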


\begin{proof}
In the first case, $\sum_{y \in C} \Phi(x,y) = \sum_{y \in C} \pair{\phi(x)}{y}$ for $x \in A^n$.  Using that $x \in \mathfrak{L}_\phi(C)$ if and only if $\phi(x) \in (\ah^n: C)$, the result follows from Proposition~\ref{prop:VanishingSums}.  The second case follows from applying the first case to the duality $\phi^*$.
\end{proof}

There are versions of the MacWilliams identities that hold using these dual codes.  This will be the topic of Section~\ref{sec:MWIds}.

Because $\mathfrak{L}_{\phi}(C), \mathfrak{R}_{\phi}(C) \subseteq A^n$, it is possible to define self-orthogonal and self-dual codes (with left-right modifiers):
\begin{itemize}
\item left self-orthogonal: $C \subseteq \mathfrak{L}_{\phi}(C)$;
\item right self-orthogonal: $C \subseteq \mathfrak{R}_{\phi}(C)$;
\item left self-dual: $C = \mathfrak{L}_{\phi}(C)$;
\item right self-dual: $C = \mathfrak{R}_{\phi}(C)$.
\end{itemize}
In fact, the left-right distinction is not needed, as the next result shows.

\begin{lem}
An additive code $C \subseteq A^n$ is left self-orthogonal if and only if $C$ is right self-orthogonal.  Similarly, $C$ is left self-dual if and only if $C$ is right self-dual.
\end{lem}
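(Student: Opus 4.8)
The plan is to show that left self-orthogonality and right self-orthogonality both admit a common reformulation in terms of the annihilator, and similarly for self-duality. First I would apply Lemma~\ref{lem:LeftRightDuals}: the condition $C \subseteq \mathfrak{L}_\phi(C)$ is equivalent, after applying the isomorphism $\phi$ (which is injective, indeed bijective), to $\phi(C) \subseteq \phi(\mathfrak{L}_\phi(C)) = (\ah^n : C)$. Unraveling, $\phi(C) \subseteq (\ah^n:C)$ says precisely that $\pair{\phi(c)}{c'} = 1$ for all $c, c' \in C$, i.e., $\Phi(c,c') = 1$ for all $c,c' \in C$. Symmetrically, $C \subseteq \mathfrak{R}_\phi(C)$ is equivalent, via $\phi^*$, to $\phi^*(C) \subseteq \phi^*(\mathfrak{R}_\phi(C)) = (\ah^n:C)$, which says $\pair{\phi^*(c)}{c'} = 1$ for all $c,c' \in C$, i.e., by Lemma~\ref{lem:SymDual}, $\pair{\phi(c')}{c} = \Phi(c',c) = 1$ for all $c, c' \in C$. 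But the two conditions ``$\Phi(c,c') = 1$ for all $c, c' \in C$'' and ``$\Phi(c',c) = 1$ for all $c,c' \in C$'' are literally the same statement (just rename the bound variables), so left self-orthogonal $\iff$ right self-orthogonal.

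For the self-dual case, I would first observe that the size condition in Proposition~\ref{prop:DualProperties} gives $\size{\mathfrak{L}_\phi(C)} = \size{A}^n/\size{C} = \size{\mathfrak{R}_\phi(C)}$, so all three of $C$, $\mathfrak{L}_\phi(C)$, $\mathfrak{R}_\phi(C)$ have the same cardinality. Hence $C = \mathfrak{L}_\phi(C)$ is equivalent to $C \subseteq \mathfrak{L}_\phi(C)$ (a containment of equal-size finite sets is an equality), i.e., to left self-orthogonality; likewise $C = \mathfrak{R}_\phi(C) \iff C \subseteq \mathfrak{R}_\phi(C) \iff$ right self-orthogonality. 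Combining with the first part, left self-dual $\iff$ left self-orthogonal $\iff$ right self-orthogonal $\iff$ right self-dual.

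Alternatively, and perhaps more cleanly, one can argue directly from the third bullet of Lemma~\ref{lem:LeftRightDuals}: $\mathfrak{L}_{\phi^*}(C) = \mathfrak{R}_\phi(C)$ and $\mathfrak{R}_{\phi^*}(C) = \mathfrak{L}_\phi(C)$. So $C$ is right self-orthogonal for $\phi$ iff $C \subseteq \mathfrak{R}_\phi(C) = \mathfrak{L}_{\phi^*}(C)$ iff $C$ is left self-orthogonal for $\phi^*$ — but then one still needs to see this is the same as left self-orthogonality for $\phi$, which reduces back to the symmetry-of-the-bilinear-condition observation above. I expect the only real content — and the one possible pitfall — is recognizing that the asymmetry between the ``left'' and ``right'' conditions is illusory: the defining condition of self-orthogonality, $\Phi$ vanishing on $C \times C$, is manifestly symmetric in the two slots when both range over the same subgroup $C$, regardless of whether $\Phi$ itself is a symmetric inner product. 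Everything else is bookkeeping with Lemmas~\ref{lem:SymDual} and~\ref{lem:LeftRightDuals} and the size condition.

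\begin{proof}
By the first item of Lemma~\ref{lem:LeftRightDuals} and the bijectivity of $\phi$, the inclusion $C \subseteq \mathfrak{L}_\phi(C)$ is equivalent to $\phi(C) \subseteq \phi(\mathfrak{L}_\phi(C)) = (\ah^n:C)$, which unravels to the condition that $\Phi(c,c')=1$ for all $c, c' \in C$. Applying the same reasoning to $\phi^*$, using $\phi^*(\mathfrak{R}_\phi(C)) = (\ah^n:C)$, the inclusion $C \subseteq \mathfrak{R}_\phi(C)$ is equivalent to $\pair{\phi^*(c)}{c'} = 1$ for all $c, c' \in C$, and by Lemma~\ref{lem:SymDual} this says $\Phi(c',c) = \pair{\phi(c')}{c} = 1$ for all $c, c' \in C$. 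These two conditions are identical (interchange the names of the bound variables $c$ and $c'$), so $C$ is left self-orthogonal if and only if $C$ is right self-orthogonal.

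By Proposition~\ref{prop:DualProperties}, $\size{\mathfrak{L}_\phi(C)} = \size{A}^n/\size{C} = \size{\mathfrak{R}_\phi(C)}$. Hence the inclusion $C \subseteq \mathfrak{L}_\phi(C)$ between finite sets of equal cardinality forces equality, so $C$ is left self-dual if and only if $C$ is left self-orthogonal; likewise $C$ is right self-dual if and only if $C$ is right self-orthogonal. Combining this with the previous paragraph, $C$ is left self-dual if and only if $C$ is right self-dual.
\end{proof}
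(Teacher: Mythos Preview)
Your argument for the self-orthogonal equivalence is correct and takes a somewhat different route than the paper's: you unwind both $C \subseteq \mathfrak{L}_\phi(C)$ and $C \subseteq \mathfrak{R}_\phi(C)$ to the single symmetric condition $\Phi(c,c')=1$ for all $c,c' \in C$. The paper instead applies $\mathfrak{R}_\phi$ to the inclusion $C \subseteq \mathfrak{L}_\phi(C)$, reverses it via Lemma~\ref{lem:LeftRightDuals}, and uses $\mathfrak{R}_\phi(\mathfrak{L}_\phi(C)) = C$ from Proposition~\ref{prop:DualProperties} to obtain $C \subseteq \mathfrak{R}_\phi(C)$. Both are short; yours makes the underlying symmetry explicit, while the paper's approach extends verbatim to the self-dual case.

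Your self-dual argument, however, contains a genuine error. You assert (in the plan) that ``all three of $C$, $\mathfrak{L}_\phi(C)$, $\mathfrak{R}_\phi(C)$ have the same cardinality'' and (in the proof) that ``$C$ is left self-dual if and only if $C$ is left self-orthogonal.'' Neither holds for a general subgroup $C$: the size condition gives only $\size{\mathfrak{L}_\phi(C)} = \size{A}^n/\size{C} = \size{\mathfrak{R}_\phi(C)}$, and this equals $\size{C}$ only when $\size{C}^2 = \size{A}^n$. Any properly self-orthogonal code (for instance $C = \{0\}$ when $A \neq \{0\}$) is self-orthogonal but not self-dual, so the claimed biconditional is false. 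The fix is easy: assume $C$ is left self-dual. Then $\size{C} = \size{\mathfrak{L}_\phi(C)}$ \emph{does} hold, hence also $\size{C} = \size{\mathfrak{R}_\phi(C)}$; and $C$ is left self-orthogonal, hence (by your first paragraph) right self-orthogonal, so the inclusion $C \subseteq \mathfrak{R}_\phi(C)$ between equal-size finite sets is an equality. Alternatively, and more directly, follow the paper: if $C = \mathfrak{L}_\phi(C)$, apply $\mathfrak{R}_\phi$ to both sides and use double duality to get $\mathfrak{R}_\phi(C) = \mathfrak{R}_\phi(\mathfrak{L}_\phi(C)) = C$.
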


\begin{proof}
Suppose $C \subseteq \mathfrak{L}(C)$.  Take the right dual of both sides and use Lemma~\ref{lem:LeftRightDuals} and Proposition~\ref{prop:DualProperties}.  Then $C = \mathfrak{R}(\mathfrak{L}(C)) \subseteq \mathfrak{R}(C)$.  The other proofs are similar.
\end{proof}

\begin{ex}  \label{ex:Klein4}
Let $A$ be the Klein $4$-group, viewed as row vectors $a = [ a_1, a_2 ]$ over the binary field $\F_2$. 
The six dualities of $A$ appear in Example~\ref{ex:Klein-4-group}, each having the form
\[  \Phi_P(a,b)= \pair{\phi_P(a)}{b} = (-1)^{a P b^\top} \in \C^\times, \quad a, b \in A, \]
for $P \in \GL(2,\F_2)$.

There are three subgroups of $A$ of order two.  (We will write elements without brackets.)  The subgroups are:
\[ C_0 = \{ 00, 10 \}, \quad C_1 = \{ 00, 11\}, \quad C_{\infty} = \{ 00, 01 \} . \]
The dual codes of these three subgroups will also have order two.  For each matrix $P \in \GL(2,\F_2)$, here are the left and right dual codes.
\[  \begin{array}{c|cc|cc|cc}
P & \mathfrak{L}(C_0) & \mathfrak{R}(C_0) & \mathfrak{L}(C_1) & \mathfrak{R}(C_1) & \mathfrak{L}(C_\infty) & \mathfrak{R}(C_\infty) \\ \hline 
\begin{bsmallmatrix}
1 & 0 \\ 0 & 1
\end{bsmallmatrix} & C_\infty & C_\infty & C_1 & C_1 & C_0 & C_0 \\
\begin{bsmallmatrix}
0 & 1 \\ 1 & 1
\end{bsmallmatrix} & C_0 & C_0 & C_\infty & C_\infty & C_1 & C_1 \\
\begin{bsmallmatrix}
1 & 1 \\ 1 & 0
\end{bsmallmatrix} & C_1 & C_1 & C_0 & C_0 & C_\infty & C_\infty \\
\begin{bsmallmatrix}
0 & 1 \\ 1 & 0
\end{bsmallmatrix} & C_0 & C_0 & C_1 & C_1 & C_\infty & C_\infty \\ \hline
\begin{bsmallmatrix}
1 & 1 \\ 0 & 1
\end{bsmallmatrix} & C_\infty & C_1 & C_0 & C_\infty & C_1 & C_0 \\
\begin{bsmallmatrix}
1 & 0 \\ 1 & 1
\end{bsmallmatrix} & C_1 & C_\infty & C_\infty & C_0 & C_0 & C_1 
\end{array}  \]

For each of the first three matrices $P$, there is exactly one self-dual code (with a different self-dual code for each $P$).  For $P = \begin{bsmallmatrix}
0 & 1 \\ 1 & 0
\end{bsmallmatrix}$, all three codes are self-dual.  For the two matrices $P$ that are not symmetric, there are no self-dual codes and the left/right dual codes are different.  We will come back to the self-dual codes in Example~\ref{ex:Klein4-bis}.
\end{ex}

\begin{ex}
Let $A = \F_2^3$.  There are $\size{\GL(3,\F_2)} = 168$ dualities, of which $28$ (one-sixth of the total) are symmetric, \eqref{eqn:SymInvertCount}.

Pick $P \in \GL(3, \F_2)$ that is not symmetric, say
\[  P = \begin{bmatrix}
0 & 0 & 1 \\
1 & 1 & 0 \\
1 & 0 & 0
\end{bmatrix}  . \]

Let $C = \{ 000, 100 \}$. Then $\mathfrak{L}(C) = \{ 000, 100, 011, 111\}$, while $\mathfrak{R}(C) = \{ 000, 100, 010, 110 \}$.  We have $C = \mathfrak{L}(C) \cap \mathfrak{R}(C)$, but $\mathfrak{L}(C) \neq \mathfrak{R}(C)$.  The code $C$ is left and right self-orthogonal, but the left/right dual codes are different.
\end{ex}

We know from Proposition~\ref{prop:DualProperties} that for subgroups $H,K \subseteq A$, if $K = \mathfrak{L}_\phi(H)$ for some duality $\phi$ of $A$, then $\size{H} \cdot \size{K} = \size{A}$.  The converse was addressed, for elementary abelian $2$-groups, in \cite[Theorem~15]{MR4770737}, 
and, for arbitrary finite abelian groups, in \cite[Theorem~5]{MR4332568}.  The statement of the latter result turns out to be too optimistic, as will be seen in the next several results.

\begin{prop}  \label{prop:direct-sum-duals} 
Let $A$ be a finite abelian group with subgroups $H,K \subseteq A$ such that $A = H \oplus K$.  Then there exists a symmetric duality $\phi: A \ra \ah$ such that $K = \mathfrak{L}(H) =  \mathfrak{R}(H)$ and $H = \mathfrak{L}(K) = \mathfrak{R}(K)$.
\end{prop}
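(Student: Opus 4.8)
The plan is to build $\phi$ as a ``diagonal'' product of symmetric dualities of the two summands. First I would use the decomposition $A = H \oplus K$ to identify $A$ with $H \times K$: each $a \in A$ is uniquely $a = h + k$ with $h \in H$, $k \in K$, and under this identification $H$ corresponds to $H \times \{0\}$ and $K$ to $\{0\} \times K$. By Lemma~\ref{lem:Products}, $\ah \cong \charac{H} \times \charac{K}$ compatibly with this picture.

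Next I would invoke Proposition~\ref{prop:SymDualsExist} to choose symmetric dualities $\phi_H \colon H \ra \charac{H}$ and $\phi_K \colon K \ra \charac{K}$, with associated (symmetric) inner products $\Phi_H$ and $\Phi_K$. By Lemma~\ref{lem:DualProduct}, $\phi := \phi_H \times \phi_K$ is a symmetric duality of $A$, and unwinding the isomorphism of Lemma~\ref{lem:Products} shows that its inner product is $\Phi(h_1 + k_1,\, h_2 + k_2) = \Phi_H(h_1, h_2)\,\Phi_K(k_1, k_2)$ for $h_1, h_2 \in H$ and $k_1, k_2 \in K$.

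It then remains to compute the orthogonals. For $x = h' + k' \in A$ and $h \in H$ one has $\Phi(x, h) = \Phi_H(h', h)\,\Phi_K(k', 0) = \Phi_H(h', h)$; since $\Phi_H$ is nondegenerate, this equals $1$ for every $h \in H$ precisely when $h' = 0$, so $\mathfrak{L}_\phi(H) = K$. As $\phi$ is symmetric, $\mathfrak{R}_\phi(H) = \mathfrak{L}_\phi(H) = K$ by the last item of Lemma~\ref{lem:LeftRightDuals}, and exchanging the roles of $H$ and $K$ (using the third inner-product axiom for $\Phi_K$) gives $\mathfrak{L}_\phi(K) = \mathfrak{R}_\phi(K) = H$. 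Alternatively, once $K \subseteq \mathfrak{L}_\phi(H)$ is checked directly one can conclude equality from the size condition $\size{H}\cdot\size{K} = \size{A}$ via Proposition~\ref{prop:DualProperties} or Corollary~\ref{cor:DoubleAnnihilator}.

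I do not expect a genuine obstacle here: the argument is essentially a short computation. The only thing requiring care is the bookkeeping of the identification $A \cong H \times K$ and the verification that the product duality's inner product restricts as claimed to the coordinate subgroups. The substantive point of the statement is rather that it requires $H$ and $K$ to be complementary \emph{summands}, not merely subgroups of complementary order — which is exactly the gap in the ``too optimistic'' converse alluded to just before the statement.
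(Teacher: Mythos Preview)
Your proposal is correct and follows essentially the same approach as the paper: form $\phi = \phi_H \times \phi_K$ from symmetric dualities of the summands via Lemma~\ref{lem:DualProduct}, then verify that $H$ and $K$ are mutual orthogonals. The only cosmetic difference is that the paper checks the inclusions $K \subseteq \mathfrak{L}(H)$, etc., and invokes the size condition for equality, whereas you argue equality directly from nondegeneracy of $\Phi_H$ (and then mention the size-condition route as an alternative anyway).
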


\begin{proof}
The direct sum hypothesis implies $\size{A} = \size{H} \cdot \size{K}$.  Write elements of $A = H \oplus K$ as pairs $(h,k)$ with $h \in H$ and $k \in K$.  

Let $\phi_H: H \ra \charac{H}$ and $\phi_K: K \ra \charac{K}$ be symmetric dualities of $H$ and $K$.  Define $\phi: A \ra \ah$ to be $\phi_H \times \phi_K$, Lemma~\ref{lem:DualProduct}.  That is,
\[  \pair{\phi(h,k)}{(h',k')}_A = \pair{\phi_H(h)}{h'}_H \, \pair{\phi_K(k)}{k'}_K \in \C^\times . \]
Then direct calculation yields 
\begin{align*}
\pair{\phi(0,k)}{(h',0)}_A &= \pair{\phi_H(0)}{h'}_H \, \pair{\phi_K(k)}{0}_K = 1 , \\
\pair{\phi(h,0)}{(0,k')}_A &= \pair{\phi_H(h)}{0}_H \, \pair{\phi_K(0)}{k'}_K = 1 ,
\end{align*}
so that $K \subseteq \mathfrak{L}(H)$ and $K \subseteq \mathfrak{R}(H)$ as well as $H \subseteq \mathfrak{L}(K)$ and $H \subseteq \mathfrak{R}(K)$.  Equality follows by the size condition, Proposition~\ref{prop:DualProperties}.
\end{proof}

There are two situations where Proposition~\ref{prop:direct-sum-duals} can be generalized to any two subgroups satisfying the size condition: cyclic $p$-groups and 
elementary abelian $p$-groups.  

\begin{prop}
Let $A = \Z/p^k\Z$, for some prime $p$.  Suppose $H, K \subseteq A$ are subgroups of $A$ that satisfy $\size{H} \cdot \size{K} = \size{A}$.  Then $\mathfrak{L}(H) = \mathfrak{R}(H) = K$ and $\mathfrak{L}(K) = \mathfrak{R}(K) = H$ for every duality $\phi$ of $A$.
\end{prop}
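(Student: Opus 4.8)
The plan is to reduce the statement to two elementary facts: that every duality of a cyclic group is symmetric, and that the subgroup lattice of $\Z/p^k\Z$ is a chain. Once these are in hand, the size condition for dual codes does all the remaining work, so I do not expect any real obstacle; the only point that warrants an explicit word is the chain structure of the subgroup lattice.

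First I would dispose of the left/right distinction. By Lemma~\ref{lem:DualCyclic}, any duality $\phi$ of the cyclic group $A$ is symmetric, so the last bullet of Lemma~\ref{lem:LeftRightDuals} gives $\mathfrak{L}_\phi(C) = \mathfrak{R}_\phi(C)$ for every additive code $C \subseteq A$. Thus it suffices to show $\mathfrak{L}_\phi(H) = K$ and $\mathfrak{L}_\phi(K) = H$ for an arbitrary duality $\phi$.

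Next I would recall that for each $j$ with $0 \leq j \leq k$ the group $A = \Z/p^k\Z$ has exactly one subgroup of order $p^j$, namely $p^{k-j}A$; hence a subgroup of $A$ is determined by its order. Writing $\size{H} = p^j$, the hypothesis $\size{H} \cdot \size{K} = \size{A} = p^k$ forces $\size{K} = p^{k-j}$. By Proposition~\ref{prop:DualProperties}, $\mathfrak{L}_\phi(H)$ is a subgroup of $A$ satisfying $\size{\mathfrak{L}_\phi(H)} \cdot \size{H} = \size{A}$, so $\size{\mathfrak{L}_\phi(H)} = p^{k-j} = \size{K}$, and uniqueness of subgroups of a given order yields $\mathfrak{L}_\phi(H) = K$. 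Interchanging the roles of $H$ and $K$ gives $\mathfrak{L}_\phi(K) = H$, completing the argument. The key step is simply the observation that a cyclic $p$-group has a unique subgroup of each admissible order, which turns the size condition into an equality of subgroups.
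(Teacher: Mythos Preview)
Your proof is correct and matches the paper's approach: both rest on the observation that $\Z/p^k\Z$ has a unique subgroup of each order $p^j$, so the size condition of Proposition~\ref{prop:DualProperties} alone pins down the dual codes. The paper omits your appeal to Lemma~\ref{lem:DualCyclic}, since the size condition already governs both $\mathfrak{L}_\phi$ and $\mathfrak{R}_\phi$ and uniqueness handles each separately; but this is a cosmetic difference, not a substantive one.
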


\begin{proof}
The group $A$ is very special: for any $j=0, 1, \ldots, k$, there is a unique subgroup $A_j$ of order $p^j$.  Subgroups that satisfy $\size{H} \cdot \size{K} = \size{A}$ are of the form $H=A_j$ and $K = A_{k-j}$ for some $j=0, 1, \ldots, k$.  The size condition for dual codes, Proposition~\ref{prop:DualProperties}, forces $A_j$ and $A_{k-j}$ to be dual codes for any duality.
\end{proof}

\begin{thm}  \label{thm:SizeDuals}
Let $A$ be an elementary abelian $p$-group.  Suppose $H, K \subseteq A$ are subgroups of $A$ such that $\size{H} \cdot \size{K} = \size{A}$.  Then there exists a symmetric duality $\phi$ of $A$ such that $\mathfrak{L}(H) = \mathfrak{R}(H) = K$ and $\mathfrak{L}(K) = \mathfrak{R}(K) = H$.
\end{thm}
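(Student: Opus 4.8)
The plan is to translate the statement into linear algebra and then write down an explicit symmetric, nondegenerate bilinear form. Identify $A$ with $\F_p^n$ via a basis, so that $H$ and $K$ become subspaces and the hypothesis $\size{H}\cdot\size{K}=\size{A}$ reads $\dim H + \dim K = n$. By (the argument of) Example~\ref{ex:elem-ab-p-groups}, the symmetric dualities of $A$ are exactly the maps $\pair{\phi(a)}{b} = \zeta_p^{aPb^\top}$ with $P$ a symmetric invertible matrix over $\F_p$, and for such a $\phi$ one has $\mathfrak{L}_\phi(C) = \mathfrak{R}_\phi(C) = C^{\perp}$, the orthogonal complement of $C$ under the nondegenerate symmetric bilinear form $B(a,b) = aPb^\top \in \F_p$. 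So it is enough to produce a nondegenerate symmetric bilinear form $B$ on $A$ with $H^{\perp} = K$ (all orthogonal complements taken with respect to $B$); then $K^{\perp} = H^{\perp\perp} = H$ automatically (equivalently, apply double duality, Proposition~\ref{prop:DualProperties}), and reading $B$ off in a suitable basis produces the matrix $P$ and hence $\phi$.

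To build $B$, first choose a basis of $A$ adapted to the configuration of $H$ and $K$. Put $W = H \cap K$, $w = \dim W$, and write $H = W \oplus H'$, $K = W \oplus K'$. Then $H + K = W \oplus H' \oplus K'$ has dimension $n - w$, so any complement $U$ of $H + K$ in $A$ has $\dim U = w$, and $A = (W \oplus U) \oplus H' \oplus K'$. Now define $B$ to be the orthogonal direct sum of (i) the hyperbolic form on $W \oplus U$, in which $W$ and $U$ are each totally isotropic and are perfectly paired against each other (matrix $\left[\begin{smallmatrix} 0 & I_w \\ I_w & 0\end{smallmatrix}\right]$), (ii) any nondegenerate symmetric form on $H'$, and (iii) any nondegenerate symmetric form on $K'$ (for (ii) and (iii) the identity matrices serve, in every characteristic, including $p=2$). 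This $B$ is symmetric and nondegenerate, and it vanishes on $H \times K$, because $B(W,W)$, $B(W,K')$, $B(H',W)$, $B(H',K')$ are all zero by construction; hence $K \subseteq H^{\perp}$, and $\dim H^{\perp} = n - \dim H = \dim K$ forces $H^{\perp} = K$. A short direct check confirms there is nothing more to do: if $v \in H^{\perp}$, then pairing $v$ against $W \subseteq H$ kills the $U$-component of $v$ (as $U$ is perfectly paired with $W$), and pairing $v$ against $H' \subseteq H$ kills the $H'$-component of $v$ (as the form on $H'$ is nondegenerate), so $v$ lies in $W \oplus K' = K$.

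The one point that needs care — everything else is bookkeeping — is the tension between the two requirements on $B$: it must vanish identically on $H \times K$, which forces $W = H \cap K$ to be a totally isotropic subspace, yet it must stay nondegenerate. Pairing $W$ off against an equidimensional complement $U$ of $H + K$ via a hyperbolic block is precisely what reconciles these, and the identity $\dim U = \dim W$, which is where the hypothesis $\dim H + \dim K = n$ enters, is what makes that hyperbolic block available.
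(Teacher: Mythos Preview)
Your proof is correct and follows essentially the same route as the paper: choose a basis adapted to $W = H \cap K$, complements $H'$, $K'$ of $W$ in $H$, $K$, and a complement $U$ of $H+K$ in $A$, then build the form as a hyperbolic pairing on $W \oplus U$ plus identity forms on $H'$ and $K'$. The paper expresses the same construction via an explicit permutation of dual-basis characters rather than in block-matrix language, but the resulting bilinear form is identical.
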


\begin{proof}
View $A$ as $\F_p^n$ and $H, K$ as linear subspaces.  Write $h = \dim H$ and $k = \dim K$.  The cardinality hypothesis says that $h + k = n$.

If $i=\dim(H \cap K) > 0$, then choose a basis $e_1, e_2, \ldots, e_i$ of $H \cap K$.  (If $i=0$, the basis of $H \cap K$ is empty.)  Choose elements $e_{i+1}, \ldots, e_h$ so that $e_1, \ldots, e_h$ is a basis of $H$.  Choose $e_{h+1}, \ldots, e_{h+k-i}$ so that $e_1, \ldots, e_i, e_{h+1}, \ldots, e_{h+k-i}$ is a basis of $K$.  Then $e_1, \ldots, e_{h+k-i}$ is a basis of $H+K$.  Choose $e_{h+k-i+1}, \ldots, e_n$, so that $e_1, \ldots, e_n$ is a basis of $A$.

Form the dual basis $\pi_1, \ldots, \pi_n$ of $\ah$ with the property that 
\[  \pair{\pi_j}{e_\ell} = \begin{cases}
\zeta_p, & j = \ell, \\
1, & j \neq \ell ,
\end{cases} \]
for all $j ,\ell = 1, 2, \ldots, n$, where $\zeta_p$ is a primitive $p$th root of $1$ in $\C^\times$.

We define $\phi: A \ra \ah$ by specifying the values of $\phi$ on the basis $e_1, \ldots, e_n$ of $A$.  For  convenience, set $c = h + k -i$.  Note that $c+i = h + k = n$.  Define
\[  \phi(e_j) = \begin{cases}
\pi_{c+j}, & j = 1, 2, \ldots, i, \\
\pi_j, & j= i+1, i+2, \ldots, c, \\
\pi_{j-c}, & j=c+1, c+2, \ldots, c+i .
\end{cases}  \]
This $\phi$ takes a basis of $A$ to a basis of $\ah$, so $\phi$ defines a duality of $A$.  By examining cases, one verifies that $\phi$ is symmetric and, using the size condition of Proposition~\ref{prop:DualProperties}, that $H$ and $K$ are duals of each other.
\end{proof}

The next example shows that Theorem~\ref{thm:SizeDuals} does not generalize further, contrary to \cite[Theorem~5]{MR4332568}.

\begin{ex}  \label{ex:4+2}
Let $A = \Z/2\Z \times \Z/4\Z$, so that $\size{A} = 8$.  Example~\ref{ex:2by4} displays the dualities of $A$.  Here, we determine the dual codes of the subgroups of $A$ with respect to those dualities.

There are three subgroups of $A$ having order $2$: $\ell_0 = \{00, 10\}$, $\ell_1 = \{00, 12\}$, and $\ell_\infty = \{00, 02\}$.  There are also three subgroups of order $4$: $C_1 = \{00, 01, 02, 03\}$, $C_2 = \{00, 11, 02, 13\}$, and $S = \{00,10,02,12\}$; $C_1, C_2$ are cyclic groups, while $S$, the socle of $A$, is elementary abelian.

The following table displays the left and right dual codes of the subgroups of order $2$ with respect to the various dualities.

\[  \begin{array}{c|cccccc}
\phi & \mathfrak{L}(\ell_0) & \mathfrak{R}(\ell_0) & \mathfrak{L}(\ell_1) & \mathfrak{R}(\ell_1) & \mathfrak{L}(\ell_\infty) & \mathfrak{R}(\ell_\infty) \\ \hline
\phi_0 & C_1 & C_1 & C_2 & C_2 & S & S  \\
\phi_1 & C_2 & C_2 & C_1 & C_1 & S & S  \\
\phi_2 & C_1 & C_1 & C_2 & C_2 & S & S  \\
\phi_3 & C_2 & C_2 & C_1 & C_1 & S & S  \\
\phi_4 & C_2 & C_1 & C_1 & C_2 & S & S   \\
\phi_5 & C_1 & C_2 & C_2 & C_1 & S & S   \\
\phi_6 & C_2 & C_1 & C_1 & C_2 & S & S   \\
\phi_7 & C_1 & C_2 & C_2 & C_1 & S & S  
\end{array}  \]
By using double duals, Proposition~\ref{prop:DualProperties}, one can determine the dual codes of $C_1, C_2, S$.

Note that $\size{\ell_\infty} \cdot \size{C_1} = \size{A}$, but there is no duality with $\mathfrak{L}(\ell_\infty) = C_1$, contrary to \cite[Theorem~5]{MR4332568}.
\end{ex}

This example will be generalized in Theorem~\ref{thm:KerImDuals}; cf., Remark~\ref{rem:KerImExample}.

\section{Structural questions}  \label{sec:StructureQ}
In this section, we study the problem of understanding how the dual codes of a subgroup $H \subseteq A$ depend on the choice of duality.  On one extreme, there are elementary abelian $p$-groups, where Theorem~\ref{thm:SizeDuals} says that any two subgroups satisfying the size condition are dual codes under \emph{some} duality.  On the other extreme is Example~\ref{ex:4+2}, which provides examples of subgroups of $A = \Z/2\Z \times \Z/4 \Z$ that are dual codes for \emph{every} duality.  We will find that the dependence of the dual codes on the duality is intimately related to the action on subgroups of the group of automorphisms. 

Let $A$ be a finite abelian group.  
The automorphism group $\aut(A)$ acts on $A$.  We will write this action as a right action, with inputs written on the left.  
Let $\mathscr{S}_d$ be the set of all subgroups of $A$ having order $d$.  Then $\aut(A)$ also acts on $\mathscr{S}_d$ on the right.    For a subgroup $H \subseteq A$ with $\size{H}=d$, i.e., $H \in \mathscr{S}_d$, let $\stab(H)$ be its stabilizer subgroup:
\[  \stab(H) = \{ \tau \in \aut(A): H \tau = H \} . \]

\begin{lem}  \label{lem:stab-mults}
Let $A$ be a finite abelian group.  Take any subgroup $H \subseteq A$, any automorphism $\tau \in \aut(A)$, and any duality $\phi$ of $A$.  Then,  $\mathfrak{R}_{\phi}(H \tau) = \mathfrak{R}_{\phi}(H)$ if and only if $\tau \in \stab(H)$.  Likewise, $\mathfrak{L}_{\phi}(H \tau) = \mathfrak{L}_{\phi}(H)$ if and only if $\tau \in \stab(H)$.
\end{lem}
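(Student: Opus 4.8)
The plan is to transport the identity $\mathfrak{R}_\phi(H\tau)=\mathfrak{R}_\phi(H)$ across the bijection $\phi^*:A\ra\ah$ into a statement about annihilators in $\ah$, and then apply double annihilation to land back in $A$. First note that one direction is trivial: $\tau\in\stab(H)$ means exactly $H\tau=H$ as subsets of $A$, and then $\mathfrak{R}_\phi(H\tau)=\mathfrak{R}_\phi(H)$ (and similarly for $\mathfrak{L}$) holds on the nose. So the content lies in the forward implication.

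For the forward implication with $\mathfrak{R}$, I would first observe that $H\tau$ is again a subgroup of $A$ because $\tau$ is an automorphism, so Lemma~\ref{lem:LeftRightDuals} applies to both $H$ and $H\tau$ (the $n=1$ case), giving $\phi^*(\mathfrak{R}_\phi(H))=(\ah:H)$ and $\phi^*(\mathfrak{R}_\phi(H\tau))=(\ah:H\tau)$. Since $\phi^*$ is a bijection, the hypothesis $\mathfrak{R}_\phi(H\tau)=\mathfrak{R}_\phi(H)$ is equivalent to $(\ah:H\tau)=(\ah:H)$. Now apply the operator $K\mapsto(A:K)$ to both sides and invoke the double-annihilator identity of Corollary~\ref{cor:DoubleAnnihilator}, namely $(A:(\ah:H))=H$ and $(A:(\ah:H\tau))=H\tau$; this yields $H\tau=H$, i.e.\ $\tau\in\stab(H)$, as desired.

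For the $\mathfrak{L}$ statement I would not redo the argument but instead use Lemma~\ref{lem:LeftRightDuals} again: $\mathfrak{L}_\phi=\mathfrak{R}_{\phi^*}$, and $\phi^*$ is itself a duality of $A$ (by \eqref{eqn:DoubleDuals} and the discussion preceding Lemma~\ref{lem:SymDual}), so the equivalence $\mathfrak{L}_\phi(H\tau)=\mathfrak{L}_\phi(H)\iff\tau\in\stab(H)$ is just the already-proved $\mathfrak{R}$-statement applied to the duality $\phi^*$. Alternatively one can rerun the same two steps with the identity $\phi(\mathfrak{L}_\phi(C))=(\ah:C)$ in place of $\phi^*(\mathfrak{R}_\phi(C))=(\ah:C)$. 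There is no real obstacle here; the only points meriting a word of care are that $H\tau$ is a subgroup, so that both Lemma~\ref{lem:LeftRightDuals} and the double-annihilator identity are legitimately applicable to it, and that the entire equivalence is powered by nothing more than the injectivity of $\phi$ (resp.\ $\phi^*$) and double annihilation.
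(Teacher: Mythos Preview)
Your argument is correct and follows the same underlying idea as the paper: the dual-code operator is injective on subgroups, so equality of duals forces $H\tau=H$. The only difference is packaging: the paper invokes the double dual property $\mathfrak{L}_\phi(\mathfrak{R}_\phi(C))=C$ from Proposition~\ref{prop:DualProperties} directly (apply $\mathfrak{L}_\phi$ to both sides and done), whereas you transport through $\phi^*$ to annihilators in $\ah$ and then use Corollary~\ref{cor:DoubleAnnihilator}---which is essentially how Proposition~\ref{prop:DualProperties} is proved in the first place, so you are unpacking one level deeper than necessary.
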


\begin{proof}
By the double dual property, $\mathfrak{R}_{\phi}(H \tau) = \mathfrak{R}_{\phi}(H)$ if and only if $H \tau = H$.  The same reasoning applies to left dual codes.
\end{proof}

\begin{lem}  \label{lem:dual-of-aut-mult}
Let $A$ be a finite abelian group, with subgroup $H \subseteq A$.
Suppose dualities $\phi_1, \phi_2$ of $A$ satisfy $\phi_2 = \phi_1 \circ \tau$ for some $\tau \in \aut(A)$.  Then $\mathfrak{R}_{\phi_2}(H) = \mathfrak{R}_{\phi_1}(H \tau)$ and $\mathfrak{L}_{\phi_2}(H) \tau = \mathfrak{L}_{\phi_1}(H)$.
\end{lem}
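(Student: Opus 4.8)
The plan is to reduce everything to the defining inner‑product conditions and the behavior of $\phi_i$ under precomposition with $\tau$. Recall that for a duality $\phi$ of $A$ (extended diagonally to $A^n$, though here we work with $H \subseteq A$ directly), $x \in \mathfrak{R}_\phi(H)$ means $\pair{\phi(h)}{x} = 1$ for all $h \in H$, and $x \in \mathfrak{L}_\phi(H)$ means $\pair{\phi(x)}{h} = 1$ for all $h \in H$. The hypothesis $\phi_2 = \phi_1 \circ \tau$ means $\phi_2(a) = \phi_1(\tau(a)) = \phi_1(a\tau)$ for all $a \in A$, writing the action of $\aut(A)$ on the right as the paper does.

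For the first identity, I would compute directly: $x \in \mathfrak{R}_{\phi_2}(H)$ iff $\pair{\phi_2(h)}{x} = 1$ for all $h \in H$, iff $\pair{\phi_1(h\tau)}{x} = 1$ for all $h \in H$. As $h$ ranges over $H$, the element $h\tau$ ranges over $H\tau$ (since $\tau$ is an automorphism), so this condition is equivalent to $\pair{\phi_1(h')}{x} = 1$ for all $h' \in H\tau$, i.e., $x \in \mathfrak{R}_{\phi_1}(H\tau)$. Hence $\mathfrak{R}_{\phi_2}(H) = \mathfrak{R}_{\phi_1}(H\tau)$.

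For the second identity, the left‑orthogonal puts the variable in the first slot, so the reindexing trick does not apply to $H$ directly; instead one must move $\tau$ onto $x$. I would argue: $x \in \mathfrak{L}_{\phi_2}(H)$ iff $\pair{\phi_2(x)}{h} = 1$ for all $h \in H$, iff $\pair{\phi_1(x\tau)}{h} = 1$ for all $h \in H$, iff $x\tau \in \mathfrak{L}_{\phi_1}(H)$. Thus $\mathfrak{L}_{\phi_2}(H) = \{ x \in A : x\tau \in \mathfrak{L}_{\phi_1}(H)\} = \mathfrak{L}_{\phi_1}(H)\tau^{-1}$, which rearranges to $\mathfrak{L}_{\phi_2}(H)\tau = \mathfrak{L}_{\phi_1}(H)$, as claimed. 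The only thing requiring the slightest care is the bookkeeping of left versus right actions and which slot of $\Phi$ (equivalently, which argument of $\phi$) absorbs $\tau$; there is no real obstacle, as both identities are immediate once the substitution $\phi_2(a) = \phi_1(a\tau)$ is made and one remembers that $\tau$ permutes $H$ among its images. (One could also deduce the second identity from the first by passing to $\phi^*$ via Lemma~\ref{lem:LeftRightDuals}, noting $\phi_2^* = (\phi_1 \circ \tau)^*$ relates to $\phi_1^*$ appropriately, but the direct computation is cleaner.)
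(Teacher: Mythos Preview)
Your proof is correct and follows essentially the same approach as the paper's: both argue by direct computation from the defining conditions, substituting $\phi_2(a) = \phi_1(a\tau)$ and tracking where the $\tau$ lands in each case. Your write-up is slightly more explicit (e.g., spelling out that $h\tau$ ranges over $H\tau$ as $h$ ranges over $H$), but the substance is identical.
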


\begin{proof}
Calculate:
\begin{align*}
\mathfrak{R}_{\phi_2}(H) &= \{ y \in A: \pair{\phi_2(h)}{y} = 0 \text{ for all $h \in H$} \} \\
&= \{ y \in A: \pair{\phi_1(h \tau)}{y} = 0 \text{ for all $h \in H$} \} 
= \mathfrak{R}_{\phi_1}(H \tau) ;  \\
\mathfrak{L}_{\phi_2}(H) &= \{ x \in A: \pair{\phi_2(x)}{h}=0 \text{ for all $h \in H$}\} \\
&= \{ x \in A: \pair{\phi_1(x \tau)}{h}=0 \text{ for all $h \in H$}\} 
=  \mathfrak{L}_{\phi_1}(H) \tau^{-1}. \qedhere
\end{align*}
\end{proof} 

\begin{prop}  \label{prop:SameDuals-stab}
Let $H$ be a subgroup of a finite abelian group $A$.  Suppose $\phi_1, \phi_2$ are two dualities of $A$.  Then $\mathfrak{R}_{\phi_1}(H) = \mathfrak{R}_{\phi_2}(H)$ if and only if $\phi_2 = \phi_1 \circ \tau$ for some $\tau \in \stab(H)$.  Likewise, $\mathfrak{L}_{\phi_1}(H) = \mathfrak{L}_{\phi_2}(H)$ if and only if $\phi_2^* = \phi_1^* \circ \tau$ for some $\tau \in \stab(H)$. 
\end{prop}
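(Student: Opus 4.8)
The plan is to establish the assertion about right dual codes directly from the three lemmas immediately preceding this proposition, and then to deduce the assertion about left dual codes by passing to the starred dualities, using $\mathfrak{L}_\phi = \mathfrak{R}_{\phi^*}$.

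First, the right-dual equivalence. For the direction starting from $\phi_2 = \phi_1 \circ \tau$ with $\tau \in \stab(H)$: Lemma~\ref{lem:dual-of-aut-mult} gives $\mathfrak{R}_{\phi_2}(H) = \mathfrak{R}_{\phi_1}(H\tau)$, and since $\tau \in \stab(H)$ means $H\tau = H$, the right-hand side is just $\mathfrak{R}_{\phi_1}(H)$ (equivalently, invoke Lemma~\ref{lem:stab-mults}). For the converse, assume $\mathfrak{R}_{\phi_1}(H) = \mathfrak{R}_{\phi_2}(H)$. By Proposition~\ref{prop:DualitiesAut}, taking $\phi_0 = \phi_1$, there is a (unique) $\tau \in \aut(A)$ with $\phi_2 = \phi_1 \circ \tau$. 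Lemma~\ref{lem:dual-of-aut-mult} then rewrites the hypothesis as $\mathfrak{R}_{\phi_1}(H) = \mathfrak{R}_{\phi_1}(H\tau)$, and Lemma~\ref{lem:stab-mults} says precisely that this equality forces $\tau \in \stab(H)$. This proves the first biconditional.

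For the left-dual equivalence, I would not repeat the argument but instead use the third bullet of Lemma~\ref{lem:LeftRightDuals}, which gives $\mathfrak{L}_\phi(H) = \mathfrak{R}_{\phi^*}(H)$ for any duality $\phi$ (the case $n=1$ of that lemma). Since $\phi_1^*$ and $\phi_2^*$ are again dualities of $A$, applying the right-dual equivalence already proved to the pair $\phi_1^*, \phi_2^*$ yields: $\mathfrak{L}_{\phi_1}(H) = \mathfrak{L}_{\phi_2}(H)$ holds if and only if $\mathfrak{R}_{\phi_1^*}(H) = \mathfrak{R}_{\phi_2^*}(H)$ holds, which in turn holds if and only if $\phi_2^* = \phi_1^* \circ \tau$ for some $\tau \in \stab(H)$. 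This is exactly the claimed statement.

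The whole argument is essentially bookkeeping, and I do not expect a genuine obstacle. The one thing to be careful about is keeping the side of the $\aut(A)$-action and the order of composition straight, so that Lemma~\ref{lem:dual-of-aut-mult} (stated for $\phi_2 = \phi_1 \circ \tau$) applies verbatim. The one temptation to resist is trying to prove the left-dual case directly by tracking how $\tau$ transforms under $(-)^*$ via $(\phi_1 \circ \tau)^* = \tau^* \circ \phi_1^*$ and a conjugation; that route is messier than simply converting left duals into right duals of the starred dualities.
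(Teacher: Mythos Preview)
Your proof is correct and follows essentially the same route as the paper's own proof: write $\phi_2 = \phi_1 \circ \tau$ via Proposition~\ref{prop:DualitiesAut}, reduce the right-dual equality to $\tau \in \stab(H)$ using Lemmas~\ref{lem:dual-of-aut-mult} and~\ref{lem:stab-mults}, and then handle left duals by applying the right-dual case to $\phi_1^*, \phi_2^*$ via Lemma~\ref{lem:LeftRightDuals}. The only difference is cosmetic: the paper treats both directions of the biconditional in one breath, whereas you spell them out separately.
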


\begin{proof}
By Proposition~\ref{prop:DualitiesAut}, $\phi_2 = \phi_1 \circ \tau$ for some $\tau \in \aut(A)$.  Then,
$\mathfrak{R}_{\phi_2}(H) = \mathfrak{R}_{\phi_1}(H \tau)$, by Lemma~\ref{lem:dual-of-aut-mult}.
Thus, by Lemma~\ref{lem:stab-mults}, $\mathfrak{R}_{\phi_1}(H) = \mathfrak{R}_{\phi_2}(H)$ if and only if $\tau \in \stab(H)$.  For left duals, apply the right dual case to $\phi_1^*$ and $\phi_2^*$, using Lemma~\ref{lem:LeftRightDuals}.
\end{proof}

Recall that $H \subseteq A$ a \emph{characteristic subgroup} if $H$ is invariant under every automorphism of $A$, i.e., $H \tau = H$ for every $\tau \in \aut(A)$, or, equivalently, $\stab(H) = \aut(A)$.

\begin{thm}  \label{thm:char-subgroups}
Let $H$ and $K$ be subgroups of a finite abelian group $A$.  Suppose that $K = \mathfrak{R}_{\phi_0}(H)$ for some duality $\phi_0$ of $A$.  Then $K = \mathfrak{R}_{\phi}(H)$ for every duality $\phi$ of $A$ if and only if $H$ is  a characteristic subgroup.  Likewise for left dual codes.  Moreover, $K = \mathfrak{R}_{\phi}(H)$ for every duality $\phi$ of $A$ if and only if $K = \mathfrak{L}_{\phi}(H)$ for every duality $\phi$ of $A$.
\end{thm}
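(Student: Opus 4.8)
The plan is to reduce everything to Proposition~\ref{prop:DualitiesAut} and Proposition~\ref{prop:SameDuals-stab}. Fix the duality $\phi_0$ with $K = \mathfrak{R}_{\phi_0}(H)$. For an arbitrary duality $\phi$, Proposition~\ref{prop:DualitiesAut} writes $\phi = \phi_0 \circ \tau$ for a unique $\tau \in \aut(A)$, and Proposition~\ref{prop:SameDuals-stab} (right-dual half) says $\mathfrak{R}_\phi(H) = \mathfrak{R}_{\phi_0}(H) = K$ precisely when $\tau \in \stab(H)$. Hence $K = \mathfrak{R}_\phi(H)$ holds for \emph{every} duality $\phi$ if and only if every element of $\aut(A)$ lies in $\stab(H)$, i.e., $\stab(H) = \aut(A)$, which is exactly the statement that $H$ is a characteristic subgroup. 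This settles the first assertion.

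For the left-dual version I would first transfer the standing hypothesis: by Lemma~\ref{lem:LeftRightDuals}, $\mathfrak{R}_{\phi_0}(H) = \mathfrak{L}_{\phi_0^*}(H)$, so $K = \mathfrak{L}_{\phi_0^*}(H)$ with $\phi_0^*$ again a duality. Then the same argument runs: for any duality $\psi$ one has $\mathfrak{L}_\psi(H) = \mathfrak{R}_{\psi^*}(H)$, and since $\psi \mapsto \psi^*$ is a bijection of $\iso(A,\ah)$ onto itself (because $\psi^{**} = \psi$), quantifying over all $\psi$ is the same as quantifying over all $\psi^*$. Therefore ``$K = \mathfrak{L}_\psi(H)$ for every duality $\psi$'' is equivalent to ``$K = \mathfrak{R}_\chi(H)$ for every duality $\chi$'', which by the first part is equivalent to $H$ being characteristic. (Alternatively one can invoke the left-dual half of Proposition~\ref{prop:SameDuals-stab} directly, with the cosmetic adjustment that it is phrased via $\phi_1^*$, $\phi_2^*$.)

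The final sentence of the theorem is then immediate: under the standing hypothesis $K = \mathfrak{R}_{\phi_0}(H)$, which is equivalent to $K = \mathfrak{L}_{\phi_0^*}(H)$, both conditions ``$K = \mathfrak{R}_\phi(H)$ for every duality $\phi$'' and ``$K = \mathfrak{L}_\phi(H)$ for every duality $\phi$'' have each been shown equivalent to the single statement that $H$ is characteristic, so they are equivalent to one another.

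I do not anticipate a real obstacle here, since the argument is essentially bookkeeping on top of Propositions~\ref{prop:DualitiesAut} and~\ref{prop:SameDuals-stab}; the only point needing care is keeping the ``for some duality'' hypothesis correctly matched between the right-dual and left-dual pictures via the involution $\phi_0 \leftrightarrow \phi_0^*$, and using the bijectivity of $\phi \mapsto \phi^*$ exactly once, when the quantifier ranges over all dualities.
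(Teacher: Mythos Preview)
Your argument is correct and follows the same route as the paper, which simply instructs the reader to use Proposition~\ref{prop:SameDuals-stab}; you have merely made explicit the bookkeeping with Proposition~\ref{prop:DualitiesAut} and the involution $\phi \mapsto \phi^*$ that the paper leaves to the reader.
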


\begin{proof}
Use Proposition~\ref{prop:SameDuals-stab}.
\end{proof}

\begin{cor}
Suppose $H, K$ are subgroups of a finite abelian group $A$, with $K = \mathfrak{R}_{\phi_0}(H)$ for some duality $\phi_0$ of $A$.  Then, $H$ is a characteristic subgroup if and only if $K$ is a characteristic subgroup.
\end{cor}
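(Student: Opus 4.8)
The plan is to deduce this directly from Theorem~\ref{thm:char-subgroups}, using the double-dual property of Proposition~\ref{prop:DualProperties} and the left/right interchange of Lemma~\ref{lem:LeftRightDuals}. The point is that, although the hypothesis $K = \mathfrak{R}_{\phi_0}(H)$ looks asymmetric, it becomes symmetric in $H$ and $K$ once one allows the duality to change; so the ``same dual for every duality'' characterization of Theorem~\ref{thm:char-subgroups} can be applied with the roles of $H$ and $K$ reversed.

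First I would record the reverse relation. Applying $\mathfrak{L}_{\phi_0}$ to $K = \mathfrak{R}_{\phi_0}(H)$ and using $\mathfrak{L}_{\phi_0}(\mathfrak{R}_{\phi_0}(H)) = H$ (Proposition~\ref{prop:DualProperties}) gives $H = \mathfrak{L}_{\phi_0}(K)$, and Lemma~\ref{lem:LeftRightDuals} rewrites this as $H = \mathfrak{R}_{\phi_0^*}(K)$. So we have a right-dual relation in each direction: $K = \mathfrak{R}_{\phi_0}(H)$ for the duality $\phi_0$, and $H = \mathfrak{R}_{\phi_0^*}(K)$ for the duality $\phi_0^*$.

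Next I would establish the key equivalence: $K = \mathfrak{R}_\phi(H)$ for every duality $\phi$ if and only if $H = \mathfrak{R}_\phi(K)$ for every duality $\phi$. This holds because $\phi \mapsto \phi^*$ is a bijection of the set of dualities, so by Lemma~\ref{lem:LeftRightDuals} the first statement is equivalent to $K = \mathfrak{L}_\phi(H)$ for every $\phi$, which by the double-dual property becomes $H = \mathfrak{R}_\phi(K)$ for every $\phi$. (One could instead quote the last sentence of Theorem~\ref{thm:char-subgroups} for the left/right step.) Then I would chain everything: $H$ is characteristic $\iff$ $K = \mathfrak{R}_\phi(H)$ for every duality $\phi$ (Theorem~\ref{thm:char-subgroups}, applied to $K = \mathfrak{R}_{\phi_0}(H)$) $\iff$ $H = \mathfrak{R}_\phi(K)$ for every duality $\phi$ (the equivalence just proved) $\iff$ $K$ is characteristic (Theorem~\ref{thm:char-subgroups}, now applied to $H = \mathfrak{R}_{\phi_0^*}(K)$). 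Since this is a chain of biconditionals, both directions of the corollary fall out simultaneously.

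I do not expect a genuine obstacle: the whole argument is bookkeeping with the two mechanisms for reversing a dual-code relation, namely double duality and the $\phi \leftrightarrow \phi^*$ swap. The only thing requiring a little care is keeping track of which duality, $\phi_0$ or $\phi_0^*$, is the one playing the role of ``$\phi_0$'' in Theorem~\ref{thm:char-subgroups} at each application, so that the hypothesis of that theorem is genuinely met.
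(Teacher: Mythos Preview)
Your argument is correct. The chain of biconditionals is sound: the reverse relation $H = \mathfrak{R}_{\phi_0^*}(K)$ is established exactly as you say, the key equivalence follows from the involution $\phi \mapsto \phi^*$ together with $\mathfrak{R}_\phi(\mathfrak{L}_\phi(H)) = H$, and the two applications of Theorem~\ref{thm:char-subgroups} are legitimate since you have verified the hypothesis each time.

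Your route differs from the paper's in an interesting way. The paper, after using Theorem~\ref{thm:char-subgroups} once to get $K = \mathfrak{L}_\phi(H)$ for every $\phi$, drops back down to Lemma~\ref{lem:dual-of-aut-mult}: for arbitrary $\tau \in \aut(A)$ it sets $\phi = \phi_0 \circ \tau$, reads off $\mathfrak{L}_\phi(H)\tau = \mathfrak{L}_{\phi_0}(H)$, and concludes $K\tau = K$ directly. You instead stay at the level of Theorem~\ref{thm:char-subgroups} and apply it a second time with the roles of $H$ and $K$ swapped, using the $\phi \leftrightarrow \phi^*$ symmetry to manufacture the required hypothesis $H = \mathfrak{R}_{\phi_0^*}(K)$. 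Your approach is a bit more structural and makes the symmetry between $H$ and $K$ explicit; the paper's is slightly more hands-on, showing $K\tau = K$ element by element in $\aut(A)$. Both are short; yours has the minor advantage of not needing to re-invoke the lower-level Lemma~\ref{lem:dual-of-aut-mult}.
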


\begin{proof}
If $H$ is a characteristic subgroup, then, by Theorem~\ref{thm:char-subgroups}, $K = \mathfrak{R}_\phi(H)$ for any duality $\phi$ of $A$.  Since $\mathfrak{R}_\phi(H) = \mathfrak{L}_{\phi^*}(H)$, Lemma~\ref{lem:LeftRightDuals}, we also have $K = \mathfrak{L}_\phi(H)$ for any duality $\phi$ of $A$.

Take any automorphism $\tau \in \aut(A)$.  Set $\phi = \phi_0 \circ \tau$.  By Lemma~\ref{lem:dual-of-aut-mult}, we know that $\mathfrak{L}_{\phi}(H) \tau = \mathfrak{L}_{\phi_0}(H)$.  But that means $K \tau = K$, and $K$ is a characteristic subgroup.

Essentially the same argument applies when $K$ is a characteristic subgroup, with $H = \mathfrak{L}_{\phi_0}(K)$.
\end{proof}

Proposition~\ref{prop:PrimeByPrime} allows us to study finite abelian groups one prime at a time.  So, for the rest of this section, we assume $A$ is a finite abelian $p$-group for some fixed prime $p$.  We will present two related filtrations of $A$.

Define $f: A \ra A$ by $f(a) = pa$, $a \in A$; $f$ is a group homomorphism.  Denote composition of $f$ with itself using exponents, so that $f^2 = f \circ f$.  Then $f^k(a) = p^k a$, $a \in A$, $k$ positive integer.  We use the convention that $f^0 = \id_A$.   Because $A$ is a finite abelian $p$-group, there exists a smallest positive integer $N$ such that $f^N = 0$.  (By the fundamental theorem of finite abelian groups, $A$ is a direct sum of cyclic groups whose orders are powers of $p$.  If $p^N$ is the largest power that appears, then $f^N = 0$.)

We have the following filtrations:
\begin{align}
0 &= \ker f^0 \subseteq \ker f \subseteq \ker f^2 \subseteq \cdots \subseteq \ker f^{N-1} \subseteq \ker f^N = A ,  \label{eqn:TwoFiltrations} \\
A &= \im f^0 \supseteq \im f \supseteq \im f^2 \supseteq \cdots \supseteq \im f^{N-1} \supseteq \im f^N = 0 .  \notag
\end{align}

\begin{rem}
The filtrations in \eqref{eqn:TwoFiltrations} are examples of a socle series (for $\ker f^j$) and a radical or Loewy series (for $\im f^j$), viewing $A$ as a $\Z$-module, \cite[Definition~1.2.1]{MR1644252}.
\end{rem}

\begin{prop}  \label{prop:FiltrationCharacteristic}
Each subgroup in the filtrations \eqref{eqn:TwoFiltrations} is a characteristic subgroup of $A$.
\end{prop}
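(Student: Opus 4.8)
The plan is to show that each of the subgroups $\ker f^j$ and $\im f^j$ (for $j = 0, 1, \ldots, N$) is invariant under every $\tau \in \aut(A)$, which is exactly what it means to be characteristic. The key observation is that the homomorphism $f: A \ra A$, $f(a) = pa$, commutes with every group endomorphism of $A$, and in particular with every automorphism. Indeed, for $\tau \in \aut(A)$ and $a \in A$, we have $f(a\tau) = p(a\tau) = (pa)\tau = (f(a))\tau$, since $\tau$ is additive; equivalently, $f \circ \tau = \tau \circ f$ as maps $A \ra A$. Iterating, $f^j \circ \tau = \tau \circ f^j$ for all $j \geq 0$ (with the convention $f^0 = \id_A$).

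From this commutation relation the invariance of both filtrations follows immediately. First I would treat the kernels: if $a \in \ker f^j$, then $f^j(a\tau) = (f^j(a))\tau = 0\tau = 0$, so $a\tau \in \ker f^j$; thus $(\ker f^j)\tau \subseteq \ker f^j$, and applying the same to $\tau^{-1}$ gives the reverse inclusion, so $(\ker f^j)\tau = \ker f^j$. Next, for the images: if $b \in \im f^j$, write $b = f^j(a)$ for some $a \in A$; then $b\tau = (f^j(a))\tau = f^j(a\tau) \in \im f^j$, so $(\im f^j)\tau \subseteq \im f^j$, and again using $\tau^{-1}$ we get equality. Since $\tau \in \aut(A)$ was arbitrary, every subgroup in \eqref{eqn:TwoFiltrations} is characteristic.

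There is no real obstacle here: the entire argument rests on the single elementary fact that scalar multiplication by $p$ is natural with respect to group homomorphisms, i.e., that $f$ is $\Z$-linear and hence commutes with $\aut(A)$. The only thing to be slightly careful about is bookkeeping with the right-action convention used in this section (automorphisms written on the right, $a \tau$), but this changes nothing substantive. One could alternatively phrase the whole thing conceptually: $\ker f^j$ and $\im f^j$ are kernels and images of $\Z$-module endomorphisms, and such subgroups are always fully invariant (preserved by every endomorphism of $A$), hence a fortiori characteristic; but the direct computation above is shorter and fully self-contained.
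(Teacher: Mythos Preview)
Your proof is correct and follows essentially the same approach as the paper: establish that $f$ (hence each $f^j$) commutes with every automorphism $\tau$, and then read off invariance of $\ker f^j$ and $\im f^j$. You are slightly more explicit in supplying the reverse inclusions via $\tau^{-1}$ and in writing out the image case, but the underlying argument is identical.
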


\begin{proof}
The homomorphism $f$ commutes with any automorphism $\tau$:  $(f(a))\tau = (pa)\tau = p(a \tau) = f(a \tau)$ for any $a \in A$.  This implies any $f^j$ commutes with any automorphism.  If $a \in\ker f^j$, then $f^j(a \tau) = (f^j(a))\tau = 0 \tau = 0$, so $a \tau \in \ker f^j$.  Argue similarly for $\im f^j$.
\end{proof}

\begin{thm}  \label{thm:KerImDuals}
Let $A$ be a finite abelian $p$-group with filtrations \eqref{eqn:TwoFiltrations}.  Then, for every $j=0, 1, \ldots, N$, and every duality $\phi: A \ra \ah$,
\begin{align*}
\im f^j &= \mathfrak{L}_{\phi}(\ker f^j) =  \mathfrak{R}_{\phi}(\ker f^j) , \\
\ker f^j &= \mathfrak{L}_{\phi}(\im f^j) =  \mathfrak{R}_{\phi}(\im f^j) .
\end{align*}
\end{thm}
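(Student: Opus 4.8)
The plan is to exploit the fact that $\ker f^j$ and $\im f^j$ are characteristic subgroups (Proposition~\ref{prop:FiltrationCharacteristic}) together with the observation that they already satisfy the size condition. Indeed, for a finite abelian $p$-group $A$, the homomorphism $f^j$ gives a short exact sequence $0 \to \ker f^j \to A \to \im f^j \to 0$, so $\size{\ker f^j} \cdot \size{\im f^j} = \size{A}$. By Proposition~\ref{prop:DualProperties}, $\mathfrak{R}_\phi(\ker f^j)$ therefore has the same order as $\im f^j$, and likewise $\mathfrak{L}_\phi(\ker f^j)$. So it suffices to prove the containment $\im f^j \subseteq \mathfrak{R}_\phi(\ker f^j)$ (and the analogous one for $\mathfrak{L}$), after which equality is forced by cardinality; the four remaining identities then follow by taking double duals, using Proposition~\ref{prop:DualProperties} and the characteristic property via Theorem~\ref{thm:char-subgroups}.

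The key containment amounts to a pairing computation. Fix $a \in \ker f^j$, so $p^j a = 0$, and $b \in A$; I want $\pair{\phi(b')}{a} = 1$ whenever $b' = f^j(b) = p^j b \in \im f^j$. Since $\phi$ is a homomorphism, $\pair{\phi(p^j b)}{a} = \pair{\phi(b)^{p^j}}{a} = \pair{\phi(b)}{a}^{p^j} = \pair{\phi(b)}{p^j a} = \pair{\phi(b)}{0} = 1$, using \eqref{eqn:CharHom} and \eqref{eqn:CharProduct}. This shows $\im f^j \subseteq \mathfrak{R}_\phi(\ker f^j)$; the identical manipulation with the roles reversed gives $\im f^j \subseteq \mathfrak{L}_\phi(\ker f^j)$ (here one computes $\pair{\phi(a)}{p^j b} = \pair{\phi(a)}{b}^{p^j} = \pair{\phi(p^j a)}{b} = 1$ for $a \in \ker f^j$). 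Combined with the size count above, this yields $\im f^j = \mathfrak{L}_\phi(\ker f^j) = \mathfrak{R}_\phi(\ker f^j)$ for every duality $\phi$.

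For the second pair of identities, apply the double-dual property of Proposition~\ref{prop:DualProperties}: from $\mathfrak{R}_\phi(\ker f^j) = \im f^j$ we get $\ker f^j = \mathfrak{L}_\phi(\mathfrak{R}_\phi(\ker f^j)) = \mathfrak{L}_\phi(\im f^j)$, and from $\mathfrak{L}_\phi(\ker f^j) = \im f^j$ we get $\ker f^j = \mathfrak{R}_\phi(\mathfrak{L}_\phi(\ker f^j)) = \mathfrak{R}_\phi(\im f^j)$. I do not anticipate a serious obstacle here; the only point requiring a little care is the cardinality bookkeeping --- being sure that $\size{\ker f^j}\cdot\size{\im f^j} = \size{A}$ holds for every $j$ (immediate from the rank--nullity short exact sequence for the endomorphism $f^j$ of the finite group $A$) --- and making sure the exponent manipulations $\pair{\pi}{na} = \pair{\pi}{a}^n$ are invoked correctly, which is exactly the remark preceding \eqref{eqn:DualityToInnerProduct}.
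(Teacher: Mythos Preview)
Your argument is correct in substance, and in fact more direct than the paper's. The paper first fixes the explicit coordinate duality $\phi_0(a,b) = \prod_i \zeta^{p^{n_\ell-n_i}a_ib_i}$ coming from a cyclic decomposition of $A$, verifies $\ker f^j$ and $\im f^j$ are dual under that one $\phi_0$, and then invokes Proposition~\ref{prop:FiltrationCharacteristic} together with Theorem~\ref{thm:char-subgroups} to transfer the conclusion to every duality. You bypass both the explicit construction and Theorem~\ref{thm:char-subgroups} entirely: your character identity $\pair{\phi(p^jb)}{a} = \pair{\phi(b)}{a}^{p^j} = \pair{\phi(b)}{p^ja}$ works uniformly for an arbitrary $\phi$, so the characteristic-subgroup machinery (which you mention in your plan) is never actually needed. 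This is a cleaner route.

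One small slip: you have the labels $\mathfrak{L}$ and $\mathfrak{R}$ interchanged. The computation $\pair{\phi(b')}{a}=1$ for all $a\in\ker f^j$ says $\Phi(b',a)=1$, which by definition places $b'$ in $\mathfrak{L}_\phi(\ker f^j)$, not $\mathfrak{R}_\phi(\ker f^j)$; and your second computation $\pair{\phi(a)}{p^jb}=1$ gives the $\mathfrak{R}$ containment. Since you carry out both computations anyway, the swap is harmless, but you should correct the attribution.
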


\begin{proof}
By the fundamental theorem of finite abelian groups, $A$ can be written as a sum of cyclic $p$-groups:
\[  A = \bigoplus_{i=1}^\ell \Z/p^{n_i} \Z,  \]
for integers $1 \leq n_1 \leq \cdots \leq n_{\ell}$.  Write $a \in A$ in the corresponding form $a = (a_1, a_2, \ldots, a_\ell)$.  Fix $\zeta$ to be a primitive $p^{n_\ell}$th root of $1$ in $\C^\times$, and define a symmetric duality $\phi_0$ of $A$ by 
\[  \Phi_0(a,b) = \prod_{i=1}^\ell \zeta^{p^{n_\ell - n_i} a_i b_i} . \]

We show that $\ker f^j$ and $\im f^j$ are dual codes.  Let $a \in \ker f^j$ and $b \in \im f^j$, with $b = f^j(x) = p^j x$.  Then $a_i b_i =  a_i p^j x_i = 0$, for $i = 1, 2, \ldots, \ell$, because $a \in \ker f^j$.  Thus $\Phi_0(a,b)=1$, so that $\ker f^j  \subseteq \mathfrak{L}_{\phi_0}(\im f^j)$ and $\im f^j  \subseteq \mathfrak{R}_{\phi_0}(\ker f^j)$.  Equality holds in both cases because $\size{\ker f^j} \cdot \size{\im f^j} = \size{A}$ and the size condition for dual codes.  Because $\phi_0$ is symmetric, we also have $\ker f^j = \mathfrak{R}_{\phi_0}(\im f^j)$ and $\im f^j  = \mathfrak{L}_{\phi_0}(\ker f^j)$.  For other dualities, use Theorem~\ref{thm:char-subgroups} and Proposition~\ref{prop:FiltrationCharacteristic}. 
\end{proof}

\begin{rem}  \label{rem:KerImExample}
When $A$ is an elementary abelian $p$-group, the filtrations \eqref{eqn:TwoFiltrations} collapse, with $N=1$:  $A = \ker f$ and $0 = \im f$.  In contrast, when $A = \Z/p^\ell \Z$,  $\im f^j = p^j \Z/p^\ell \Z$, and $\ker f^j = p^{\ell - j} \Z/p^\ell \Z$.

Suppose $A = \Z/2\Z \oplus \Z/4 \Z$.  The subgroups $\ell_{\infty}$ and $S$ of Example~\ref{ex:4+2} are exactly $\ell_\infty = \im f$ and $S = \ker f$.
\end{rem}

\section{Congruence}  \label{sec:congruence}
There is an equivalence relation on dualities that generalizes the congruence of matrices and symmetric bilinear forms over finite prime fields.

\begin{defn}
Two dualities $\phi_1, \phi_2: A \ra \ah$ are \emph{congruent}, written $\phi_1 \simeq \phi_2$, if there exists an automorphism $\tau \in \aut(A)$ such that $\phi_2$ equals the composition
\[  \xymatrix{A \ar[r]^\tau & A \ar[r]^{\phi_1} & \ah \ar[r]^{\tau^*} & \ah }.  \]
\end{defn}
The condition for being congruent means, for all $a, a' \in A$, that
\begin{equation}  \label{eqn:congruence}
\Phi_2(a,a') = \pair{\phi_2(a)}{a'} = \pair{\phi_1(a \tau)}{a' \tau} = \Phi_1(a \tau, a' \tau).
\end{equation}
When $\phi_1 \simeq \phi_2$, $\phi_1$ is symmetric if and only if $\phi_2$ is symmetric.

\begin{ex}
For a prime $p$, let $A$ be an elementary abelian $p$-group of rank $n$, say $A = \F_p^n$.  In Example~\ref{ex:elem-ab-p-groups}, a duality $\phi_0$ of $A$ is defined by $\pair{\phi_0(a)}{b} = \zeta_p^{a b^\top} \in \C^\times$, for $a, b \in A$ (thought of as row vectors).  Any other duality has the form $\phi = \phi_0 \circ \tau$ for some automorphism $\tau \in \aut(A) = \GL(n, \F_p)$.  Regarding $\tau$ as an invertible matrix, we then have $\pair{\phi(a)}{b} = \pair{\phi_0(a \tau)}{b} = \zeta_p^{a \tau b^\top}$.

If $\phi' = \phi_0 \circ \tau'$, $\tau' \in \aut(A)$, is another duality, then $\phi'$ is congruent to $\phi$ if there exists an automorphism $\sigma \in \aut(A)$ such that $\phi' = \sigma^* \circ \phi \circ \sigma$.  This means, for any $a, b \in A$, that
\begin{align*}
\zeta_p^{a \tau' b^\top} &= \pair{\phi'(a)}{b} = \pair{\sigma^*(\phi(a \sigma))}{b} \\
&= \pair{\phi(a \sigma)}{b \sigma} = \zeta_p^{a \sigma \tau (b \sigma)^\top} 
= \zeta_p^{a \sigma \tau \sigma^\top b^\top} .
\end{align*}
These equations hold for all $a, b \in A$ if and only if $\tau' = \sigma \tau \sigma^\top$.  That is, $\tau$ and $\tau'$ are congruent matrices.  Hidden in plain view in the equations above is
\[  \pair{\phi'(a)}{b} = \pair{\phi(a \sigma)}{b \sigma} , \quad a,b \in A . \]

Because the homomorphism $\F_p \ra \C^\times$ sending $r \in \F_p$ to $\zeta_p^r \in \C^\times$ is injective, inner products on $A=\F_p^n$ are the same as nondegenerate bilinear forms on $A$ with values in $\F_p$.

When $p=2$, there are well-known results that classify nondegenerate symmetric bilinear forms.  The form $I$ is represented by the $1 \times 1$ matrix $[1]$, and the form $H$ is represented by 
\[  H = \begin{bmatrix}
0 & 1 \\ 1 & 0
\end{bmatrix} . \]
Every nondegenerate symmetric bilinear form over $\F_2$ is congruent to a direct sum of copies of $I$ and $H$, with the relation that $I + H \simeq 3 I$.

For odd primes $p$, nondegenerate symmetric bilinear forms are of two types, both diagonal: $1, 1, \ldots, 1,1$ and $1, 1, \ldots, 1, \la$,
where $\la$, in the words of Robert Wilson, is `your favourite nonsquare' in $\F_p$.  (When $p \equiv 1 \bmod 4$, $-1$ is a square in $\F_p$.)
\end{ex}

When two dualities are congruent, the comparative structure of subgroups and their dual codes align.
\begin{thm}
Let $A$ be a finite abelian group.  Suppose $\phi_1 \simeq \phi_2$ are congruent dualities of $A$, with $\phi_2 = \tau^* \circ \phi_1 \circ \tau$ for some $\tau \in \aut(A)$.  For subgroups $H, K \subseteq A$, $K = \mathfrak{L}_{\phi_2}(H)$ if and only if $K \tau = \mathfrak{L}_{\phi_1}(H \tau)$.  Likewise, $K = \mathfrak{R}_{\phi_2}(H)$ if and only if $K \tau = \mathfrak{R}_{\phi_1}(H \tau)$.  
\end{thm}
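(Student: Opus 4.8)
The plan is to establish the sharper set-theoretic identities $\mathfrak{L}_{\phi_2}(H)\,\tau = \mathfrak{L}_{\phi_1}(H\tau)$ and $\mathfrak{R}_{\phi_2}(H)\,\tau = \mathfrak{R}_{\phi_1}(H\tau)$; both biconditionals in the statement then follow immediately by right-multiplying by $\tau$, since $\tau$ induces a bijection on the set of subgroups of $A$. The one substantive input is the congruence identity \eqref{eqn:congruence}, namely $\Phi_2(a,a') = \Phi_1(a\tau, a'\tau)$ for all $a, a' \in A$.

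First I would unwind the definition of the left dual code and substitute \eqref{eqn:congruence}:
\begin{align*}
\mathfrak{L}_{\phi_2}(H) &= \{\, x \in A : \Phi_2(x,h)=1 \text{ for all } h \in H \,\} \\
&= \{\, x \in A : \Phi_1(x\tau, h\tau)=1 \text{ for all } h \in H \,\}.
\end{align*}
Since $\tau$ is an automorphism, as $h$ ranges over $H$ the element $h\tau$ ranges over all of $H\tau$, so the last condition says exactly that $x\tau \in \mathfrak{L}_{\phi_1}(H\tau)$. Hence $\mathfrak{L}_{\phi_2}(H) = \mathfrak{L}_{\phi_1}(H\tau)\,\tau^{-1}$, which rearranges to $\mathfrak{L}_{\phi_2}(H)\,\tau = \mathfrak{L}_{\phi_1}(H\tau)$. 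Consequently $K = \mathfrak{L}_{\phi_2}(H)$ if and only if $K\tau = \mathfrak{L}_{\phi_2}(H)\,\tau = \mathfrak{L}_{\phi_1}(H\tau)$.

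For the right dual codes I would repeat the computation verbatim with the code variable in the second slot, using $\Phi_2(h,y) = \Phi_1(h\tau, y\tau)$; this yields $\mathfrak{R}_{\phi_2}(H) = \mathfrak{R}_{\phi_1}(H\tau)\,\tau^{-1}$ and hence the second biconditional. Alternatively, one can deduce the right-hand case from the left-hand one by observing that $\phi_2^* = \tau^* \circ \phi_1^* \circ \tau$ --- using $(\be\circ\al)^* = \al^*\circ\be^*$ together with $\tau^{**}=\tau$ from \eqref{eqn:DoubleDuals} --- so that $\phi_1^* \simeq \phi_2^*$ via the same $\tau$, and then applying $\mathfrak{R}_\phi(H) = \mathfrak{L}_{\phi^*}(H)$ from Lemma~\ref{lem:LeftRightDuals}. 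I do not expect any genuine obstacle here; the only point requiring care is the same left/right-action bookkeeping that appears in Lemma~\ref{lem:dual-of-aut-mult}, namely correctly translating ``$x\tau$ lies in a given subgroup'' into a statement about $x$, which is what introduces the $\tau^{-1}$.
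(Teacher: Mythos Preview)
Your argument is correct and follows the same line as the paper, which simply says the claims follow from \eqref{eqn:congruence} and the size condition. In fact your direct computation of the set equality $\mathfrak{L}_{\phi_2}(H)\,\tau = \mathfrak{L}_{\phi_1}(H\tau)$ is slightly cleaner than what the paper's terse proof hints at: once you have that identity, the biconditional is immediate and the size condition (Proposition~\ref{prop:DualProperties}) is not actually needed.
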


\begin{proof}
All the claims follow from \eqref{eqn:congruence} and the size condition.
\end{proof}

\begin{cor}
Let $A$ be a finite abelian group.  Suppose $\phi_1 \simeq \phi_2$ are congruent dualities of $A$, with $\phi_2 = \tau^* \circ \phi_1 \circ \tau$ for some $\tau \in \aut(A)$.  For a subgroup $H \subseteq A$, $H$ is self-dual under $\phi_2$ if and only if $H \tau$ is self-dual under $\phi_1$.  The number of self-dual codes under $\phi_1$ equals the number of self-dual codes under $\phi_2$
\end{cor}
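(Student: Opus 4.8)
The plan is to obtain this corollary as an immediate consequence of the preceding theorem, specialized to the case $K = H$, together with the earlier lemma showing that left self-duality and right self-duality coincide. First I would recall that, by that lemma, the phrase ``$H$ is self-dual under $\phi$'' is unambiguous: it means $H = \mathfrak{L}_{\phi}(H)$, equivalently $H = \mathfrak{R}_{\phi}(H)$. So I may freely pass between the two formulations as convenient.

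Next I would apply the preceding theorem with $K$ taken to be $H$. That theorem gives $K = \mathfrak{L}_{\phi_2}(H)$ if and only if $K\tau = \mathfrak{L}_{\phi_1}(H\tau)$; setting $K = H$ yields $H = \mathfrak{L}_{\phi_2}(H)$ if and only if $H\tau = \mathfrak{L}_{\phi_1}(H\tau)$. The left side says $H$ is self-dual under $\phi_2$, and the right side says $H\tau$ is self-dual under $\phi_1$. This proves the first assertion. For the counting statement, I would observe that the map $H \mapsto H\tau$ is a bijection from the set of subgroups of $A$ to itself, with inverse $H \mapsto H\tau^{-1}$, since $\tau \in \aut(A)$ carries subgroups to subgroups bijectively. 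By the first assertion, this bijection restricts to a bijection between the collection of subgroups self-dual under $\phi_2$ and the collection of subgroups self-dual under $\phi_1$. A bijection between finite sets forces equal cardinalities, so the number of self-dual codes under $\phi_1$ equals the number under $\phi_2$.

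I do not expect any genuine obstacle here: the corollary is a formal consequence of the preceding theorem plus the bijectivity of $H \mapsto H\tau$. The only point that deserves a moment's care is checking that ``self-dual'' needs no left/right qualifier, which is exactly the content of the earlier lemma; once that is invoked, the argument is mechanical and could be written in a few lines.
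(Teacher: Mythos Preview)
Your proposal is correct and matches the paper's approach: the paper states the corollary without an explicit proof, treating it as an immediate consequence of the preceding theorem (specialized to $K=H$) together with the bijection $H \mapsto H\tau$ on subgroups. Your invocation of the earlier lemma to justify that ``self-dual'' needs no left/right qualifier is appropriate and completes the argument cleanly.
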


\begin{ex}  \label{ex:Klein4-bis}
Let $A = \F_2^2$.  There are six dualities of $A$ listed in Example~\ref{ex:Klein-4-group}.  Three of the symmetric dualities are congruent: $\phi_0 \simeq \phi_1 \simeq \phi_2$.  The symmetric duality $\phi_3$ is congruent only to itself.  The two nonsymmetric dualities are congruent: $\phi_4 \simeq \phi_5$.  The subgroups of $A$ of order $2$ and their dual codes are displayed in Example~\ref{ex:Klein4}.  The number of self-dual codes is the same for congruent dualities.
\end{ex}

\begin{ex}  \label{ex:congruence3}
Let $A = \F_3^2$;  then $\aut(A) = \GL(2, \F_3)$.  As $\size{\GL(2, \F_3)} = (3^2-1)(3^2-3) = 48$, there are $48$ dualities.  Calculations (I used SageMath) reveal that $18$ dualities are symmetric, and $30$ dualities are not symmetric.  Representatives of the congruence classes and the number of dualities in each congruence class are displayed below.
\[ \begin{array}{c|cc|cccc}
\text{representative} & \left[\begin{smallmatrix}
1 & 0 \\ 0 & 1
\end{smallmatrix}\right] & \left[\begin{smallmatrix}
1 & 0 \\ 0 & 2
\end{smallmatrix}\right] & \left[\begin{smallmatrix}
1 & 1 \\ 0 & 1
\end{smallmatrix}\right] & \left[\begin{smallmatrix}
0 & 2 \\ 1 & 0
\end{smallmatrix}\right] & \left[\begin{smallmatrix}
2 & 1 \\ 0 & 1
\end{smallmatrix}\right] & \left[\begin{smallmatrix}
2 & 2 \\ 0 & 2
\end{smallmatrix}\right] 
\\ \hline
\text{number} & 6 & 12 & 8 & 2 & 12 & 8 
\end{array} \]

The abelian group $A$ has four (necessarily cyclic) subgroups of order $3$.  Here they are, with a chosen generator: $\ell_0 = \langle 10 \rangle$, $\ell_1 = \langle 11 \rangle$, $\ell_2 = \langle 12 \rangle$, and $\ell_\infty = \langle 01 \rangle$.  For any duality, the dual codes of the $\ell_j$ will be some permutation of the $\ell_j$.  Here are the various dual codes for the representatives of the congurence classes given above.  Recall that the left and right dual codes will be the same when the duality is symmetric.
\[  \begin{array}{cc|cc|cc|cc|cc}
\phi & \tau&  \mathfrak{L}(\ell_0) & \mathfrak{R}(\ell_0) & \mathfrak{L}(\ell_1) & \mathfrak{R}(\ell_1) & \mathfrak{L}(\ell_2) & \mathfrak{R}(\ell_2) & \mathfrak{L}(\ell_\infty) & \mathfrak{R}(\ell_\infty) \\ \hline
\phi_0 & \left[\begin{smallmatrix}
1 & 0 \\ 0 & 1
\end{smallmatrix}\right] & \ell_\infty & \ell_\infty & \ell_2 & \ell_2 & \ell_1 & \ell_1 & \ell_0 & \ell_0 \\ 
\phi_1 & \left[\begin{smallmatrix}
1 & 0 \\ 0 & 2
\end{smallmatrix}\right] & \ell_\infty & \ell_\infty & \ell_1 & \ell_1 & \ell_2 & \ell_2 & \ell_0 & \ell_0 \\ \hline
\phi_2 & \left[\begin{smallmatrix}
1 & 1 \\ 0 & 1
\end{smallmatrix}\right] & \ell_\infty & \ell_2 & \ell_1 & \ell_1 & \ell_0 & \ell_\infty & \ell_2 & \ell_0 \\ 
\phi_3 & \left[\begin{smallmatrix}
0 & 2 \\ 1 & 0
\end{smallmatrix}\right] & \ell_0 & \ell_0 & \ell_1 & \ell_1 & \ell_2 & \ell_2 & \ell_\infty & \ell_\infty \\ 
\phi_4 & \left[\begin{smallmatrix}
2 & 1 \\ 0 & 1
\end{smallmatrix}\right] & \ell_\infty & \ell_1 & \ell_0 & \ell_2 & \ell_1 & \ell_\infty & \ell_2 & \ell_0 \\ 
\phi_5 & \left[\begin{smallmatrix}
2 & 2 \\ 0 & 2
\end{smallmatrix}\right] & \ell_\infty & \ell_2 & \ell_1 & \ell_1 & \ell_0 & \ell_\infty & \ell_2 & \ell_0 
\end{array}  \]

The calculations also reveal that $\phi^* \simeq \phi$ for all dualities $\phi$ of $A$.  For a duality $\phi$ of $A$, define $\bar{\phi}(a) = \phi(-a)$ for all $a \in A$.  Four of the congruence classes satisfy $\phi_i \simeq \bar{\phi}_i$, namely $i=0, 1, 3, 4$, while $\phi_2 \simeq \bar{\phi}_5$.  The latter explains why $\phi_2$ and $\phi_5$ give the same dual codes for every subgroup.
\end{ex}

We expand on the observation in Example~\ref{ex:congruence3} about dualities that give the same dual codes for every subgroup.  

Suppose $A$ is a finite abelian $p$-group.  Let $m$ be an integer that is relatively prime to $p$.  Given a duality $\phi$ of $A$, define $\phi^m$ by
\[  \pair{\phi^m(a)}{b} = \pair{\phi(a)}{b}^m , \quad a, b \in A . \]
That is, $\phi^m(a) = (\phi(a))^m$, for $a \in A$, where the right side is the multiplication in the group $\ah$.
One verifies that $\phi^m$ is a duality of $A$; in fact, $\phi^m = \phi \circ (m \id_A)$, where $m \id_A$ is the automorphism of $A$ sending $a \in A$ to $ma \in A$.

\begin{lem}  \label{lem:DualOfmid}
Let $A$ be a finite abelian $p$-group, and let $m$ be an integer that is relatively prime to $p$.  If $\phi$ is a duality of $A$, then $(\phi^m)^* = (\phi^*)^m$.  In particular, if $\phi_2 = \phi_1 \circ m \id_A$, then $\phi_2^* = \phi_1^* \circ m \id_A$.
\end{lem}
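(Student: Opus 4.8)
The plan is to reduce the claimed equality of dualities to an equality of characters evaluated against an arbitrary element of $A$, and then to verify that equality using only Lemma~\ref{lem:SymDual}, the homomorphism property of characters, and the definition of $\phi^m$.

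First I would record the observation (already noted in the text preceding the lemma) that $\phi^m = \phi \circ (m\id_A)$; concretely, $\phi^m(b) = \phi(mb) = (\phi(b))^m$ for every $b \in A$, the last equality being just the homomorphism property of $\phi$. Next, for a fixed $a \in A$, I would compute $\pair{(\phi^m)^*(a)}{b}$ for an arbitrary $b \in A$. Applying Lemma~\ref{lem:SymDual} to the duality $\phi^m$ gives $\pair{(\phi^m)^*(a)}{b} = \pair{\phi^m(b)}{a} = \pair{\phi(b)}{a}^m$, where the last step uses \eqref{eqn:CharProduct}. Applying Lemma~\ref{lem:SymDual} again, this time to $\phi$ itself, yields $\pair{\phi(b)}{a} = \pair{\phi^*(a)}{b}$, so that $\pair{(\phi^m)^*(a)}{b} = \pair{\phi^*(a)}{b}^m = \pair{(\phi^*(a))^m}{b} = \pair{(\phi^*)^m(a)}{b}$. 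Since $b \in A$ is arbitrary and a character is determined by its values, $(\phi^m)^*(a) = (\phi^*)^m(a)$; and since $a$ was arbitrary, $(\phi^m)^* = (\phi^*)^m$. The ``in particular'' clause is then immediate: if $\phi_2 = \phi_1 \circ m\id_A$, then $\phi_2 = \phi_1^m$, hence $\phi_2^* = (\phi_1^m)^* = (\phi_1^*)^m = \phi_1^* \circ m\id_A$.

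I do not anticipate a serious obstacle here; the only point requiring care is keeping track of which group each element belongs to under the identification of $A$ with $\Hom_\Z(\ah, \C^\times)$ via $\ev$, and that bookkeeping is exactly what Lemma~\ref{lem:SymDual} packages for us. An alternative, essentially equivalent route is to apply the contravariant functoriality $(\al \circ \be)^* = \be^* \circ \al^*$ to the factorization $\phi^m = \phi \circ (m\id_A)$, together with the easily checked identity $(m\id_A)^*(\pi) = \pi^m$ for $\pi \in \ah$ (which follows directly from \eqref{eqn:DualHom} and \eqref{eqn:CharProduct}); I would mention this as a remark but present the character-evaluation argument as the main proof, since it is the most self-contained.
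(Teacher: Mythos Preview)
Your proposal is correct and follows essentially the same approach as the paper: both compute $\pair{(\phi^m)^*(a)}{b}$ via Lemma~\ref{lem:SymDual}, unwind the definition of $\phi^m$, apply Lemma~\ref{lem:SymDual} once more to pass to $\phi^*$, and conclude. The paper's proof is just the bare chain of equalities $\pair{(\phi^m)^*(a)}{b} = \pair{\phi^m(b)}{a} = \pair{\phi(b)}{a}^m = \pair{\phi^*(a)}{b}^m = \pair{(\phi^*)^m(a)}{b}$, which is exactly your computation with the surrounding commentary stripped away.
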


\begin{proof}
From Lemma~\ref{lem:SymDual}, for any $a, b \in A$,
\begin{align*}
\pair{(\phi^m)^*(a)}{b} &= \pair{\phi^m(b)}{a} = \pair{\phi(b)}{a}^m \\
&= \pair{\phi^*(a)}{b}^m = \pair{(\phi^*)^m(a)}{b} .  \qedhere
\end{align*}
\end{proof}

\begin{thm}
Let $A$ be a finite abelian $p$-group, and suppose $\phi_1, \phi_2$ are two dualities of $A$.  Then, 
\[  \mathfrak{L}_{\phi_2}(H) = \mathfrak{L}_{\phi_1}(H), \quad \mathfrak{R}_{\phi_2}(H) = \mathfrak{R}_{\phi_1}(H) . \]
hold for all subgroups $H \subseteq A$, if and only if $\phi_2 = \phi_1^m$ for some integer $m$ that is relatively prime to $p$.
\end{thm}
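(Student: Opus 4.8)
The plan is to reduce the ``only if'' direction to a classical fact about subgroup-fixing (``power'') automorphisms of abelian $p$-groups, and to settle the ``if'' direction directly from results already in the paper.

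For the ``if'' direction, assume $\phi_2 = \phi_1^m$ with $\gcd(m,p)=1$. As noted just before the theorem, $\phi_2 = \phi_1 \circ (m\,\id_A)$, and $m\,\id_A \in \aut(A)$ since $m$ is invertible modulo the exponent of $A$. Multiplication by $m$ carries every subgroup $H \subseteq A$ into itself, hence (being an automorphism) onto itself, so $m\,\id_A \in \stab(H)$ for every $H$. Proposition~\ref{prop:SameDuals-stab} then immediately yields $\mathfrak{R}_{\phi_2}(H) = \mathfrak{R}_{\phi_1}(H)$ for all $H$. For the left duals, Lemma~\ref{lem:DualOfmid} gives $\phi_2^* = \phi_1^* \circ (m\,\id_A)$, and Proposition~\ref{prop:SameDuals-stab} applies once more.

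For the ``only if'' direction, write $\phi_2 = \phi_1 \circ \tau$ with $\tau \in \aut(A)$, which is the unique automorphism with this property by Proposition~\ref{prop:DualitiesAut}. By Proposition~\ref{prop:SameDuals-stab} together with that uniqueness, the hypothesis $\mathfrak{R}_{\phi_2}(H) = \mathfrak{R}_{\phi_1}(H)$ for all subgroups $H$ forces $\tau \in \stab(H)$ for every $H \subseteq A$; that is, $\tau$ fixes every subgroup of $A$ setwise. (Only the right-dual half of the hypothesis is used.) It then remains to prove the key lemma: \emph{an automorphism $\tau$ of a finite abelian $p$-group $A$ that fixes every subgroup setwise has the form $\tau = m\,\id_A$ for some integer $m$ coprime to $p$}; granting this, $\phi_2 = \phi_1 \circ (m\,\id_A) = \phi_1^m$. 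To prove the lemma, note that $\tau$ fixes every cyclic subgroup, so $\tau(a) \in \langle a\rangle$ for each $a \in A$. Write $A = \bigoplus_{i=1}^\ell \Z/p^{n_i}\Z$ with $n_1 \le \cdots \le n_\ell$ and standard generators $e_i$, and put $\tau(e_i) = m_i e_i$. Applying $\tau$ to $e_i + e_\ell$ and using $\tau(e_i+e_\ell) \in \langle e_i + e_\ell\rangle$, a comparison of components gives $m_i \equiv m_\ell \pmod{p^{n_i}}$ for every $i$ (here one uses $n_i \le n_\ell$ so that a congruence mod $p^{n_\ell}$ descends mod $p^{n_i}$). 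Hence $\tau(e_i) = m_\ell e_i$ for all $i$, so $\tau = m\,\id_A$ with $m := m_\ell$; and $m$ is coprime to $p$ because $\tau$ restricts to an automorphism of $\langle e_\ell\rangle \cong \Z/p^{n_\ell}\Z$. The degenerate cases $A = 0$ and $A$ cyclic are trivial ($m = 1$, resp.\ $m = m_1$).

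The main obstacle is the key lemma; everything else is bookkeeping with Propositions~\ref{prop:DualitiesAut} and~\ref{prop:SameDuals-stab} and Lemma~\ref{lem:DualOfmid}. The lemma is exactly where the $p$-group hypothesis is essential: it lets one synchronize the ``local'' power $m_a$ attached to each element into a single global $m$, whereas for a general finite abelian group one would at best obtain a different scalar on each primary component (which is why the statement is phrased for $p$-groups, consistent with the prime-by-prime reduction of Proposition~\ref{prop:PrimeByPrime}). The only mild care needed is in the comparison-of-components step, which is why one takes the largest cyclic summand as reference.
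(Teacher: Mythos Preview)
Your proposal is correct and follows essentially the same route as the paper: both directions rest on Proposition~\ref{prop:SameDuals-stab} (together with Lemma~\ref{lem:DualOfmid} for the left duals), reducing the ``only if'' direction to showing that an automorphism fixing every subgroup of a finite abelian $p$-group is a scalar $m\,\id_A$. The only cosmetic difference is in that lemma: the paper compares adjacent summands via generators $(0,\ldots,0,1,1,0,\ldots,0)$ to obtain a chain $m_{i+1}\equiv m_i \pmod{p^{e_i}}$ and then sets $m=m_\ell$, whereas you compare each summand directly to the largest one via $e_i+e_\ell$ to get $m_i\equiv m_\ell \pmod{p^{n_i}}$ in one step; both arguments are equally valid.
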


\begin{proof}
If $m$ is relatively prime to $p$, then $m \id_A$ is an automorphism of $A$ and leaves every subgroup of $A$ invariant.  By Lemma~\ref{lem:DualOfmid} and Proposition~\ref{prop:SameDuals-stab}, if $\phi_2 = \phi_1^m$, then $\phi_1$ and $\phi_2$ yield the same dual codes for every subgroup.

Conversely, suppose $\phi_2 = \phi_1 \circ \tau$ for some $\tau \in \aut(A)$, and suppose $\phi_1, \phi_2$ yield the same dual codes for every subgroup.  By Proposition~\ref{prop:SameDuals-stab}, $\tau$ must leave every subgroup of $A$ invariant.  We need to show that $\tau = m \id_A$ for some integer $m$ that is relatively prime to $p$.

By the fundamental theorem of finite abelian groups, there are integers $1 \leq e_1 \leq e_2 \leq \cdots \leq e_\ell$ such that $A$ is isomorphic to
\[  \Z/p^{e_1} \Z \oplus \Z/p^{e_2} \Z \oplus \cdots \oplus \Z/p^{e_\ell} \Z  . \]
Among the subgroups of $A$ are those of the form $0 \oplus \cdots \oplus H_i \oplus \cdots \oplus 0$, with $0$s in all but one position, and $H_i=\Z/p^{e_i} \Z$.  Because all such subgroups are left invariant by $\tau$, we conclude that $\tau = \tau_1 \oplus \tau_2 \oplus \cdots \oplus \tau_\ell$, where each $\tau_i$ is an automorphism of $\Z/p^{e_i} \Z$.  By the structure of $\Z/p^{e_i} \Z$, we know that each $\tau_i$ is multiplication by some integer $m_i$ that is relatively prime to $p$.  By considering cyclic subgroups generated by elements such as $(0, \ldots, 0, 1, 1, 0, \ldots, 0)$, with two adjacent nonzero entries, invariance implies that $m_{i+1} \equiv m_i \bmod p^{e_i}$.  Then set $m=m_\ell$.
\end{proof}

\section{MacWilliams identities}  \label{sec:MWIds}

There are several forms of the MacWilliams identities that are valid over finite abelian groups \cite{delsarte:linear-programming, MR3336966, wood:turkey}.  We examine two cases: the Hamming weight enumerator and the complete enumerator.  (We will defer discussing the symmetrized enumerator of a group action to another paper.)  Both cases make use of the Fourier transform and the Poisson summation formula.

Let $A$ be a finite abelian group, with character group $\ah$.  For an element $a = (a_1, a_2, \ldots, a_n) \in A^n$, define its \emph{Hamming weight} by $\wh(a) = \size{\{i: a_i \neq 0\}}$.  When the abelian group is written multiplicatively, as with $\ah$, the Hamming weight is $\wh(\pi) = \size{\{i: \pi_i \neq 1\}}$.  For an additive code $C \subseteq A^n$, its \emph{Hamming weight enumerator} is the following polynomial in $\C[X,Y]$:
\[  \hwe_C(X,Y) = \sum_{c \in C} X^{n - \wh(c)} Y^{\wh(c)} .  \]

To define the complete enumerator, let $\C[Z_a: a \in A]$ (written $\C[Z_*]$, for short) be a polynomial ring with $\size{A}$ indeterminates $Z_a$ indexed by $a \in A$.  For an additive code $C \subseteq A^n$, its \emph{complete enumerator} is the following polynomial in $\C[Z_*]$:
\[  \ce_C(Z_*) = \sum_{c \in C} \prod_{i=1}^n Z_{c_i} . \]

Continue to let $A$ be a finite abelian group, with character group $\ah$.  Let $V$ be a vector space over the complex numbers $\C$.  Define $F(A, V) = \{f: A \ra V\}$, the set of all functions from $A$ to $V$; $F(A,V)$ is also a vector space over $\C$ under point-wise addition and scalar multiplication of functions.  The \emph{Fourier transform} is a $\C$-linear transformation $F(A,V) \ra F(\ah,V)$ defined by 
\begin{equation}  \label{eqn:DefFT}
\ft{f}(\pi) = \sum_{a \in A} \pair{\pi}{a} f(a) , \quad f \in F(A,V), \quad \pi \in \ah .
\end{equation}

\begin{lem}  \label{lem:FTInversion}
The Fourier transform is invertible.  For $f \in F(A,V)$ and $a \in A$,
\[  f(a) = \frac{1}{\size{A}} \sum_{\pi \in \ah} \pair{\pi}{-a} \ft{f}(\pi) . \]
\end{lem}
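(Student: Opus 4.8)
**

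The plan is to prove invertibility of the Fourier transform by exhibiting the claimed inverse formula directly, i.e., by defining the candidate inverse operator and checking it composes with $\ft{\cdot}$ to give the identity on $F(A,V)$. Since $F(A,V)$ and $F(\ah,V)$ have the same finite dimension over $\C$ (namely $\size{A} \cdot \dim_\C V$, or just $\size{A}$ in the relevant sense), it suffices to show one composite is the identity.

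First I would define $g \in F(A,V)$ by $g(a) = \frac{1}{\size{A}} \sum_{\pi \in \ah} \pair{\pi}{-a} \ft{f}(\pi)$ and compute, substituting the definition \eqref{eqn:DefFT} of $\ft{f}$:
\[
g(a) = \frac{1}{\size{A}} \sum_{\pi \in \ah} \pair{\pi}{-a} \sum_{b \in A} \pair{\pi}{b} f(b) = \frac{1}{\size{A}} \sum_{b \in A} f(b) \sum_{\pi \in \ah} \pair{\pi}{b-a} ,
\]
where I have interchanged the two finite sums and used \eqref{eqn:CharHom} together with $\pair{\pi}{-a} \pair{\pi}{b} = \pair{\pi}{b-a}$. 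The inner sum $\sum_{\pi \in \ah} \pair{\pi}{b-a}$ is exactly the second sum evaluated in Corollary~\ref{cor:VanishSum}: it equals $\size{A}$ when $b = a$ and $0$ otherwise. Hence only the $b = a$ term survives, giving $g(a) = \frac{1}{\size{A}} \cdot \size{A} \cdot f(a) = f(a)$, so $g = f$.

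This shows that the linear map $F(\ah, V) \ra F(A,V)$ sending $h$ to $a \mapsto \frac{1}{\size{A}} \sum_{\pi \in \ah} \pair{\pi}{-a} h(\pi)$ is a left inverse of the Fourier transform; since the Fourier transform is a linear map between $\C$-vector spaces of equal finite dimension, a left inverse is a two-sided inverse, so the Fourier transform is invertible and its inverse is given by the stated formula. I expect the only point requiring care is the justification of interchanging the order of summation (immediate, as both index sets are finite) and correctly invoking Corollary~\ref{cor:VanishSum} with the argument $b - a$ rather than a single element; there is no real obstacle here, as the result is a routine consequence of the orthogonality relations already established.
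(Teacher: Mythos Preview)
Your proof is correct and follows exactly the same route as the paper: substitute the definition of $\ft{f}$, interchange the two finite sums, combine $\pair{\pi}{-a}\pair{\pi}{b}=\pair{\pi}{b-a}$, and apply Corollary~\ref{cor:VanishSum}. The added dimension argument for two-sided invertibility is unnecessary (and slightly imprecise, since $V$ may be infinite-dimensional, e.g., $V=\C[X,Y]$ later in the paper); the computation itself already establishes the inversion formula, which is all the paper proves as well.
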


\begin{proof}
Calculate, using \eqref{eqn:DefFT}, Corollary~\ref{cor:VanishSum}, and Proposition~\ref{prop:SizeCharGroup}:
\begin{align*}
\sum_{\pi \in \ah} \pair{\pi}{-a} \ft{f}(\pi) &= \sum_{\pi \in \ah} \pair{\pi}{-a} \sum_{b \in A} \pair{\pi}{b} f(b) \\
&= \sum_{b \in A} \left( \sum_{\pi \in \ah} \pair{\pi}{b-a} \right) f(b) = \size{A} f(a) .  \qedhere
\end{align*}
\end{proof}

\begin{thm}[Poisson summation formula]  \label{thm:PSF}
Suppose $H \subseteq A$ is a subgroup of a finite abelian group $A$.  If $f \in F(A,V)$, then 
\[  \sum_{a \in H} f(a) = \frac{1}{\size{(A:H)}} \sum_{\pi \in (\ah:H)} \ft{f}(\pi) . \]
\end{thm}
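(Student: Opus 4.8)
The plan is to start from the right-hand side and reduce it to the left-hand side by expanding the Fourier transform and interchanging the order of summation. Using the definition \eqref{eqn:DefFT}, write $\ft{f}(\pi) = \sum_{b \in A} \pair{\pi}{b} f(b)$, so that
\[ \sum_{\pi \in (\ah:H)} \ft{f}(\pi) = \sum_{\pi \in (\ah:H)} \sum_{b \in A} \pair{\pi}{b} f(b) = \sum_{b \in A} \Bigl( \sum_{\pi \in (\ah:H)} \pair{\pi}{b} \Bigr) f(b), \]
the interchange being legitimate because both sums are finite and $V$ is a $\C$-vector space.

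The key step is to evaluate the inner sum $\sum_{\pi \in (\ah:H)} \pair{\pi}{b}$ by applying the dual form of Proposition~\ref{prop:VanishingSums} to the subgroup $E = (\ah:H) \subseteq \ah$. That proposition gives $\sum_{\pi \in (\ah:H)} \pair{\pi}{b} = \size{(\ah:H)}$ when $b \in (A:(\ah:H))$ and $0$ otherwise. By the double annihilator identity of Corollary~\ref{cor:DoubleAnnihilator}, $(A:(\ah:H)) = H$, so the inner sum equals $\size{(\ah:H)}$ for $b \in H$ and vanishes for $b \notin H$. Hence
\[ \sum_{\pi \in (\ah:H)} \ft{f}(\pi) = \size{(\ah:H)} \sum_{b \in H} f(b). \]

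To finish, I would invoke the size statement $\size{(\ah:H)} = \size{A}/\size{H}$ from Corollary~\ref{cor:DoubleAnnihilator}, which is exactly the index $\size{(A:H)}$, and divide through to obtain the claimed formula. There is no serious obstacle; the only points requiring care are applying the correct (dual) half of Proposition~\ref{prop:VanishingSums}, namely summing over characters $\pi$ rather than over group elements, and keeping in mind the identification of $A$ with $\Hom_\Z(\ah, \C^\times)$ via $\ev$ so that the pairing $\pair{\pi}{b}$ is the one appearing in that proposition.
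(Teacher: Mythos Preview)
Your argument is correct and close in spirit to the paper's, but the route is slightly different.  The paper starts from the Fourier inversion formula of Lemma~\ref{lem:FTInversion}, sums both sides over $a \in H$, interchanges the two sums, and applies the \emph{first} half of Proposition~\ref{prop:VanishingSums} (the sum over $h \in H$ of $\pair{\pi}{-h}$) to pick out the characters in $(\ah:H)$.  You instead expand $\ft{f}$ directly from its definition, interchange, and apply the \emph{dual} half of Proposition~\ref{prop:VanishingSums} (the sum over $\pi \in E$) together with the double annihilator identity $(A:(\ah:H)) = H$ from Corollary~\ref{cor:DoubleAnnihilator}.  Your route is marginally more economical in that it does not pass through Fourier inversion, at the cost of needing the double annihilator; the paper's route avoids the double annihilator but invokes Lemma~\ref{lem:FTInversion}.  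Both collapse to the same identity once one uses $\size{(\ah:H)} = \size{A}/\size{H}$.
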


\begin{proof}
Sum the equation in Lemma~\ref{lem:FTInversion} over $a \in H$:
\begin{align*}
\size{A} \sum_{a \in H} f(a) &= \sum_{a \in H} \sum_{\pi \in \ah} \pair{\pi}{-a} \ft{f}(\pi) \\
&= \sum_{\pi \in \ah} \left(  \sum_{a \in H} \pair{\pi}{-a} \right)  \ft{f}(\pi) = \size{H} \sum_{\pi \in (\ah:H)} \ft{f}(\pi) ,
\end{align*}
using Proposition~\ref{prop:VanishingSums} and Corollary~\ref{cor:DoubleAnnihilator}.
\end{proof}

We will now apply the Poisson summation formula to prove the MacWilliams identities for the Hamming and complete enumerators.  In the Poisson summation formula, the abelian group will be $A^n$ and the subgroup will be the additive code $C \subseteq A^n$.  The function $f: A^n \ra V$ will be $f: A^n \ra \C[X,Y]$, $f(x) = X^{n-\wh(x)} Y^{\wh(x)}$, in the case of the Hamming weight enuemrator, and $f: A^n \ra \C[Z_*]$, $f(x) = \prod_{i=1}^n Z_{x_i}$, in the case of the complete enumerator.

Both functions $f: A^n \ra V$ have a special form.  The vector space $V$ is actually a commutative complex algebra $\mathscr{A}$, and the function is a product of functions from $A$ to $\mathscr{A}$.  To be specific, in the Hamming case, let $g: A \ra \C[X,Y]$ be $g(a) = X^{1-\wh(a)} Y^{\wh(a)}$.  In the complete case, let $g: A \ra \C[Z_*]$ be $g(a) = Z_a$.  Then, in each case, for $x = (x_1, x_2, \ldots, x_n) \in A^n$, $f(x) = \prod_{i=1}^n g(x_i)$.  Because of this special form, the Fourier transform is easy to calculate.

\begin{lem}  \label{lem:FTOfProduct}
Let $A$ be a finite abelian group and $\mathscr{A}$ be a commutative complex algebra.  Suppose there are functions $f_i: A \ra \mathscr{A}$, $i=1, 2, \ldots, n$, such that $f: A^n \ra \mathscr{A}$ satisfies $f(x) = \prod_{i=1}^n f_i(x_i)$, for $x = (x_1, x_2, \ldots, x_n) \in A^n$.  Then, for $\pi = (\pi_1, \pi_2, \ldots, \pi_n) \in \ah^n$, 
\[  \ft{f}(\pi) = \prod_{i=1}^n \ft{f}_i(\pi_i) . \]
\end{lem}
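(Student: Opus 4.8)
The plan is to compute $\ft{f}(\pi)$ directly from the definition \eqref{eqn:DefFT} by expanding everything in sight and rearranging a sum over $A^n$ into an $n$-fold nested sum, each over a single copy of $A$. First I would write
\[
\ft{f}(\pi) = \sum_{x \in A^n} \pair{\pi}{x}_{A^n}\, f(x)
= \sum_{(x_1,\ldots,x_n) \in A^n} \left( \prod_{i=1}^n \pair{\pi_i}{x_i}_A \right) \left( \prod_{i=1}^n f_i(x_i) \right),
\]
using the definition \eqref{eqn:DiagExtension} of the extended pairing on $A^n$ together with the hypothesis $f(x) = \prod_{i=1}^n f_i(x_i)$. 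Then I would regroup the two products into a single product $\prod_{i=1}^n \bigl( \pair{\pi_i}{x_i}_A\, f_i(x_i) \bigr)$, which is legitimate because $\mathscr{A}$ is commutative.

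The key step is then the standard ``distributivity'' identity: a sum over a Cartesian product of a product of one-variable terms factors as a product of one-variable sums, i.e.
\[
\sum_{(x_1,\ldots,x_n) \in A^n} \prod_{i=1}^n h_i(x_i) = \prod_{i=1}^n \left( \sum_{x_i \in A} h_i(x_i) \right),
\]
valid in any commutative ring (here $\mathscr{A}$), with $h_i(x_i) = \pair{\pi_i}{x_i}_A f_i(x_i)$. Applying this and recognizing each inner sum as $\sum_{x_i \in A} \pair{\pi_i}{x_i}_A f_i(x_i) = \ft{f}_i(\pi_i)$ by \eqref{eqn:DefFT} gives the claim $\ft{f}(\pi) = \prod_{i=1}^n \ft{f}_i(\pi_i)$.

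There is no real obstacle here; the statement is essentially the multiplicativity of the Fourier transform under tensor/external products, and the proof is a bookkeeping exercise. The only point requiring a word of care is the interchange of sum and product, which rests on the distributive law in the commutative algebra $\mathscr{A}$ and can be justified by an easy induction on $n$ (the base case $n=1$ being trivial, and the inductive step splitting off the last coordinate $x_n$ and pulling the factor $\ft{f}_n(\pi_n)$ out of the sum over the first $n-1$ coordinates). I would state this induction in one sentence rather than write it out in full.
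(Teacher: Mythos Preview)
Your proposal is correct and follows essentially the same calculation as the paper's proof: expand $\ft{f}(\pi)$, factor the pairing and $f$ coordinatewise, and interchange the sum over $A^n$ with the product. One small nitpick: the identity $\pair{\pi}{x}_{A^n} = \prod_i \pair{\pi_i}{x_i}_A$ for a general $\pi \in \ah^n$ is really Lemma~\ref{lem:Products} rather than \eqref{eqn:DiagExtension} (the latter concerns extending a fixed duality $\phi$), so cite that instead.
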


\begin{proof}
This is a calculation, using Lemma~\ref{lem:Products}:
\begin{align*}
\ft{f}(\pi) &= \sum_{x \in A^n} \pair{\pi}{x} f(x) = \sum_{x \in A^n} \prod_{i=1}^n \left( \pair{\pi_i}{x_i} f_i(x_i) \right) \\
&= \prod_{i=1}^n \left( \sum_{x_i \in A} \pair{\pi_i}{x_i} f_i(x_i) \right) =  \prod_{i=1}^n \ft{f}_i(\pi_i) . \qedhere
\end{align*}
\end{proof}

We now calculate the Fourier transforms of the one-variable functions in each case.

\begin{lem}  \label{lem:FTforHamming}
Let $A$ be a finite abelian group.  If $g: A \ra \C[X,Y]$ is given by $g(a) = X^{1-\wh(a)} Y^{\wh(a)}$, then, for $\pi \in \ah$,
\[  \ft{g}(\pi) = \begin{cases}
X + (\size{A}-1) Y, & \pi = 1, \\
X-Y, & \pi \neq 1 .
\end{cases}  \]
\end{lem}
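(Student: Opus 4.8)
The plan is to evaluate the defining sum $\ft{g}(\pi) = \sum_{a \in A} \pair{\pi}{a}\, g(a)$ directly by splitting off the contribution of the zero element. The point is that $\wh(a)$ takes only two values on $A$: it is $0$ when $a=0$ and $1$ when $a \neq 0$. Hence $g(0) = X^{1}Y^{0} = X$, while $g(a) = X^{0}Y^{1} = Y$ for every nonzero $a$. Since $\pair{\pi}{0} = 1$ for any character $\pi$, I would write
\[
\ft{g}(\pi) = \pair{\pi}{0}\, g(0) + \sum_{a \neq 0} \pair{\pi}{a}\, g(a) = X + Y \sum_{a \neq 0} \pair{\pi}{a}.
\]

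Next I would rewrite the remaining sum as $\sum_{a \neq 0}\pair{\pi}{a} = \left(\sum_{a \in A}\pair{\pi}{a}\right) - 1$, and invoke Corollary~\ref{cor:VanishSum}, which says $\sum_{a \in A}\pair{\pi}{a}$ equals $\size{A}$ when $\pi = 1$ and $0$ when $\pi \neq 1$. Substituting the two cases gives $\ft{g}(\pi) = X + Y(\size{A} - 1)$ when $\pi = 1$ and $\ft{g}(\pi) = X + Y(0 - 1) = X - Y$ when $\pi \neq 1$, which is exactly the claimed formula.

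Honestly, there is no serious obstacle here: the only thing to be careful about is the bookkeeping in separating the $a = 0$ term and correctly tracking the $-1$ that appears when one passes from $\sum_{a \neq 0}$ to $\sum_{a \in A}$, so that the $\pi \neq 1$ case produces $X - Y$ rather than $X$. Everything else is an immediate application of Corollary~\ref{cor:VanishSum}. (An alternative packaging, should one prefer to avoid the case split mid-computation, is to note that $g(a) = X\,[a=0] + Y\,[a \neq 0]$ as an element of $\C[X,Y]$ and apply linearity of the Fourier transform together with Corollary~\ref{cor:VanishSum} term by term; this yields the same answer.)
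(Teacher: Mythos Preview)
Your proof is correct and follows essentially the same approach as the paper's own proof: split off the $a=0$ term, use $\pair{\pi}{0}=1$, and then apply Corollary~\ref{cor:VanishSum} to evaluate $\sum_{a\neq 0}\pair{\pi}{a}$. The only difference is cosmetic---you explicitly rewrite $\sum_{a\neq 0}$ as $\sum_{a\in A}-1$, whereas the paper applies the case split directly to the partial sum.
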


\begin{proof}
Calculate, using $\pi(0)=1$ and Corollary~\ref{cor:VanishSum}:
\begin{align*}
\ft{g}(\pi) &= \sum_{a \in A} \pair{\pi}{a} X^{1-\wh(a)} Y^{\wh(a)} = X + \sum_{a \neq 0} \pair{\pi}{a} Y \\
&= \begin{cases}
X + (\size{A}-1) Y, & \pi = 1, \\
X-Y, & \pi \neq 1. 
\end{cases}  \qedhere
\end{align*}
\end{proof}

\begin{lem}
Let $A$ be a finite abelian group.  If $g: A \ra \C[Z_*]$ is given by $g(a) = Z_a$, then, for $\pi \in \ah$,
\[  \ft{g}(\pi) = \sum_{a \in A} \pair{\pi}{a} Z_a .  \]
\end{lem}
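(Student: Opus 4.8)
The plan is simply to unwind the definition of the Fourier transform. Equation~\eqref{eqn:DefFT} defines $\ft{f}(\pi) = \sum_{a \in A} \pair{\pi}{a} f(a)$ for any $f \in F(A,V)$ and $\pi \in \ah$. Taking $V = \C[Z_*]$ and $f = g$, where $g(a) = Z_a$, yields
\[ \ft{g}(\pi) = \sum_{a \in A} \pair{\pi}{a} g(a) = \sum_{a \in A} \pair{\pi}{a} Z_a \]
immediately, so the proof is a single substitution with no computation beyond it.

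Unlike the Hamming case of Lemma~\ref{lem:FTforHamming}, no further simplification is available: there the coefficient ring $\C[X,Y]$ has only two indeterminates and $g$ takes just the values $X$ (at $a = 0$) and $Y$ (at $a \neq 0$), so the orthogonality relation of Corollary~\ref{cor:VanishSum} collapses the sum into the two cases $\pi = 1$ and $\pi \neq 1$. Here the $\size{A}$ indeterminates $Z_a$ are algebraically independent, so $\sum_{a \in A} \pair{\pi}{a} Z_a$ admits no closed form and is recorded as is. There is thus no genuine obstacle; the only thing worth noting is that this expression is exactly the per-coordinate factor that will be fed into Lemma~\ref{lem:FTOfProduct} to compute the Fourier transform of the complete-enumerator function $f(x) = \prod_{i=1}^n Z_{x_i}$, which is why it is isolated as a lemma.
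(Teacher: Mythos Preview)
Your proof is correct and matches the paper's own argument exactly: the paper's proof is simply the one-line remark ``This is \eqref{eqn:DefFT},'' which is precisely the substitution you carry out. Your additional commentary comparing with the Hamming case is accurate but not part of the proof proper.
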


\begin{proof}
This is \eqref{eqn:DefFT}.
\end{proof}

We now assemble all the pieces.

\begin{thm}[MacWilliams identities]  \label{thm:AhatFormMWids}
Let $A$ be a finite abelian group.  If $C \subseteq A^n$ is an additive code, then 
\begin{align*}
\hwe_C(X,Y) &= \frac{1}{\size{(\ah^n:C)}} \hwe_{(\ah^n:C)}(X+(\size{A}-1)Y, X-Y) , \\
\ce_C(Z_*) &= \frac{1}{\size{(\ah^n:C)}} \left. \ce_{(\ah^n:C)}(\mathcal{Z}_*)\right|_{\mathcal{Z}_\pi \leftarrow \sum_{a \in A} \pair{\pi}{a} Z_a} .
\end{align*}
\end{thm}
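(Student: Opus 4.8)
The plan is to assemble the pieces already in place: apply the Poisson summation formula (Theorem~\ref{thm:PSF}) with ambient group $A^n$, subgroup $C \subseteq A^n$, and the enumerator-generating function $f$, then evaluate $\ft f$ using the multiplicativity of Lemma~\ref{lem:FTOfProduct} together with the one-variable Fourier computations (Lemma~\ref{lem:FTforHamming} in the Hamming case, and the computation for $g(a) = Z_a$ in the complete case). Finally recognize the resulting sum over $(\ah^n:C)$ as the relevant enumerator of $(\ah^n:C)$ evaluated at the substituted variables, and note $\size{(A^n:C)} = \size{A}^n/\size{C} = \size{(\ah^n:C)}$ by Corollary~\ref{cor:DoubleAnnihilator}.

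For the Hamming identity, set $g(a) = X^{1-\wh(a)}Y^{\wh(a)}$ and $f(x) = \prod_{i=1}^n g(x_i) = X^{n-\wh(x)}Y^{\wh(x)}$ for $x \in A^n$, so that $\hwe_C(X,Y) = \sum_{c \in C} f(c)$. Poisson summation gives $\hwe_C(X,Y) = \frac{1}{\size{(\ah^n:C)}}\sum_{\pi \in (\ah^n:C)} \ft f(\pi)$. By Lemma~\ref{lem:FTOfProduct}, $\ft f(\pi) = \prod_{i=1}^n \ft g(\pi_i)$, and by Lemma~\ref{lem:FTforHamming} the $i$th factor is $X+(\size{A}-1)Y$ when $\pi_i = 1$ and $X-Y$ when $\pi_i \neq 1$; hence $\ft f(\pi) = (X+(\size{A}-1)Y)^{n-\wh(\pi)}(X-Y)^{\wh(\pi)}$. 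Summing over $\pi \in (\ah^n:C)$ yields exactly $\hwe_{(\ah^n:C)}(X+(\size{A}-1)Y,\, X-Y)$, which is the asserted formula.

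For the complete identity, run the same template with $g(a) = Z_a$ and $f(x) = \prod_{i=1}^n Z_{x_i}$, so $\ce_C(Z_*) = \sum_{c \in C} f(c)$. Poisson summation and Lemma~\ref{lem:FTOfProduct} give $\ce_C(Z_*) = \frac{1}{\size{(\ah^n:C)}}\sum_{\pi \in (\ah^n:C)} \prod_{i=1}^n \ft g(\pi_i)$, where $\ft g(\pi_i) = \sum_{a \in A}\pair{\pi_i}{a}Z_a$. The one point to handle with care is the ring-theoretic bookkeeping: $\ce_{(\ah^n:C)}(\mathcal{Z}_*) = \sum_{\pi \in (\ah^n:C)} \prod_{i=1}^n \mathcal{Z}_{\pi_i}$ uses indeterminates $\mathcal{Z}_\pi$ indexed by characters $\pi \in \ah$, whereas $\ce_C$ uses indeterminates $Z_a$ indexed by $a \in A$. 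Making the substitution $\mathcal{Z}_\pi \leftarrow \sum_{a \in A}\pair{\pi}{a}Z_a$ converts $\prod_{i=1}^n \mathcal{Z}_{\pi_i}$ into $\prod_{i=1}^n \ft g(\pi_i) = \ft f(\pi)$, so the sum over $\pi$ becomes precisely the substituted $\ce_{(\ah^n:C)}$.

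I expect no serious obstacle: the proof is an assembly of already-established results, and the only delicate step is verifying in the complete-enumerator case that the variable substitution commutes with the product over the $n$ coordinates — which is exactly what Lemma~\ref{lem:FTOfProduct} packages. One could additionally transport $(\ah^n:C)$ back to $\mathfrak{L}_\phi(C) \subseteq A^n$ via a chosen duality $\phi$ (Lemma~\ref{lem:LeftRightDuals}) to restate the identities in terms of the dual code, but that is a corollary rather than part of the present proof.
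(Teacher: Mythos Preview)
Your proposal is correct and follows essentially the same route as the paper's proof: apply the Poisson summation formula (Theorem~\ref{thm:PSF}) to $C \subseteq A^n$ with the appropriate $f$, factor $\ft{f}(\pi)$ via Lemma~\ref{lem:FTOfProduct}, insert the one-variable transforms, and recognize the resulting sum as the enumerator of $(\ah^n:C)$ at the substituted variables. Your write-up is in fact a more detailed unpacking of the paper's terse argument, and your closing remark about transporting to $\mathfrak{L}_\phi(C)$ via a duality anticipates exactly what the paper does in the subsequent Theorem~\ref{thm:PulledBackMWids}.
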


\begin{proof}
As described earlier, one applies the Poisson summation formula, Theorem~\ref{thm:PSF}, to $C \subseteq A^n$, with $f(x) = X^{n-\wh(x)} Y^{\wh(x)}$ in the Hamming case and $f(x) = \prod_{i=1}^n Z_{x_i}$ in the complete case.  The key step is to recognize that the form of the factorization of $\ft{f}(\pi)$ given by Lemma~\ref{lem:FTOfProduct} depends exactly on $f(\pi)$.  Thus, the right side of the Poisson summation formula has the form of an enumerator.  The appropriate $\ft{g}(\pi_i)$ is then substituted.
\end{proof}

Note that the Hamming weight enumerator is obtained by substituting $X$ for $Z_0$ and $Y$ for each $Z_a$, $a \neq 0$, in the complete enumerator.  The resulting simplification of terms mimics the proof of Lemma~\ref{lem:FTforHamming}.

\begin{rem}  \label{rem:SwapSides}
Because of double duality, Lemma~\ref{cor:DoubleAnnihilator}, the roles of the additive code $C \subseteq A^n$ and its annihilator $(\ah^n:C) \subseteq \ah^n$ can be reversed.  Note the subtle difference in the substitutions in the complete case, which is a consequence of \eqref{eqn:DefnEval}.
\begin{align*}
\hwe_{(\ah^n:C)}(X,Y) &= \frac{1}{\size{C}} \hwe_C(X+(\size{A}-1)Y, X-Y) , \\
\ce_{(\ah^n:C)}(\mathcal{Z}_*) &= \frac{1}{\size{C}} \left. \ce_C(Z_*)\right|_{Z_a \leftarrow \sum_{\pi \in \ah} \pair{\pi}{a} \mathcal{Z}_\pi} .
\end{align*}
\end{rem}

Armed with Theorem~\ref{thm:AhatFormMWids}, we now fix a duality $\phi$ of $A$, extend it to $A^n$, and pull back the results to $A^n$.

\begin{thm}[MacWilliams identities]  \label{thm:PulledBackMWids}
Let $A$ be a finite abelian group, and let $\phi$ be a duality of $A$, extended to $A^n$.  If $C \subseteq A^n$ is an additive code, then 
\begin{align*}
\hwe_C(X,Y) &= \frac{1}{\size{\mathfrak{L}_\phi(C)}} \hwe_{\mathfrak{L}_\phi(C)}(X+(\size{A}-1)Y, X-Y) , \\
\hwe_C(X,Y) &= \frac{1}{\size{\mathfrak{R}_\phi(C)}} \hwe_{\mathfrak{R}_\phi(C)}(X+(\size{A}-1)Y, X-Y) , \\
\ce_C(Z_*) &= \frac{1}{\size{\mathfrak{L}_\phi(C)}} \left. \ce_{\mathfrak{L}_\phi(C)}(\mathfrak{Z}_*)\right|_{\mathfrak{Z}_b \leftarrow \sum_{a \in A} \Phi(b,a) Z_a} ,  \\
\ce_C(Z_*) &= \frac{1}{\size{\mathfrak{R}_\phi(C)}} \left. \ce_{\mathfrak{R}_\phi(C)}(\mathfrak{Z}_*)\right|_{\mathfrak{Z}_b \leftarrow \sum_{a \in A} \Phi(a,b) Z_a} .
\end{align*}
\end{thm}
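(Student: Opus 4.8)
The plan is to derive Theorem~\ref{thm:PulledBackMWids} by transporting the identities of Theorem~\ref{thm:AhatFormMWids} along the duality $\phi: A \ra \ah$, extended diagonally to $\phi: A^n \ra \ah^n$ as in \eqref{eqn:DiagExtension}. The starting point is that $\phi$ is a group isomorphism carrying $\mathfrak{L}_\phi(C)$ onto $(\ah^n:C)$ and $\phi^*$ carries $\mathfrak{R}_\phi(C)$ onto $(\ah^n:C)$, by Lemma~\ref{lem:LeftRightDuals}. Consequently $\size{\mathfrak{L}_\phi(C)} = \size{\mathfrak{R}_\phi(C)} = \size{(\ah^n:C)}$, which takes care of the scalar factors in front of each identity.

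For the Hamming case, I would observe that the Hamming weight is an isomorphism invariant in the following sense: for $x \in A^n$, the character $\phi(x) \in \ah^n$ has $i$th coordinate $\phi(x_i)$, which is the trivial character of $A$ if and only if $x_i = 0$ (since $\phi$ is injective). Hence $\wh(\phi(x)) = \wh(x)$, so $\phi$ restricts to a weight-preserving bijection $\mathfrak{L}_\phi(C) \ra (\ah^n:C)$, giving $\hwe_{\mathfrak{L}_\phi(C)}(X,Y) = \hwe_{(\ah^n:C)}(X,Y)$. Substituting this into the first identity of Theorem~\ref{thm:AhatFormMWids} yields the first line of the theorem. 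The second line follows identically using $\phi^*$ in place of $\phi$ (and $\mathfrak{R}_\phi(C) = \mathfrak{L}_{\phi^*}(C)$, Lemma~\ref{lem:LeftRightDuals}); note that $\phi^*$ is also a duality, so the same argument applies verbatim.

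For the complete enumerator, the key bookkeeping step is to match up the indeterminates. In Theorem~\ref{thm:AhatFormMWids} the complete enumerator of $(\ah^n:C)$ is a polynomial in indeterminates $\mathcal{Z}_\pi$ indexed by $\pi \in \ah$, with the substitution $\mathcal{Z}_\pi \leftarrow \sum_{a\in A}\pair{\pi}{a}Z_a$. Applying the bijection $\phi: \mathfrak{L}_\phi(C) \ra (\ah^n:C)$ reindexes this: the monomial $\prod_i \mathcal{Z}_{\phi(c_i)}$ corresponds to $\prod_i \mathfrak{Z}_{c_i}$ under $\mathfrak{Z}_b := \mathcal{Z}_{\phi(b)}$, so $\ce_{(\ah^n:C)}(\mathcal{Z}_*)$, as a polynomial, equals $\ce_{\mathfrak{L}_\phi(C)}(\mathfrak{Z}_*)$ after this relabeling. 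Under the relabeling the substitution $\mathcal{Z}_{\phi(b)} \leftarrow \sum_{a\in A}\pair{\phi(b)}{a}Z_a = \sum_{a\in A}\Phi(b,a)Z_a$ becomes exactly $\mathfrak{Z}_b \leftarrow \sum_{a\in A}\Phi(b,a)Z_a$, which is the third line. The fourth line is obtained by running the same argument with $\phi^*$ in place of $\phi$: then $\mathfrak{R}_\phi(C) = \mathfrak{L}_{\phi^*}(C)$, and $\pair{\phi^*(b)}{a} = \pair{\phi(a)}{b} = \Phi(a,b)$ by Lemma~\ref{lem:SymDual}, which accounts for the swap in the arguments of $\Phi$ in the last substitution.

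The main obstacle I anticipate is purely notational rather than mathematical: keeping the reindexing of the polynomial variables $\mathcal{Z}_\pi \leftrightarrow \mathfrak{Z}_b$ consistent with the direction of $\phi$ versus $\phi^*$, and making sure the final substitution uses $\Phi(b,a)$ in the left case and $\Phi(a,b)$ in the right case. Everything else — the scalar factors, the weight-preservation, the mechanical substitution into Theorem~\ref{thm:AhatFormMWids} — is routine once the bijections $\phi$ and $\phi^*$ between the dual codes and the annihilator are in hand.
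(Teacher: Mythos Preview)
Your proposal is correct and follows essentially the same approach as the paper: both transport the identities of Theorem~\ref{thm:AhatFormMWids} along the coordinatewise dualities $\phi$ and $\phi^*$, using Lemma~\ref{lem:LeftRightDuals} to identify $\mathfrak{L}_\phi(C)$ and $\mathfrak{R}_\phi(C)$ with $(\ah^n:C)$, then noting weight-preservation for the Hamming case and relabeling $\mathfrak{Z}_b = \mathcal{Z}_{\phi(b)}$ (resp., $\mathcal{Z}_{\phi^*(b)}$) for the complete case. Your bookkeeping of the $\Phi(b,a)$ versus $\Phi(a,b)$ substitutions via Lemma~\ref{lem:SymDual} matches the paper's exactly.
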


\begin{proof}
Suppose $C \subseteq A^n$ is an additive code.  The duality $\phi$, extended coordinatewise to $A^n$, is an isomorphism from $A^n$ to $\ah^n$ that takes $\mathfrak{L}_\phi(C)$ to $(\ah^n:C)$, by Lemma~\ref{lem:LeftRightDuals}.  Because $\phi$ is extended coordinatewise, $\phi$ preserves the Hamming weight, so that $\wh(x) = \wh(\phi(x))$ for all $x \in A^n$.  This implies the first equation in the theorem.  The second equation follows from the same argument applied to $\phi^*$.

For the complete enumerators, we use the indeterminates $Z_*$ for $C$, $\mathcal{Z}_*$ for $(\ah^n:C)$, and $\mathfrak{Z}_*$ for the dual codes.  Under $\phi: A^n \ra \ah^n$, $\mathfrak{Z}_a$ will correspond to $\mathcal{Z}_{\phi(a)}$, while under $\phi^*: A^n \ra \ah^n$, $\mathfrak{Z}_a$ will correspond to $\mathcal{Z}_{\phi^*(a)}$.  Thus, using $\phi$, $\mathfrak{L}_\phi(C)$ corresponds to $(\ah^n:C)$, and the substitution is $\mathfrak{Z}_b \leftarrow \sum_{a \in A} \pair{\phi(b)}{a} Z_a = \sum_{a \in A} \Phi(b,a) Z_a$.  In contrast, using $\phi^*$, $\mathfrak{R}_\phi(C)$ corresponds to $(\ah^n:C)$, and the substitution is $\mathfrak{Z}_b \leftarrow \sum_{a \in A} \pair{\phi^*(b)}{a} Z_a = \sum_{a \in A} \Phi^*(b,a) Z_a = \sum_{a \in A} \Phi(a,b) Z_a$.
\end{proof}

Now we reverse the roles of the additive code $C \subseteq A^n$ and its dual codes $\mathfrak{L}_\phi(C), \mathfrak{R}_\phi(C) \subseteq A^n$, using Proposition~\ref{prop:DualProperties}.

\begin{cor}
Let $A$ be a finite abelian group, and let $\phi$ be a duality of $A$, extended to $A^n$.  If $C \subseteq A^n$ is an additive code, then 
\begin{align*}
\hwe_{\mathfrak{L}_\phi(C)}(X,Y) &= \hwe_{\mathfrak{R}_\phi(C)} (X,Y) \\
&= \frac{1}{\size{C}} \hwe_C (X+(\size{A}-1)Y, X-Y) , \\
\ce_{\mathfrak{L}_\phi(C)}(\mathfrak{Z}_*) &= \frac{1}{\size{C}} \left. \ce_C(Z_*)\right|_{Z_b \leftarrow \sum_{a \in A} \Phi(a,b) \mathfrak{Z}_a} ,  \\
\ce_{\mathfrak{R}_\phi(C)}(\mathfrak{Z}_*) &= \frac{1}{\size{C}} \left. \ce_C (Z_*)\right|_{Z_b \leftarrow \sum_{a \in A} \Phi(b,a) \mathfrak{Z}_a} .
\end{align*}
\end{cor}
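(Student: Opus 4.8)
The plan is to apply Theorem~\ref{thm:PulledBackMWids} not to $C$ but to its dual codes, and then to invoke the double-duality relations $\mathfrak{L}_\phi(\mathfrak{R}_\phi(C)) = C = \mathfrak{R}_\phi(\mathfrak{L}_\phi(C))$ from Proposition~\ref{prop:DualProperties} so that every right-hand side is expressed through $C$ alone. Nothing new needs to be proved; the content is just a reindexing of the four identities already in hand.

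For the Hamming enumerators I would first take the second displayed identity of Theorem~\ref{thm:PulledBackMWids} with $C$ replaced by $D := \mathfrak{L}_\phi(C)$; since $\mathfrak{R}_\phi(D) = C$ and $\size{D} = \size{A}^n/\size{C}$ by Proposition~\ref{prop:DualProperties}, this yields $\hwe_{\mathfrak{L}_\phi(C)}(X,Y) = \frac{1}{\size{C}} \hwe_C(X + (\size{A}-1)Y, X-Y)$. Taking the first displayed identity of the theorem with $C$ replaced by $D := \mathfrak{R}_\phi(C)$, and using $\mathfrak{L}_\phi(D) = C$, produces the identical expression for $\hwe_{\mathfrak{R}_\phi(C)}(X,Y)$, whence the two enumerators coincide. (Alternatively, one could note that $\phi$ and $\phi^*$ carry $\mathfrak{L}_\phi(C)$ and $\mathfrak{R}_\phi(C)$, respectively, isomorphically onto $(\ah^n:C)$ while preserving Hamming weight, so both enumerators equal $\hwe_{(\ah^n:C)}$, and then apply Remark~\ref{rem:SwapSides}.)

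For the complete enumerators I would apply the third displayed identity of Theorem~\ref{thm:PulledBackMWids} with $C$ replaced by $D := \mathfrak{R}_\phi(C)$, so that $\mathfrak{L}_\phi(D) = C$; relabelling the indeterminates attached to $D$ as $\mathfrak{Z}_*$ and those attached to $C = \mathfrak{L}_\phi(D)$ as $Z_*$ turns the formula into the last displayed identity of the corollary. Symmetrically, applying the fourth displayed identity of the theorem with $C$ replaced by $D := \mathfrak{L}_\phi(C)$, so that $\mathfrak{R}_\phi(D) = C$, and relabelling in the same way gives the identity for $\ce_{\mathfrak{L}_\phi(C)}$, with $\Phi(a,b)$ in place of $\Phi(b,a)$ exactly as in the theorem.

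The only point needing attention is the bookkeeping of the polynomial indeterminates. Because $C$, $\mathfrak{L}_\phi(C)$, and $\mathfrak{R}_\phi(C)$ all sit inside $A^n$, the variable families $Z_*$ and $\mathfrak{Z}_*$ are indexed by the same set $A$, so swapping a code with its dual is a pure renaming, and no reindexing substitution of the kind flagged in Remark~\ref{rem:SwapSides} (coming from \eqref{eqn:DefnEval}) intervenes. I expect the only real risk to be a transcription slip over whether $\Phi(a,b)$ or $\Phi(b,a)$ attaches to the left versus the right dual code, which the factorization bookkeeping in the proof of Theorem~\ref{thm:PulledBackMWids} already settles.
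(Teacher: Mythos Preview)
Your argument is correct. The route differs slightly from the paper's: whereas you stay entirely inside $A^n$, applying Theorem~\ref{thm:PulledBackMWids} to the dual codes $\mathfrak{L}_\phi(C)$ and $\mathfrak{R}_\phi(C)$ and then invoking double duality from Proposition~\ref{prop:DualProperties}, the paper instead passes through the annihilator $(\ah^n:C)$ in $\ah^n$. Specifically, the paper argues that $\phi$ and $\phi^*$ carry $\mathfrak{L}_\phi(C)$ and $\mathfrak{R}_\phi(C)$ weight-preservingly onto $(\ah^n:C)$---which is exactly the alternative you mention parenthetically for the Hamming case---and then pulls back the reversed identities of Remark~\ref{rem:SwapSides}, tracking the substitution $Z_b \leftarrow \sum_{\pi} \pair{\pi}{b}\mathcal{Z}_\pi$ under $\mathcal{Z}_{\phi(a)} \leftrightarrow \mathfrak{Z}_a$ (respectively $\mathcal{Z}_{\phi^*(a)} \leftrightarrow \mathfrak{Z}_a$). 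Your approach is arguably tidier, since it avoids revisiting $\ah^n$ and the $\ev$ identification; the paper's approach makes the role of the annihilator more visible and connects directly to Remark~\ref{rem:SwapSides}. The substitution bookkeeping lands in the same place either way, and your caution about which of $\Phi(a,b)$ versus $\Phi(b,a)$ attaches to which dual is well placed and correctly resolved.
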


\begin{proof}
As in the proof of Theorem~\ref{thm:PulledBackMWids}, both $\phi$ and $\phi^*$ preserve the Hamming weight between the dual codes $\mathfrak{L}_\phi(C), \mathfrak{R}_\phi(C)$ and the annihilator $(\ah^n:C)$, so their Hamming weight enumerators are equal.  The Hamming result then follows from Remark~\ref{rem:SwapSides}.

For the complete enumerator, $\phi$ matches $\mathfrak{Z}_a$ with $\mathcal{Z}_{\phi(a)}$, while $\phi^*$ matches $\mathfrak{Z}_a$ with $\mathcal{Z}_{\phi^*(a)}$.  The substitution $Z_b \leftarrow \sum_{\pi \in \ah} \pair{\pi}{b} \mathcal{Z}_\pi$ from Remark~\ref{rem:SwapSides} then becomes $Z_b \leftarrow \sum_{a \in A} \pair{\phi(a)}{b} \mathfrak{Z}_a = \sum_{a \in A} \Phi(a,b) \mathfrak{Z}_a$ for $\phi$ (contrary to \cite[Theorem~3.3]{MR3789447}) and $Z_b \leftarrow \sum_{a \in A} \pair{\phi^*(a)}{b} \mathfrak{Z}_a = \sum_{a \in A} \Phi(b,a)  \mathfrak{Z}_a$ for $\phi^*$.
\end{proof}

\begin{rem}
Versions of the MacWilliams identities for both complete and Hamming joint enumerators appear in \cite{STD:JointEnums}.
\end{rem}

\section{A comment about finite rings}  \label{sec:FinRings}
Suppose a finite abelian group $A$ has the additional algebraic structure of being the additive group of a finite ring $R$ (with or without a multiplicative identity).  Then the character group $\hr$ has the structure of an $R$-bimodule.  For $r,s \in R$ and $\pi \in \hr$, the scalar multiplications are written in multiplicative form:
\[  \pair{{}^r \pi}{s} = \pair{\pi}{sr}, \quad \pair{\pi^r}{s} = \pair{\pi}{rs} . \]

Define a duality $\phi$ of $R$ to be \emph{linear} when $\phi: R \ra \hr$ is a homomorphism of left (or right) $R$-modules.  Linear dualities do not always exist.  If the finite ring has a $1$, then a linear duality exists if and only if $R$ is a Frobenius ring \cite[Theorem~3.10]{wood:duality}.  If $R$ is Frobenius and $\phi$ is a left linear duality of $R$, then the image $\chi = \phi(1) \in \hr$ is a \emph{generating character} in the sense that $\phi(r) = {}^r \chi$ is an isomorphism $R \ra \hr$ of left $R$-modules.  One then shows that $\phi^*(r) = \chi^r$ is an isomorphism $R \ra \hr$ of right $R$-modules.  There is a well-developed theory of linear dualities over Frobenius rings; see \cite[\S 12]{wood:turkey}.

Less is known when $R$ is a finite rng (i.e., a finite ring not necessarily having a multiplicative identity: no `i').  However, one situation is understood: when $R$ has a generating character.

\begin{thm}  \label{thm:GenCharRng}
Let $R$ be a finite ring, not necessarily having a multiplicative identity.  Suppose there exists a character $\chi \in \hr$ such that $R \ra \hr$, $r \mapsto \chi^r$, resp., $r \mapsto {}^r \chi$, is an isomorphism of right, resp., left, $R$-modules.  Then $R$ has a multiplicative identity, and $R$ is Frobenius. 
\end{thm}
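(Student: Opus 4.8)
The plan is to manufacture a two-sided multiplicative identity for $R$ directly from the two module isomorphisms, and then to quote the Frobenius characterization for rings that have an identity.

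First I would unwind the hypotheses into nondegeneracy statements. Write $\psi(r) = \chi^r$ and $\theta(r) = {}^r\chi$, so that $\pair{\psi(r)}{t} = \pair{\chi}{rt}$ and $\pair{\theta(r)}{t} = \pair{\chi}{tr}$ for all $r,t \in R$. A direct check against the displayed scalar-multiplication formulas shows that $\psi$ is a homomorphism of right $R$-modules and $\theta$ a homomorphism of left $R$-modules, with $R$ acting on itself by multiplication on the appropriate side. Since $\psi$ and $\theta$ are assumed to be isomorphisms, their injectivity yields two nondegeneracy statements: if $\pair{\chi}{rt} = 1$ for all $t \in R$ then $r = 0$, and if $\pair{\chi}{tr} = 1$ for all $t \in R$ then $r = 0$.

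Next I would produce the identity. Surjectivity of $\theta$ gives some $e \in R$ with ${}^e\chi = \chi$, i.e.\ $\pair{\chi}{se} = \pair{\chi}{s}$ for all $s \in R$. To see that $e$ is a right identity, fix $x \in R$; for every $t \in R$ the character homomorphism property gives $\pair{\chi}{t(xe - x)} = \pair{\chi}{(tx)e}\,\pair{\chi}{tx}^{-1} = 1$ (applying the previous relation with $s = tx$), so the second nondegeneracy statement forces $xe - x = 0$. Symmetrically, surjectivity of $\psi$ gives $e' \in R$ with $\chi^{e'} = \chi$, i.e.\ $\pair{\chi}{e's} = \pair{\chi}{s}$ for all $s$, and the analogous computation (with $s = xt$, using the first nondegeneracy statement) shows $e'x = x$ for all $x \in R$. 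Thus $e$ is a right identity and $e'$ a left identity, whence $e' = e'e = e$ is a two-sided identity, and $R$ is a ring with $1$.

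Finally, once $1 \in R$ is available, $\theta \colon R \to \hr$, $r \mapsto {}^r\chi$, is an isomorphism of left $R$-modules, hence a left linear duality of $R$ in the sense of this section, so $R$ is Frobenius by \cite[Theorem~3.10]{wood:duality}. The one thing to be careful about is bookkeeping: one must keep track of which of ${}^r\chi$, $\chi^r$ encodes the left and which the right module structure, and invoke the matching one of the two injectivity statements at each step; once the correct nondegeneracy lemma is isolated, everything else is a routine manipulation of $\pair{\chi}{a+b} = \pair{\chi}{a}\,\pair{\chi}{b}$.
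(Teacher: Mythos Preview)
Your argument is correct, but it uses strictly more than the paper does: you invoke \emph{both} the right-module isomorphism $\psi$ and the left-module isomorphism $\theta$, extracting a left identity from one and a right identity from the other.  The paper instead squeezes a two-sided identity out of a \emph{single} one of the two maps.  Working only with $r\mapsto\chi^r$, it first takes $e$ with $\chi^e=\chi$ and gets $er=r$ immediately from the module action $\chi^{er}=(\chi^e)^r=\chi^r$ plus injectivity.  For the other side it bootstraps: since now $es=s$ for all $s$, one has $\pair{\chi^{re}}{s}=\pair{(\chi^r)^e}{s}=\pair{\chi^r}{es}=\pair{\chi^r}{s}$, so $\chi^{re}=\chi^r$ and injectivity yields $re=r$.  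No appeal to ${}^r\chi$ is needed.

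This matters because the paper's closing sentence ``The left module version is similar'' signals that the theorem is intended as two parallel statements---either hypothesis alone forces $R$ to be Frobenius with $1$---and your proof, requiring both simultaneously, does not cover that.  The fix is short: adopt the bootstrapping trick above (or its mirror image for ${}^r\chi$) rather than reaching for the second isomorphism.
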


\begin{proof}
By surjectivity, there exists an element $e \in R$ such that $\chi^e = \chi$.  We claim that $e$ is a multiplicative identity.  For any $r \in R$, right multiply by $r$: $\chi^{er} = \chi^r$.  Injectivity implies $er=r$.  Now consider $\chi^{re} = (\chi^r)^e$.  For any $s \in R$, $\pair{\chi^{re}}{s} = \pair{(\chi^r)^e}{s} = \pair{\chi^r}{es} = \pair{\chi^r}{s}$.  Thus, $\chi^{re}=\chi^r$, and injectivity implies $re=r$.  Then $R$ is Frobenius by \cite[Theorem~3.10]{wood:duality}.  The left module version is similar.
\end{proof}

There exist linear dualities of finite rngs that are not of the form in Theorem~\ref{thm:GenCharRng}.

\begin{ex}
Consider the rng $I$ defined as the ideal $(x) \subset \F_2[x]/(x^3)$.  This is the $p=2$ version of the rng $I$ listed in \cite{MR1240670}.  Here are the addition and multiplication tables.
\[ \begin{array}{c|cccc}
+ & 0 & x & x+x^2 & x^2 \\ \hline
0 & 0 & x & x+x^2 & x^2 \\
x & x & 0 & x^2 & x+x^2 \\
x+x^2 & x+x^2 & x^2 & 0 & x \\
x^2 & x^2 & x+x^2 & x & 0
\end{array}
\]
\[ \begin{array}{c|cccc}
\times & 0 & x & x+x^2 & x^2 \\ \hline
0 & 0 & 0 & 0 & 0 \\
x & 0 & x^2 & x^2 & 0 \\
x+x^2 & 0 & x^2 & x^2 & 0 \\
x^2 & 0 & 0 & 0 & 0
\end{array}
\]
The additive group is a Klein $4$-group; multiplication is commutative.

The character module $\charac{I}$ has the following elements and scalar multiplications.
\[  \begin{array}{c|rrrr|rrrr}
s & \pi_0(s) & \pi_1(s) & \pi_2(s) & \pi_3(s) & \pi_0^s & \pi_1^s & \pi_2^s & \pi_3^s \\ \hline
0 & 1 & 1 & 1 & 1& \pi_0 & \pi_0 & \pi_0 & \pi_0 \\
x & 1 & 1 & -1 & -1 & \pi_0 & \pi_3 & \pi_3 & \pi_0 \\
x+x^2 & 1 & -1 & 1 & -1 & \pi_0 & \pi_3 & \pi_3 & \pi_0 \\
x^2 &  1 & -1 & -1 & 1 & \pi_0 & \pi_0 & \pi_0 & \pi_0 
\end{array}  \]
Then $f: I \ra \charac{I}$ defined by 
\[  \begin{array}{c|cccc}
s & 0 & x & x+x^2 & x^2 \\ \hline
f(s) & \pi_0 & \pi_1 & \pi_2 & \pi_3 
\end{array} \]
is seen to be an isomorphism of $I$-modules.  (One could also interchange the roles of $x$ and $x+x^2$ (or of $\pi_1$ and $\pi_2$).)
\end{ex}

In a future paper, I plan to discuss the MacWilliams identities for the symmetrized enumerator associated to a group action on $A$, as well as to discuss dualities in the context of module alphabets over finite rings and how dualities interact with notions of equivalence of codes.

\def\cprime{$'$} \def\cprime{$'$} \def\cprime{$'$} \def\cprime{$'$}
  \def\cprime{$'$}
\providecommand{\bysame}{\leavevmode\hbox to3em{\hrulefill}\thinspace}
\providecommand{\MR}{\relax\ifhmode\unskip\space\fi MR }
\providecommand{\MRhref}[2]{%
  \href{http://www.ams.org/mathscinet-getitem?mr=#1}{#2}
}
\providecommand{\href}[2]{#2}

\end{document}